\newcommand{\ie}{\textit{i.e.}}
\newcommand{\eg}{\textit{e.g.}}
\newcommand{\improv}{\scalebox{1.2}{$\blacktriangle$\%} }
\newtheorem{remark}{Remark}
  \providecommand\BibTeX{{%
    \normalfont B\kern-0.5em{\scshape i\kern-0.25em b}\kern-0.8em\TeX}}}
\begin{document}

\title{
    Understanding the Role of Cross-Entropy Loss in Fairly Evaluating Large Language Model-based Recommendation
}


\author{Cong Xu}
\authornote{Equal contribution}
\affiliation{%
  \institution{East China Normal University}
  \city{Shanghai}
  \country{China}
}
\email{congxueric@gmail.com}

\author{Zhangchi Zhu}
\authornotemark[1]
\affiliation{%
  \institution{East China Normal University}
  \city{Shanghai}
  \country{China}
}
\email{zczhu@stu.ecnu.edu.cn}

\author{Jun Wang}
\affiliation{%
  \institution{East China Normal University}
  \city{Shanghai}
  \country{China}
}
\email{wongjun@gmail.com}

\author{Jianyong Wang}
\affiliation{%
  \institution{Tsinghua University}
  \city{Beijing}
  \country{China}
}
\email{jianyong@tsinghua.edu.cn}

\author{Wei Zhang}
\affiliation{%
  \institution{East China Normal University}
  \city{Shanghai}
  \country{China}
}
\email{zhangwei.thu2011@gmail.com}

\begin{abstract}

Large language models (LLMs) have gained much attention in the recommendation community;
some studies have observed that LLMs, fine-tuned by the cross-entropy loss with a full softmax,
could achieve state-of-the-art performance already.
However, these claims are drawn from unobjective and unfair comparisons.
In view of the substantial quantity of items in reality,
conventional recommenders typically adopt a pointwise/pairwise loss function instead for training.
This substitute however causes severe performance degradation,
leading to under-estimation of conventional methods and over-confidence in the ranking capability of LLMs.

In this work,
we theoretically justify the superiority of cross-entropy, 
and showcase that it can be adequately replaced by some elementary approximations with certain necessary modifications.
The remarkable results across three public datasets corroborate that even in a practical sense, 
existing LLM-based methods are not as effective as claimed for next-item recommendation.
We hope that these theoretical understandings in conjunction with the empirical results
will facilitate an objective evaluation of LLM-based recommendation in the future. 
Our code is available at \url{https://github.com/MTandHJ/CE-SCE-LLMRec}.

\end{abstract}


\begin{CCSXML}
<ccs2012>
   <concept>
       <concept_id>10002951.10003317.10003347.10003350</concept_id>
       <concept_desc>Information systems~Recommender systems</concept_desc>
       <concept_significance>500</concept_significance>
       </concept>
   <concept>
       <concept_id>10010520.10010521.10010542.10010294</concept_id>
       <concept_desc>Computer systems organization~Neural networks</concept_desc>
       <concept_significance>500</concept_significance>
       </concept>
 </ccs2012>
\end{CCSXML}

\ccsdesc[500]{Information systems~Recommender systems}
\ccsdesc[500]{Computer systems organization~Neural networks}

\keywords{recommendation, large language model, cross-entropy, evaluation}



\maketitle

\section{Introduction}

\begin{figure}
	\centering
	\includegraphics[width=0.47\textwidth]{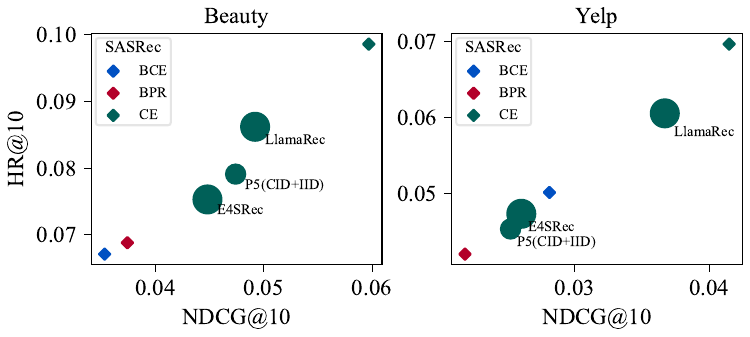}
	\caption{
        Recommendation performance comparisons.
        The marker size depicts the number of model parameters:
        60M for P5 (CID + IID) \cite{LLM4Rec:CIDIID:Hua2023}, 7B for LlamaRec \cite{LLM4Rec:LLaMARec:Yue2023} and E4SRec \cite{LLM4Rec:E4SRec:Li:2023}, 
        and merely $\le 1$M for SASRec \cite{Seq:SASRec:Kang:2018}.
	}
	\label{fig-baselines}
\end{figure}

With the growth of the Internet, 
the amount of information being generated every moment 
is far beyond human discernment.
Recommender systems are thus developed to help humans quickly and accurately ascertain the items of interest,
and have played important roles in diverse applications,
including e-commerce~\cite{zhou2018deep}, online news~\cite{gong2022positive}, and education~\cite{WuWZ24}.
Due to inherent differences in data types and recommendation goals, 
different tasks are typically handled using various techniques and separate models.
For example, graph neural networks \cite{GNN:GCN:Kipf:2016} have dominated collaborative filtering \cite{CF:LightGCN:He:2020,CF:UltraGCN:Mao:2021},
while Transformer \cite{Transformer:Attention:Vaswani:2017} becomes increasingly popular in sequential recommendation \cite{Seq:SASRec:Kang:2018,Seq:Bert4Rec:Sun:2019}.

Recently, the prosperity of Large Language Models (LLMs) \cite{LLM:T5:Raffel:2020,LLM:GPT:Openai:2023,LLM:LLama:Touvron:2023,LLM:LLaMA2:Touvron:2023} 
suggests a promising direction towards universal recommenders~\cite{LLM4Rec:P5:Geng:2022,LLM4Rec:POD:Li2023}.
Equipped with carefully designed prompts, 
they show great potential in explainable and cross-domain recommendations \cite{LLM4Rec:CRs:Feng:2023,LLM4Rec:Chat-Rec:Gao:2023}.
Nevertheless, 
there still exist non-negligible gaps \cite{LLM4Rec:Prefrence:Kang:2023,LLM4Rec:TallRec:Bao:2023} between LLMs and conventional methods
unless domain-specific knowledge is injected.
Some researches have observed `compelling' results after fine-tuning~\cite{LLM4Rec:E4SRec:Li:2023,LLM4Rec:LLaMARec:Yue2023},
and hastily affirmed LLM-based recommenders' ranking capability.

However, the comparisons therein are not objective and fair enough, 
leading to \textit{under-estimation} of conventional recommenders and \textit{over-confidence} in LLMs.
Recall that the next-token prediction objective used for LLM pre-training (and fine-tuning), by its nature, 
is a cross-entropy loss that needs a full softmax over the entire corpus.
In view of the substantial quantity of items in reality,
conventional methods typically adopt a pointwise/pairwise loss function (\eg, BCE and BPR).
This compromise however causes significant performance degradation.
As shown in Figure \ref{fig-baselines}, SASRec trained with cross-entropy outperforms LLMs by a large margin,
while falling behind with BCE or BPR.
Such superior results relying on cross-entropy however cannot serve as direct evidence to challenge the ranking capability of existing LLM-based recommenders, 
since the full softmax is intractable to calculate in practice.

In this work, 
we re-emphasize the ability to optimize ranking metrics for a desired recommendation loss,
and then unveil the corresponding limitations of some approximations to cross-entropy.
To achieve effective and practical approximations, we introduce some novel alternatives with theoretical analysis.
In summary, the innovative insights and technical contributions are as follows:
\begin{itemize}[leftmargin=*]
\item
\textbf{
    Minimizing cross-entropy is equivalent to maximizing a lower bound of 
    Normalized Discounted Cumulative Gain (NDCG) and Reciprocal Rank (RR).
}
One can thus expect that the ranking capability would be gradually enhanced as cross-entropy is optimized during training.
We further show that dynamic truncation on the normalizing term yields a tighter bound and potentially better performance.
This fact highlights the importance of optimizing these ranking metrics,
and the cross-entropy loss is arguably adequate for this purpose.
The challenge that remains unsolved is how to realize the approximation in an effective and practical manner,
so the comparison with LLM-based recommenders is meaningful in reality.
After revisiting the limitations of some well-known approximations,
a rather simple solution will be presented.

\item
\textbf{
    Noise contrastive estimation (NCE) \cite{loss:NCE:Gutmann:2010} with the default setting
    fails to optimize a meaningful bound in the early stages of training.
}
Before the advent of subword segmentation algorithms \cite{Subword:Unigram:Kudo:2018}, 
the training of neural language models also struggles to circumvent an explicit normalizing over the entire vocabulary.
Mnih et al. \cite{loss:NCE4LM:Mnih:2012} thus resorted to a simplified NCE
that fixes the normalizing term estimate as a constant value of~1.
This suggestion however introduces training difficulties in recommendation:
sampling more negative samples should accelerate the training yet the opposite occurs.
This intriguing phenomenon is attributed to the weak connection between NCE and NDCG (RR).
Because NCE grows exponentially fast w.r.t. the number of positively scored items,
a meaningless bound is encountered in the early training stages.
This conclusion suggests adjusting the estimate of the normalizing term to a slightly larger value, 
which shows promising empirical performance 
but lacks consistent applicability.
Next, we introduce a more reliable loss.

\item
\textbf{
    Scaling up the sampled normalizing term provides an effective and practical approximation to cross-entropy.
}
Since the normalizing term of cross-entropy is intractable in reality,
a direct way is to approximate it by (uniformly) sampling part of items (a.k.a. sampled softmax loss \cite{loss:SSM:Wu:2023}).
To further mitigate the magnitude loss caused by sampling, 
we multiply it by an additional weight so the sampled term is scaled up.
Indeed, this modification can also be understood as a special case of importance sampling \cite{loss:IS:Bengio:2008},
in which the proposal distribution assigns a higher probability mass to the target item.
Unlike NCE, this Scaled Cross-Entropy (dubbed SCE) yields a bound mainly determined by the current rank of the target item,
making it meaningful even in the early training stages.
Empirically, sampling a very few negative samples per iteration is sufficient to achieve comparable results to using cross-entropy with a full softmax.
\end{itemize}

Based on these approximations for cross-entropy, 
we conduct a comprehensive investigation to assess the true ranking capability of both conventional and LLM-based recommenders.
The experimental results presented in Section 5 suggest the over-confidence in existing LLM-based methods.
Even without considering the model sizes, 
they are still far inferior to conventional methods in next-item recommendation.
Apart from the potential of explainability and cross-domain transferability,
further investigation and exploration are necessary 
to assess the true ranking capability of LLM-based recommenders.

\section{Related Work}

\textbf{Recommender systems} 
are developed to enable users to quickly and accurately ascertain relevant items.
The primary principle is to learn underlying interests from user information, especially historical interactions.
Collaborative filtering \cite{loss:BPR:Rendle:2012,CF:NCF:He:2017} performs personalized recommendation 
by mapping users and items into the same latent space in which interacted pairs are close.
Beyond static user representations,
sequential recommendation \cite{shani2005mdp,li2020time} focuses on capturing dynamic interests from item sequences.
Early efforts such as GRU4Rec \cite{Seq:GRU4Rec:Hidasi:2015} and Caser~\cite{Seq:Caser:Tang:2018} 
respectively apply recurrent neural networks (RNNs) and convolutional neural networks (CNNs) to sequence modeling.
Recently, Transformer \cite{Transformer:Attention:Vaswani:2017,devlin2018bert} becomes increasingly popular in recommendation 
due to its parallel efficiency and superior performance.
For example, SASRec~\cite{Seq:SASRec:Kang:2018} and BERT4Rec \cite{Seq:Bert4Rec:Sun:2019} 
respectively employ unidirectional and bidirectional self-attention.
Differently,
Zhou et al. \cite{Seq:FMLPRec:Zhou:2022} present FMLP-Rec to denoise the item sequences through learnable filters 
so that state-of-the-art performance can be obtained by mere MLP modules.

\textbf{LLM for recommendation} 
has gained a lot of attention recently because:
1) The next-token generation feature is technically easy to extend to the next-item recommendation (\ie, sequential recommendation);
2) The immense success of LLM in natural language processing promises the development of universal recommenders.
Some studies \cite{LLM4Rec:UncoverChatGPT:Dai:2023,LLM4Rec:Chat-Rec:Gao:2023}
have demonstrated the powerful zero/few-shot ability of LLMs (\eg, GPT \cite{LLM:GPT:Openai:2023}),
especially their potential in explainable and cross-domain recommendations \cite{LLM4Rec:CRs:Feng:2023,LLM4Rec:Chat-Rec:Gao:2023}.
Nevertheless, there is a consensus \cite{LLM4Rec:Prefrence:Kang:2023,LLM4Rec:InstructRec:Zhang:2023,LLM4Rec:TallRec:Bao:2023} 
that without domain-specific knowledge learned by fine-tuning, 
LLM-based recommenders still stay far behind conventional models.

As an early effort, 
P5~\cite{LLM4Rec:P5:Geng:2022} unifies multiple recommendation tasks into a sequence-to-sequence paradigm.
Based on the foundation model of T5~\cite{LLM:T5:Raffel:2020}, 
each task can be activated through some specific prompts.
Hua et al.~\cite{LLM4Rec:CIDIID:Hua2023} takes a further step beyond P5 
by examining the impact of various ID indexing methods,
and a combination of collaborative and independent indexing stands out.
Recently, more LLM recommenders \cite{LLM4Rec:ControlRec:Qiu:2023,LLM4Rec:LLaRA:Liao:2023,LLM4Rec:E4SRec:Li:2023,LLM4Rec:LLaMARec:Yue2023} 
based on Llama~\cite{LLM:LLama:Touvron:2023} or Llama2~\cite{LLM:LLaMA2:Touvron:2023} are developed.
For example, 
LlamaRec~\cite{LLM4Rec:LLaMARec:Yue2023} proposes a two-stage framework based on Llama2 to rerank the candidates retrieved by conventional models.
To enable LLM to correctly identify items,
E4SRec~\cite{LLM4Rec:E4SRec:Li:2023} incorporates ID embeddings trained by conventional sequential models through a linear adaptor,
and applies LORA~\cite{LLM:LORA:Hu:2021} for parameter-efficient fine-tuning.

\textbf{Cross-entropy and its approximations} \cite{loss:BPR:Rendle:2012,loss:NCE:Gutmann:2010,loss:IS:Bengio:2008,loss:SSM:Blanc:2018,loss:Consistency:Ma2018}
have been extensively studied.
The most related works are:
1) Bruch et al.~\cite{loss:Bound:Bruch:2019} theoretically connected cross-entropy to some ranking metrics;
2) Wu et al.~\cite{loss:SSM:Wu:2023} further found its desirable property in alleviating popularity bias;
and recently, 
3) Klenitskiy et al.~\cite{loss:NEG4Rec:Klenitskiy:2023} and Petrov et al.~\cite{loss:gBCE:Petrov:2023} 
respectively
applied cross-entropy and a generalized BCE loss to eliminate the performance gap between SASRec and BERT4Rec.
Differently, we are to
1) understand the superiority of cross-entropy as well as the limitations of its approximations;
2) identify a viable approximation according to these findings;
and 3) facilitate an objective evaluation of LLM-based recommendation by acknowledging the true capability of conventional models.

\section{Preliminaries}

Given a query $q$ that encompasses some user information,
a recommender system aims to retrieve some items $v \in \mathcal{I}$ that would be of interest to the user.
In sequential recommendation, 
the recommender predicts the next item $v_{t+1}$ based on historical interactions $q = [v_1, v_2, \ldots, v_t]$.
The crucial component is to develop a scoring function $s_{qv} := s_{\theta}(q, v)$ to accurately model the relevance of a query $q$ to a candidate item $v$.
A common paradigm is to map them into the same latent space through some models parameterized via $\theta$, 
followed by an inner product operation for similarity calculation.
Then, top-ranked items based on these scores will be prioritized for recommendation.
Typically the desired recommender is trained to minimize an objective function over all observed interactions $\mathcal{D}$: 
\begin{equation*}
    \min_{\theta} \quad \mathbb{E}_{(q, v_+) \sim \mathcal{D}} [\ell(q, v_+; \theta)],
\end{equation*}
where $v_+$ indicates the target item for the query $q$, 
and the loss function $\ell$ considered in this paper is in the form of
\begin{equation}
    \label{eq-loss-unified}
    \ell(q, v_+; \theta) := -\log \frac{
        \exp(s_{\theta}(q, v_+))
    }{
        Z_{\theta}(q)
    }
    = -s_{\theta}(q, v_+) + \log Z_{\theta}(q).
\end{equation}
Here $Z_{\theta}(q)$ depicts the `normalizing' term specified to the query $q$.
For clarity, we will omit $q$ and $\theta$ hereafter if no ambiguity is raised.

It is worth noting that
the reformulation of Eq. \eqref{eq-loss-unified} makes it easy to understand the subtle differences between a wide range of loss functions.
Table~\ref{table-summary-loss} covers a selection of approximations:
Binary Cross-Entropy (BCE) and Bayesian Personalized Ranking (BPR)~\cite{loss:BPR:Rendle:2012}
are widely used in recommendation for their low costs;
Importance Sampling (IS) \cite{loss:IS:Bengio:2008,loss:IS4LM:Jean:2014},
Noise Contrastive Estimation (NCE) \cite{loss:NCE:Gutmann:2010,loss:NCE4LM:Mnih:2012}, 
and NEGative sampling (NEG)~\cite{loss:Word2Vec:Mikolov:2013} 
are the cornerstones of the subsequent methods \cite{loss:IS4Rec:Lian:2020,loss:IS4Rec:Chen:2022,loss:Sampling:Gao:2021,loss:Sampling4Rec:Shi:2023,loss:SSM:Yi:2019,loss:SSM:Blanc:2018,loss:SSM:Yang:2020}.
Additional details regarding their mechanisms are provided in Appendix \ref{section-loss-function}.
Note that in this work we only involve some elementary approximations, as the primary purpose is not to develop a complex loss.
For more advanced loss functions, please refer to \cite{loss:NEG4Rec:Chen:2023}.

\begin{table*}[]
    \centering
    \caption{
    Cross-entropy loss and its approximations.
    The bounding probabilities are obtained in the case of uniform sampling.
    More conclusions in terms of Reciprocal Rank (RR) can be found in Appendix \ref{section-proofs}.
    }
    \label{table-summary-loss}
\scalebox{0.98}{
\begin{tabular}{l||c|c|c|c}
    \toprule
    Loss  
    & Formulation 
    & `Normalizing' term $Z$ 
    & Complexity 
    & $\mathbb{P}\bigl( -\log \text{NDCG}(r_+) \le \ell_* \bigr) \ge$ 
    \\
    \midrule
    $\ell_{\text{CE}}$
    & $ -\log \frac{\exp(s_{v_+})}{ \sum_{v \in \mathcal{I}} \exp(s_v)}$ 
    & $ \sum_{v \in \mathcal{I}} \exp(s_v)$
    & $\mathcal{O}(|\mathcal{I}|d)$
    & $1$
    \\
    \midrule
    $\ell_{\text{BCE}}$
    & $ -\log \sigma(s_{v_+}) - \log (1 - \sigma(s_{v_-}))$ 
    & $ (1 + \exp(s_{v_+})) (1 + \exp(s_{v_-}))$
    & $\mathcal{O}(d)$
    & -
    \\
    $\ell_{\text{BPR}}$
    & $ -\log \sigma(s_{v_+} - s_{v_-})$ 
    & $ \exp(s_{v_+}) + \exp(s_{v_-})$
    & $\mathcal{O}(d)$
    & -
    \\
    \midrule
    $\ell_{\text{NCE}}$
    & $ -\log \sigma(s_{v_+}') - \sum_{i=1}^K \log (1 - \sigma(s_{v_i}')) $ 
    & $ (1 + \exp(s_{v_+}')) \prod_{i=1}^K (1 + \exp(s_{v_i}'))$
    & $\mathcal{O}(Kd)$
    & $1 - m\big(1 - |\mathcal{S}_+'| / |\mathcal{I}| \big)^{\lfloor K/m \rfloor}$
    \\
    $\ell_{\text{NEG}}$
    & $ -\log \sigma(s_{v_+}) - \sum_{i=1}^K \log (1 - \sigma(s_{v_i})) $ 
    & $ (1 + \exp(s_{v_+})) \prod_{i=1}^K (1 + \exp(s_{v_i}))$
    & $\mathcal{O}(Kd)$
    & $1 - m\big(1 - |\mathcal{S}_+| / |\mathcal{I}| \big)^{\lfloor K/m \rfloor}$
    \\
    \midrule
    $\ell_{\text{IS}}$
    & $ -\log \frac{\exp(s_{v_+} - \log Q(v_+))}{\sum_{i=1}^K \exp(s_{v_i} - \log Q(v_i))} $ 
    & $ \sum_{i=1}^K \exp(s_{v_i} - \log Q(v_i) + \log Q(v_+))$
    & $\mathcal{O}(Kd)$
    & $1 - 2^m\big(1 - r_+ / |\mathcal{I}| \big)^{\lfloor K / 2^m \rfloor}$
    \\
    $\ell_{\text{SCE}}$
    & $ -\log \frac{\exp(s_{v_+})}{\exp(s_{v_+}) + \alpha \sum_{i=1}^K \exp(s_{v_i})} $ 
    & $ \exp(s_{v_+}) + \alpha \sum_{i=1}^K \exp(s_{v_i}) $
    & $\mathcal{O}(Kd)$
    & $1 -\frac{1}{\alpha}2^m\big(1 - r_+ / |\mathcal{I}| \big)^{\lfloor \alpha K / 2^m \rfloor}$
    \\
    \bottomrule
\end{tabular}
}
\end{table*}

\section{The Role of Cross-Entropy Loss in Optimizing Ranking Capability}
\label{section-methodology}

BCE and BPR are commonly employed in training recommender systems due to their high efficiency.
However, the substantial performance gaps shown in Figure~\ref{fig-baselines} suggest their poor alignment with cross-entropy.
Needless to say, claims based on the comparison with these inferior alternatives are unconvincing.
In this section, we are to showcase the substitutability of cross-entropy by
1) highlighting the importance of implicitly optimizing the ranking metrics for a recommendation loss;
2) introducing some practical modifications to boost the effectiveness of some elementary approximations.
Due to space constraints, the corresponding proofs are deferred to Appendix \ref{section-proofs}.

SASRec \cite{Seq:SASRec:Kang:2018}, one of the most prominent sequential models, 
will serve as the baseline to empirically elucidate the conclusions in this part.
All results are summarized based on 5 independent runs.

\subsection{Cross-Entropy for Some Ranking Metrics}

\begin{figure}
	\centering
	\includegraphics[width=0.47\textwidth]{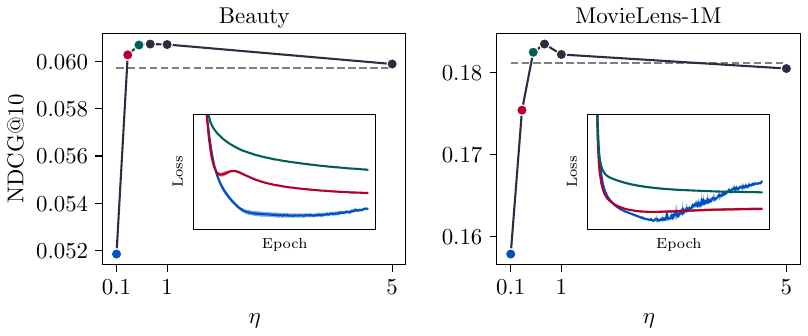}
	\caption{
        Performance comparison based on tighter bounds for NDCG.
        The dashed line represents the results trained by CE (namely the case of $\eta \rightarrow +\infty$).
	}
	\label{fig-tighter-bounds}
\end{figure}

The capability to prioritize items aligning with the user's interests is essential for recommender systems.
Denoted by $r_+ := r(v_+) = |\{v \in \mathcal{I}: s_v \ge s_{v_+}\}|$ the predicted rank of the target item $v_+$,
the metric of Normalized Discounted Cumulative Gain (NDCG)\footnote{
In practice, it is deemed meaningless when $r_+$ exceeds a pre-specified threshold $k$ (\eg, $k=1,5,10$).
Hence, the widely adopted NDCG@$k$ metric is modified to assign zero reward to these poor ranking results.
}
is often employed to assess the sorting quality.
For next-item recommendation considered in this paper, NDCG is simplified to
\begin{equation*}
    \text{NDCG}(r_+) = \frac{1}{\log_2 (1 + r_+)}.
\end{equation*}
It increases as the target item $v_+$ is ranked higher, and reaches the maximum when $v_+$ is ranked first (\ie, $r_+ = 1$).
Consequently, 
the average quality computed over the entire test set serves as an indicator of the ranking capability.
Notably, Reciprocal Rank (RR) is another popular ranking metric,
and we leave the definition and results in Appendix 
since the corresponding findings are very similar to those of NDCG.
The following proposition suggests that cross-entropy is a soft proxy to these ranking metrics.
\begin{proposition}
    \label{proposition-bounding}
    For a target item $v_+$ which is ranked as $r_+$, the following inequality holds true for any $n \ge r_+$
    \begin{equation}
        -\log \mathrm{NDCG}(r_+) \le \ell_{\mathrm{CE}\text{-}n},
    \end{equation}
    where
    \begin{equation*}
        \ell_{\mathrm{CE}\text{-}n} := s_{v_+}  + \log \sum_{r(v) \le n} \exp(s_{v}).
    \end{equation*}
\end{proposition}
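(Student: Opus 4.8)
The plan is to exponentiate both sides and reduce everything to one clean inequality about the truncated normalizer. Reading $\ell_{\mathrm{CE}\text{-}n}$ as the top-$n$ truncation of $\ell_{\mathrm{CE}}$, i.e. $\ell_{\mathrm{CE}\text{-}n} = -s_{v_+} + \log\sum_{r(v)\le n}\exp(s_v)$, and using $-\log\mathrm{NDCG}(r_+) = \log\log_2(1+r_+)$, the claimed bound is equivalent to
\[
    \log_2(1+r_+) \;\le\; \sum_{r(v)\le n}\exp(s_v - s_{v_+}).
\]
So the whole argument amounts to lower-bounding the (truncated) ratio $\sum_{r(v)\le n}\exp(s_v - s_{v_+})$ by $\log_2(1+r_+)$, for any $n \ge r_+$.

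First I would isolate the contribution of the items that tie or outrank the target. By definition $r_+ = |\{v\in\mathcal{I}: s_v \ge s_{v_+}\}|$, so there are exactly $r_+$ such items; moreover every one of them satisfies $r(v) = |\{u: s_u\ge s_v\}| \le |\{u: s_u\ge s_{v_+}\}| = r_+ \le n$, hence each appears in the truncated sum. Each contributes $\exp(s_v - s_{v_+}) \ge 1$ because $s_v \ge s_{v_+}$, and all remaining terms of the sum are nonnegative, so
\[
    \sum_{r(v)\le n}\exp(s_v - s_{v_+}) \;\ge\; \sum_{v:\, s_v \ge s_{v_+}} \exp(s_v - s_{v_+}) \;\ge\; r_+ .
\]

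It then remains only to check the elementary fact $\log_2(1+r_+) \le r_+$ for $r_+ \ge 1$, which follows from $1 + r_+ \le 2^{r_+}$ (a one-line induction on $r_+\ge 1$, using $1 \le 2^{r_+}$ in the inductive step). Chaining the three displays gives $\log_2(1+r_+) \le r_+ \le \sum_{r(v)\le n}\exp(s_v - s_{v_+})$, which is exactly the assertion. As a by-product, shrinking $n$ only removes nonnegative terms, so $\ell_{\mathrm{CE}\text{-}n}$ is nondecreasing in $n$ while still dominating $-\log\mathrm{NDCG}(r_+)$ for every $n\ge r_+$; this is the tighter-bound remark, and $n=|\mathcal{I}|$ recovers the unconditional entry for $\ell_{\mathrm{CE}}$ in Table~\ref{table-summary-loss}. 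I do not expect a real obstacle here: the only point requiring care is the handling of ties in the rank, so that ``there are exactly $r_+$ items with $s_v\ge s_{v_+}$, all of rank $\le n$'' is airtight; the rest is the convexity/induction bound $1+r_+\le 2^{r_+}$.
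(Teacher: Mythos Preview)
Your argument is correct and essentially identical to the paper's proof: both hinge on the inequality $\log_2(1+r_+)\le r_+$ together with the observation that the truncated sum $\sum_{r(v)\le n}\exp(s_v-s_{v_+})$ is at least $r_+$ because it contains $r_+$ terms that are each $\ge 1$. Your treatment of ties (arguing directly from $r_+=|\{v:s_v\ge s_{v_+}\}|$ and showing each such $v$ has $r(v)\le r_+\le n$) is in fact slightly cleaner than the paper's, which writes $r_+=1+\sum_{v\ne v_+}\delta(s_v>s_{v_+})$ with a strict inequality and is therefore a touch sloppy when scores tie.
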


We can draw from Proposition \ref{proposition-bounding} that $-\log \mathrm{NDCG}(r_+)$ would be strictly bounded by CE-like losses, 
as long as all items ranked before $v_+$ are retained in the normalizing term.
In other words, minimizing these CE-like losses is equivalent to maximizing a lower bound of NDCG.
Because cross-entropy is a special case that retains all items (\ie, $n=|\mathcal{I}|$), 
we readily have the following corollary:

\begin{corollary}[\cite{loss:Bound:Bruch:2019}]
    Minimizing the cross-entropy loss $\ell_{\mathrm{CE}}$ is equivalent to maximizing a lower bound of normalized discounted cumulative gain.
\end{corollary}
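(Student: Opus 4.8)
The plan is to obtain the corollary as the limiting/degenerate case $n = |\mathcal{I}|$ of Proposition~\ref{proposition-bounding}, so that essentially no new argument is needed beyond a bookkeeping check. First I would observe that for any query $q$ the predicted rank satisfies $r_+ = |\{v \in \mathcal{I} : s_v \ge s_{v_+}\}| \le |\mathcal{I}|$, so the hypothesis $n \ge r_+$ of Proposition~\ref{proposition-bounding} is met automatically when $n = |\mathcal{I}|$. Next I would check that $\ell_{\mathrm{CE}\text{-}|\mathcal{I}|}$ coincides with $\ell_{\mathrm{CE}}$: since every item $v \in \mathcal{I}$ has $r(v) \le |\mathcal{I}|$ (even in the presence of score ties), the truncated sum $\sum_{r(v) \le |\mathcal{I}|} \exp(s_v)$ equals the full normalizing term $\sum_{v \in \mathcal{I}} \exp(s_v) = Z_\theta(q)$, whence $\ell_{\mathrm{CE}\text{-}|\mathcal{I}|} = -s_{v_+} + \log Z_\theta(q) = \ell_{\mathrm{CE}}$ by Eq.~\eqref{eq-loss-unified}.

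With these identifications, Proposition~\ref{proposition-bounding} gives the pointwise inequality $-\log \mathrm{NDCG}(r_+) \le \ell_{\mathrm{CE}}$ for every $(q, v_+)$. Exponentiating (the outer $\log$ being the natural logarithm, consistent with the $\exp$ in Eq.~\eqref{eq-loss-unified}) yields $\mathrm{NDCG}(r_+) \ge \exp(-\ell_{\mathrm{CE}})$, i.e.\ the quantity $\exp(-\ell_{\mathrm{CE}})$ is a genuine lower bound for the per-sample NDCG. Since $t \mapsto \exp(-t)$ is strictly decreasing, driving $\ell_{\mathrm{CE}}$ down raises this bound; this is the precise sense in which minimizing cross-entropy is ``equivalent to'' maximizing a lower bound of NDCG.

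Finally, to phrase the statement at the dataset level I would take expectations over $(q, v_+) \sim \mathcal{D}$ and apply Jensen's inequality to the convex map $t \mapsto \exp(-t)$, obtaining
\begin{equation*}
    \mathbb{E}_{(q,v_+)\sim\mathcal{D}}\bigl[\mathrm{NDCG}(r_+)\bigr]
    \;\ge\; \mathbb{E}_{(q,v_+)\sim\mathcal{D}}\bigl[\exp(-\ell_{\mathrm{CE}})\bigr]
    \;\ge\; \exp\!\bigl(-\mathbb{E}_{(q,v_+)\sim\mathcal{D}}[\ell_{\mathrm{CE}}]\bigr),
\end{equation*}
so that the average ranking quality is bounded below by a monotone decreasing function of the training objective. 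There is no real obstacle here: the only points demanding a moment's care are the trivial verification that the truncated normalizing term recovers the full softmax denominator (including the tie-handling remark), the consistency of the logarithm base between the NDCG definition and Eq.~\eqref{eq-loss-unified}, and the passage from the per-sample bound to the expected bound via Jensen. Everything else is an immediate instantiation of the already-proved Proposition~\ref{proposition-bounding}.
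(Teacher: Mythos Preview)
Your proposal is correct and follows exactly the route the paper indicates: the corollary is stated without a separate proof, the text simply noting that cross-entropy is the special case $n = |\mathcal{I}|$ of Proposition~\ref{proposition-bounding}. Your additional Jensen-inequality step lifting the per-sample bound to the expected objective is a nice bit of extra rigor that the paper itself omits.
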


Therefore, satisfactory ranking capability can be expected if $\ell_{\text{CE}}$ for all queries are minimized.
Since the superiority of cross-entropy possibly stems from its connection to some ranking metrics, 
one may hypothesize that optimizing a tighter bound with a smaller value of $n \ll |\mathcal{I}|$ allows greater performance gains.
However, the condition $n \ge r_+$ cannot be consistently satisfied via a constant value of $n$ since $r_+$ dynamically changes during training.
Alternatively, an adaptive truncation can be employed for this purpose:
\begin{equation}
    \label{eq-truncation}
    \ell_{\text{CE-}\eta} := s_{v_+} + \log \sum_{s_v - s_+ \ge -\eta |s_+|} \exp(s_{v}), \quad \eta \ge 0.
\end{equation}
Note that this \textit{$\eta$-truncated} loss retains only items whose scores are not lower than $s_+ - \eta |s_+|$,
so a tighter bound will be obtained as $\eta$ drops to 0.
Specifically, this $\eta$-truncated loss becomes $\ell_{\text{CE-}r_+}$ (\ie, the tightest case) when $\eta = 0$, and approaches $\ell_{\text{CE}}$ when $\eta \rightarrow +\infty$.
Figure \ref{fig-tighter-bounds} illustrates how NDCG@10 varies as $\eta$ gradually increases from 0.1 to 5.
There are two key observations:
\begin{itemize}[leftmargin=*]
    \item[1.]
    SASRec performs worst in the tightest case of $\eta \approx 0$.
    This can be attributed to the instability of the $\eta$-truncated loss.
    On the one hand,
    $\ell_{\text{CE-}\eta}$ will rapidly collapse to 0 for those easily recognized targets, 
    in which case all other items are excluded in the normalizing term except the target itself.
    On the other hand,
    due to this strict truncation, 
    only a few negative items are encountered during training,
    and thus over-fitting is more likely to occur.

    \item[2.]
    Once $\eta$ is large enough to overcome the training instability, 
    SASRec begins to enjoy the benefits from tightness and achieves its best performance around $\eta \approx 0.7$.
    Further increasing $\eta$ however leads to a similar effect as cross-entropy, thereby along with a slightly degenerate performance due to the suboptimal tightness.
\end{itemize}

In spite of the minor performance gains,
the complexity of these tighter bounds is still equal to or even higher than that of cross entropy.
The challenge that remains unsolved is how to realize the approximation in an effective and practical manner.
To this end, we will introduce two practical alternatives to cross-entropy,
one based on noise contrastive estimation \cite{loss:NCE:Gutmann:2010,loss:NCE4LM:Mnih:2012}, 
and the other based on sampled softmax loss \cite{loss:SSM:Wu:2023}.
Their different ways of approximating the cross-entropy loss lead to distinct properties during training.
Some specific modifications focusing on the normalizing term estimates are then developed to enhance their ability to optimize NDCG and RR.

\subsection{Revisiting Noise Contrastive Estimation}
\label{section-nce}

\begin{figure*}
	\centering
    \subfloat[Beauty]{
        \includegraphics[width=0.47\textwidth]{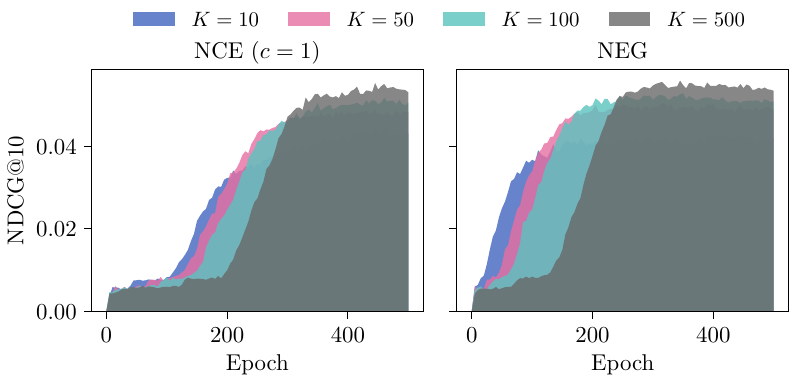}
    }
    \subfloat[MovieLens-1M]{
        \includegraphics[width=0.47\textwidth]{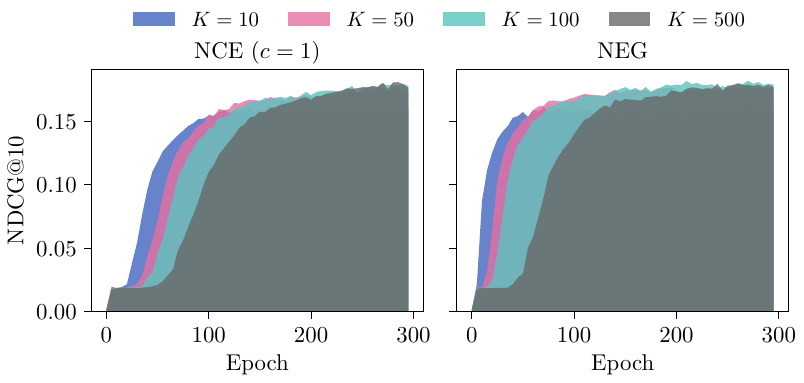}
    }
	\caption{
        NDCG@10 performance of NCE and NEG across different number of negative samples.
	}
	\label{fig-nce-neg-negs-ndcg}
\end{figure*}

Noise Contrastive Estimation (NCE) is widely used in training neural language models 
for bypassing an explicit normalizing over the entire vocabulary.
It requires the model to discriminate the target from an easy-to-sample noise distribution. 
In the case of the uniform sampling, it can be formulated as follows
\begin{equation*}
    \ell_{\text{NCE}} := 
    s_{v_+} + \log \bigg\{
        \big(1 + \exp(s_{v_+}') \big) 
        \prod_{i=1}^K \big(1 + \exp(s_{v_i}')\big)
    \bigg\},
\end{equation*}
where $s_{v}' = s_{v} - c - \log \frac{K}{|\mathcal{I}|}$.
In the original implementation of NCE~\cite{loss:NCE:Gutmann:2010}, $c$ is a trainable parameter as an estimate of $\log Z_{\text{CE}}$.
However, 
this strategy is infeasible for conditional probability models like language models and sequential recommendation, 
where they need to determine one $c_q$ for each text (query).
Mnih et al. \cite{loss:NCE4LM:Mnih:2012} therefore fixed $c \equiv 1$ for all texts during training,
and NEGative sampling (NEG) used in Word2Vec~\cite{loss:Word2Vec:Mikolov:2013} further simplifies it by replacing $s'$ with $s$ directly; that is,
\begin{equation*}
    \ell_{\text{NEG}} := 
    s_{v_+} + \log \bigg\{
        \big(1 + \exp(s_{v_+}) \big) 
        \prod_{i=1}^K \big(1 + \exp(s_{v_i})\big)
    \bigg\}.
\end{equation*}

However, we observe that both NCE ($c=1$) and NEG introduce training difficulties as the number of negative samples increases.
The NDCG@10 metric shown in Figure \ref{fig-nce-neg-negs-ndcg} remains almost constant at the beginning of training,
and consumes more iterations to converge as $K$ increases.
This contradicts the usual understanding that
sampling more negative samples would accelerate the convergence of training.
We believe that this intriguing phenomenon stems mainly from the exponential growth of the normalizing term,
which consequently yields a rather weak bound:

\begin{theorem}
    \label{theorem-nce-neg-bounds}
    Let $v_+$ be a target item which is ranked as $r_+ \le 2^{2^m} - 1$ for some $m \in \mathbb{N}$,
    and
    \begin{align*}
        \mathcal{S}_+ := \{v \in \mathcal{I}: s_{v} \ge 0\},
        \quad 
        \mathcal{S}_+' := \{v \in \mathcal{I}: s_{v}' \ge 0\}.
    \end{align*}
    If we uniformly sampling $K$ items for training, then with probability at least
    \begin{equation}
        \label{eq-nce-neg-probs}
        \left \{
        \begin{array}{ll}
            1 - m\big(1 - |\mathcal{S}_+'| / |\mathcal{I}| \big)^{\lfloor K/m \rfloor}, & \text{ if } \ell_* = \ell_{\mathrm{NCE}} \\
            1 - m\big(1 - |\mathcal{S}_+| / |\mathcal{I}| \big)^{\lfloor K/m \rfloor}, & \text{ if } \ell_* = \ell_{\mathrm{NEG}} \\
        \end{array}
        \right .,
    \end{equation}
    we have
    \begin{equation}
        \label{eq-nce-neg-bounds}
        -\log \mathrm{NDCG}(r_+) \le \ell_*.
    \end{equation}
\end{theorem}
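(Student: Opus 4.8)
The plan is to establish the inequality \eqref{eq-nce-neg-bounds} deterministically on an event that depends only on \emph{how many} of the $K$ sampled items are positively scored, and then to lower-bound the probability of that event by a partition-plus-union-bound argument. Throughout I write $\log$ for the natural logarithm, so that $-\log\mathrm{NDCG}(r_+)=\log\log_2(1+r_+)$. The first step is a purely deterministic lower bound on the loss. Starting from the sigmoid form of $\ell_{\mathrm{NEG}}$ in Table~\ref{table-summary-loss}, and using $-\log\sigma(s_{v_+})\ge 0$ together with $-\log\bigl(1-\sigma(x)\bigr)=\log\bigl(1+\exp(x)\bigr)$ --- which is nonnegative for all $x$ and at least $\log 2$ whenever $x\ge 0$ --- one obtains
\begin{equation*}
    \ell_{\mathrm{NEG}}\ \ge\ \sum_{i=1}^{K}\log\bigl(1+\exp(s_{v_i})\bigr)\ \ge\ \bigl|\{\, i\in[K]:v_i\in\mathcal{S}_+ \,\}\bigr|\cdot\log 2 .
\end{equation*}
The identical computation with the shifted scores $s'$ gives $\ell_{\mathrm{NCE}}\ge\bigl|\{\, i\in[K]:v_i\in\mathcal{S}_+' \,\}\bigr|\cdot\log 2$. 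Write $N$ for the relevant count ($\mathcal{S}_+$ for NEG, $\mathcal{S}_+'$ for NCE).

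Next I would invoke the rank hypothesis. Since $r_+\le 2^{2^m}-1$ forces $\log_2(1+r_+)\le 2^m$, we have $-\log\mathrm{NDCG}(r_+)=\log\log_2(1+r_+)\le m\log 2$. Combining this with the bound above, on the event $\{N\ge m\}$ we get $\ell_*\ge N\log 2\ge m\log 2\ge -\log\mathrm{NDCG}(r_+)$, which is exactly \eqref{eq-nce-neg-bounds}. (One could alternatively route this through Proposition~\ref{proposition-bounding} by exhibiting some $n\ge r_+$ with $\ell_{\mathrm{CE}\text{-}n}\le\ell_*$, but the direct estimate is shorter.) This is the point at which the phenomenon noted in the text becomes visible: because the normalizing term is a \emph{product}, the loss already grows like $N\log 2$, so a target whose rank is only doubly exponential in $m$ is covered as soon as $m$ sampled items are positively scored --- and not before.

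It remains to show $\mathbb{P}(N\ge m)\ge 1-m\bigl(1-|\mathcal{S}_+|/|\mathcal{I}|\bigr)^{\lfloor K/m\rfloor}$, together with its $\mathcal{S}_+'$ analogue for NCE. I would split the i.i.d.\ uniform draws into $m$ disjoint blocks of size $\lfloor K/m\rfloor$, discarding any leftover draws. If fewer than $m$ of the blocked draws land in $\mathcal{S}_+$, then by the pigeonhole principle at least one block contains none of them; a fixed block avoids $\mathcal{S}_+$ with probability $\bigl(1-|\mathcal{S}_+|/|\mathcal{I}|\bigr)^{\lfloor K/m\rfloor}$ (and this stays an upper bound under sampling without replacement). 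A union bound over the $m$ blocks gives $\mathbb{P}(N<m)\le m\bigl(1-|\mathcal{S}_+|/|\mathcal{I}|\bigr)^{\lfloor K/m\rfloor}$, and identically with $|\mathcal{S}_+'|$ for NCE. Since $\{N\ge m\}$ is contained in the event that \eqref{eq-nce-neg-bounds} holds, the probability bounds in \eqref{eq-nce-neg-probs} follow.

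I expect none of the individual pieces to be hard; the step that needs genuine care is the counting in the first two paragraphs --- namely that only $\log_2\log_2(1+r_+)\le m$ sampled positives are required, not something larger. That is also exactly what makes the guarantee weak: the proof actually shows $\ell_*\ge N\log 2$ with $N$ the number of sampled positively-scored items, which is typically far larger than $-\log\mathrm{NDCG}(r_+)$, so decreasing $\ell_*$ need not tighten the bound; and when the set of positively scored items is small, the event $\{N\ge m\}$ only becomes likely once $K$ is substantial. Both effects are consistent with the sluggish early-stage convergence in Figure~\ref{fig-nce-neg-negs-ndcg}.
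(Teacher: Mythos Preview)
Your proposal is correct and follows essentially the same route as the paper: the deterministic lower bound $\ell_*\ge N\log 2$ (the paper's Lemmas for NCE/NEG), the rank hypothesis yielding $-\log\mathrm{NDCG}(r_+)\le m\log 2$, and the partition--pigeonhole--union-bound estimate for $\mathbb{P}(N\ge m)$ all match the paper's argument step for step. Your sigmoid-form derivation of the first inequality is a slightly cleaner presentation than the paper's log-of-product expansion, but the content is identical.
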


From Theorem \ref{theorem-nce-neg-bounds}, we have the following conclusions:
\begin{itemize}[leftmargin=*]
\item[1.]
Notice that
Eq. \eqref{eq-nce-neg-probs} for NCE (NEG) is mainly determined by the size of $|S_+'|$ ($|S_+|$) rather than the current rank $r_+$.
As a result, NCE and NEG can easily bound NDCG
as long as the number of items with non-negative scores is large enough.
This is more common in the early stages of training, 
in which case item representations are poorly distributed in the latent space.
In other words, 
the bounds at the beginning are too weak to be meaningful for improving the model ranking capability.
The so-called training difficulties are actually a stage in narrowing the gap.
\item[2.]
Figure \ref{fig-nce-neg-negs-ndcg} also suggests that the standstill duration of NCE 
is significantly longer than that of NEG,
for example, 150 epochs versus 70 epochs on Beauty if 500 negative samples are sampled for training.
Note that NEG can be regarded as a special case of NCE by fixing $c = \log (|\mathcal{I}|/K)$,
a value higher than 1 if the experimental settings described in Figure~\ref{fig-nce-neg-negs-ndcg} are applied.
As such, according to Theorem \ref{theorem-nce-neg-bounds},
NCE with $c=1$ will suffer from a weaker bound than NEG, thereby more iterations are required for convergence.
\end{itemize}

\begin{table}
    \centering
    \caption{
        The convergence epoch and the highest NDCG@10 metric achieved among the 500 epochs (with $K=500$).
    }
    \label{table-nce-c}
\begin{tabular}{l|cccc}
    \toprule
    & \multicolumn{2}{c}{Beauty}                          & \multicolumn{2}{c}{MovieLens-1M}                    \\
    \midrule
    & NDCG@10          & Epoch          & NDCG@10   & Epoch     \\
    \midrule
    NCE ($c=1$) & 0.0547 & 400-470 & 0.1817 & 280-440 \\
    NCE ($c=5$) & 0.0545 & 195-280 & 0.1812 & 100-200 \\
    NCE ($c=10$) & \textbf{0.0571} & \textbf{95-115} & \textbf{0.1817} & \textbf{80-160}  \\
    NCE ($c=50$) & 0.0410 & $\ge 500$ & 0.1816 & $\ge 420$ \\
    NCE ($c=100$) & 0.0171 & $\ge 500$ & 0.1676 & $\ge 500$ \\
    \bottomrule
\end{tabular}
\end{table}

Overall, NCE and NEG converge slower as $K$ increases because of the exponential growth of their normalizing terms w.r.t. the sizes of $|\mathcal{S}_+'|$ and $|\mathcal{S}_+|$.
One feasible modification is to adopt a moderately large $c$ 
so that the sizes remain acceptable even if numerous negative items are sampled.
As shown in Table \ref{table-nce-c}, 
the number of epochs required for convergence decreases as the value of $c$ increases from 1 to 10.
But a larger value once again hinders the training process.
Recall that $c$ is an estimate of $\log Z_{\text{CE}}$.
Setting $c \ge 50$ implies a hypothesis of $Z_{\text{CE}} \ge e^{50}$, 
which is obviously difficult to achieve for most models.
As a rule of thumb,
the hyper-parameter $c$ should be chosen carefully,
and $c=10$ appears a good choice according to the experiments in Section \ref{section-experiments}.
Next we will introduce a more reliable variant of the sampled softmax loss \cite{loss:SSM:Wu:2023}.
As opposed to NCE and NEG,
it yields a bound determined by the current rank $r_+$.

\subsection{Scaling Up the Sampled Normalizing Term}

Since the normalizing term of cross-entropy is intractable in reality,
a direct way is to approximate it by (uniformly) sampling part of items from $\mathcal{I}$:
\begin{equation}
    \hat{Z}(\alpha) = \exp(s_{v_+}) + \alpha \sum_{i=1}^K \exp(s_{v_i}), \quad \alpha \ge 1.
\end{equation}
The resulting Scaled Cross-Entropy (SCE) becomes
\begin{equation*}
    \ell_{\text{SCE}} := -s_{v_+} + \log \hat{Z}(\alpha).
\end{equation*}
Note that here we scale up the sampled normalizing term by a pre-specific weight $\alpha$.
For the case of $\alpha=1$,
this approximation (a.k.a. sampled softmax loss~\cite{loss:SSM:Wu:2023}) implies a $(K+1)$-class classification task,
and has been used in previous studies \cite{loss:NEG4Rec:Klenitskiy:2023,loss:SSM:Wu:2023}.
But it is worth noting that they have inherent differences.
We modify it via a weight $\alpha$ to remedy the magnitude loss resulted from sampling,
so the scaled loss is more likely to bound NDCG and RR:

\begin{theorem}
    \label{theorem-sce-bounds}
    Under the same conditions as stated in Theorem \ref{theorem-nce-neg-bounds},
    the inequality \eqref{eq-nce-neg-bounds} holds for SCE with a probability of at least
    \begin{equation}
        \label{eq-sce-bounds}
        1 -\frac{1}{\alpha}2^m\big(1 - r_+ / |\mathcal{I}| \big)^{\lfloor \alpha K / 2^m \rfloor}.
    \end{equation}
\end{theorem}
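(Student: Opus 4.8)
The plan is to follow the template of the proof of Theorem~\ref{theorem-nce-neg-bounds}, keeping track of where the scaling weight $\alpha$ enters, since it is exactly this factor that relaxes the number of ``good'' sampled items required and, more importantly, makes the governing probability depend on the current rank $r_+$ rather than on $|\mathcal{S}_+|$. The first move is to reduce \eqref{eq-nce-neg-bounds} for SCE to an elementary event: dividing the numerator and denominator of the softmax defining $\ell_{\mathrm{SCE}}$ by $\exp(s_{v_+})$ gives the identity $\ell_{\mathrm{SCE}} = \log\bigl(1 + \alpha \sum_{i=1}^{K} \exp(s_{v_i} - s_{v_+})\bigr)$, and since $-\log\mathrm{NDCG}(r_+) = \log\log_2(1+r_+)$, the desired inequality $-\log\mathrm{NDCG}(r_+) \le \ell_{\mathrm{SCE}}$ is \emph{exactly} the event $\mathcal{E} := \bigl\{\, 1 + \alpha \sum_{i=1}^{K}\exp(s_{v_i}-s_{v_+}) \ge \log_2(1+r_+) \,\bigr\}$. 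It then remains to lower bound $\mathbb{P}(\mathcal{E})$.

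For the deterministic core, let $\mathcal{S}_{\le r_+} := \{ v \in \mathcal{I} : r(v) \le r_+ \}$; by definition of the rank it has exactly $r_+$ elements, and every $v \in \mathcal{S}_{\le r_+}$ satisfies $s_v \ge s_{v_+}$, hence contributes at least $1$ to $\sum_i \exp(s_{v_i} - s_{v_+})$ whenever it appears in the sample. Thus, if the sampled multiset $\{v_1,\dots,v_K\}$ contains $j$ members of $\mathcal{S}_{\le r_+}$, then $1 + \alpha\sum_i \exp(s_{v_i}-s_{v_+}) \ge 1 + \alpha j$; combined with the hypothesis $r_+ \le 2^{2^m}-1$, equivalently $\log_2(1+r_+) \le 2^m$, the event $\mathcal{E}$ is implied by $\{\, j \ge \lceil 2^m/\alpha \rceil \,\}$, because then $1 + \alpha j > 2^m \ge \log_2(1+r_+)$. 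This is where the weight $\alpha$ pays off: the double-logarithm requirement $j \ge m$ of Theorem~\ref{theorem-nce-neg-bounds} is replaced by the milder condition that about $2^m/\alpha$ draws land in the top-$r_+$ set, and the ``useful'' items are now the $r_+$ highest-scored items rather than all of $\mathcal{S}_+$ --- this is the same ``each top item contributes at least one'' mechanism already used for $\ell_{\mathrm{CE}\text{-}n}$ in Proposition~\ref{proposition-bounding}.

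The probabilistic core is then a blocking plus union-bound argument, just as for NCE and NEG. I would partition the $K$ uniform draws into $\lceil 2^m/\alpha \rceil$ consecutive blocks, each of size at least $\lfloor \alpha K / 2^m \rfloor$; if every block contains at least one element of $\mathcal{S}_{\le r_+}$, then $j \ge \lceil 2^m/\alpha \rceil$ and $\mathcal{E}$ holds by the previous step. A fixed block of that size misses $\mathcal{S}_{\le r_+}$ with probability $(1 - r_+/|\mathcal{I}|)^{\lfloor \alpha K / 2^m \rfloor}$, so a union bound over the at most $2^m/\alpha$ blocks gives $\mathbb{P}(\mathcal{E}^c) \le \frac{1}{\alpha} 2^m (1 - r_+/|\mathcal{I}|)^{\lfloor \alpha K / 2^m \rfloor}$, which is precisely \eqref{eq-sce-bounds}.

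I expect the only delicate points to be bookkeeping rather than conceptual. First, matching the stated prefactor $2^m/\alpha$ and exponent $\lfloor \alpha K / 2^m \rfloor$ exactly requires some care, since $2^m/\alpha$ need not be an integer: one must choose how to round the number of blocks and check that the chosen count is still compatible with $1 + \alpha j \ge \log_2(1+r_+)$ (the clean form is exact when $\alpha$ divides $2^m$, and holds up to this rounding otherwise). Second, one should fix the sampling model so that ``a block of size $\ell$ misses $\mathcal{S}_{\le r_+}$'' genuinely has probability $(1 - r_+/|\mathcal{I}|)^{\ell}$ --- this is immediate for independent uniform sampling with replacement, and under sampling without replacement the corresponding hypergeometric probability is no larger, so the bound is unaffected. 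Everything else is the elementary monotonicity of $\log$ already exploited in the proof of Proposition~\ref{proposition-bounding}.
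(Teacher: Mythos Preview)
Your proposal is correct and follows essentially the same route as the paper: the paper packages the deterministic step as Lemma~\ref{lemma-bound-scaled} (giving $\ell_{\mathrm{SCE}}\ge \log(1+\alpha\xi_3)$ with $\xi_3$ your $j$) and the probabilistic step as Lemma~\ref{lemma-binomial-bound} (the same blocking/union-bound argument you describe), then combines them with the reduction $-\log\mathrm{NDCG}(r_+)\le m\log 2$ from $r_+\le 2^{2^m}-1$. The only cosmetic difference is that the paper rounds to $\lfloor 2^m/\alpha\rfloor$ rather than your $\lceil 2^m/\alpha\rceil$ before invoking the binomial lemma, which is precisely the bookkeeping subtlety you already flag.
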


In addition to the promising bounding probability achieved through the weight $\alpha$,
we can also observe from Theorem \ref{theorem-sce-bounds} that
Eq.~\eqref{eq-sce-bounds} is directly governed by the current rank $r_+$.
In contrast to NCE ($c=1$) and NEG,
SCE is expected to be meaningful even in the early stages of training.
Certainly, SCE is not perfect:
the scaling operation inevitably raises concerns about the high variance problem.
As depicted in Figure~\ref{fig-sce-alpha-negs-ndcg}, 
if negative samples are very rare, 
a larger weight of $\alpha$ tends to worsen the ranking capability.
Fortunately,
the high variance problem appears less significant 
as $K$ slightly increases (\eg, $K \ge 50$ for Beauty and $K \ge 10$ for MovieLens-1M).
Notably,
sampling 100 negative samples for $\alpha = 100$ produces comparable performance to using 500 negative samples for $\alpha = 1$ on the Beauty dataset.

\begin{figure}[t]
	\centering
	\includegraphics[width=0.47\textwidth]{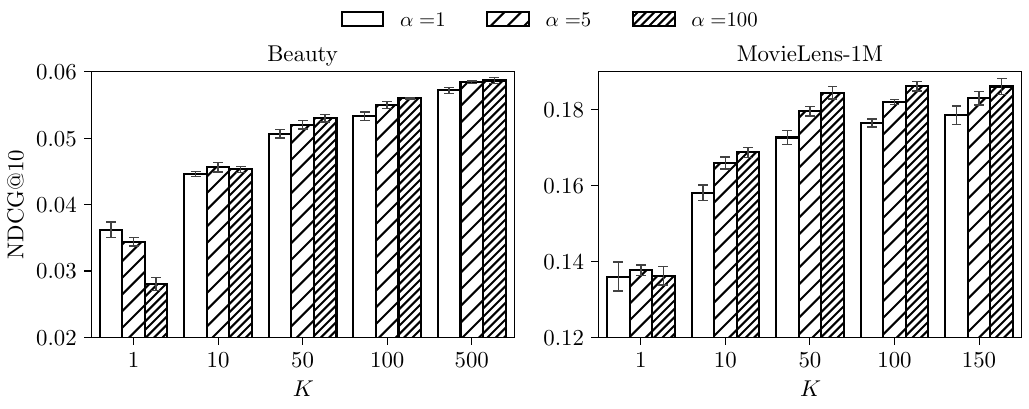}
	\caption{
        NDCG@10 performance under various weight $\alpha$.
	}
	\label{fig-sce-alpha-negs-ndcg}
\end{figure}

\textbf{Connection to importance sampling.}
While SCE does not make sense at first glance, 
it is indeed closely related to importance sampling \cite{loss:IS:Bengio:2008,robert1999monte},
a widely used technique for cross-entropy approximation.
Given a proposal distribution $Q$ over all items, it corrects the approximation as follows
\begin{equation*}
    \ell_{\text{IS}} := -\log \frac{
        \exp(s_{v_+} - \log Q(v_+))
    }{
        \sum_{i=1}^K \exp(s_{v_i} - \log Q(v_i))
    }, \quad v_i \sim Q, \: i=1,2,\ldots, K.
\end{equation*}
It allows for an unbiased estimation if the proposal distribution $Q$ is precisely identical to the underlying data distribution.
But for the conditional probability model like sequential recommendation, achieving the optimum requires additional overhead for each query.
Hence, some heuristic designs based on popularity sampling \cite{loss:IS4Rec:Lian:2020,loss:IS4Rec:Chen:2022} are adopted more often in practice.
We point out that SCE is \textit{morphologically} equivalent to these designs with
\begin{equation}
    \label{eq-scaled-IS}
    Q(v) = 
    \left \{
    \begin{array}{ll}
        \frac{
            \alpha
        }{
            |\mathcal{I}| - 1 + \alpha
        } & \text{ if } v = v_+ \\
        \frac{
            1
        }{
            |\mathcal{I}| - 1 + \alpha
        } & \text{ if } v \not= v_+
    \end{array}
    \right..
\end{equation}
Consequently scaling up the sampled normalizing term can be considered as assigning a higher probability mass to the target item~$v_+$. 
This partially explains why SCE is effective:
a well-trained scoring function should skew towards the target item.
Another subtle difference is that the normalizing term for importance sampling may not include the target term, 
while SCE always preserves it.
This is practically necessary to avoid an unstable training process.

\subsection{Computational Complexity}

The major cost of cross-entropy lies in the inner product and softmax operations.
It has a complexity of $\mathcal{O}(|\mathcal{I}|d)$, where $d$ denotes the embedding size before similarity calculation.
In contrast, the approximations require a lower cost $\mathcal{O}(Kd)$, 
correspondingly an additional overhead $\mathcal{O}(K)$ for uniform sampling.
Overall, it is profitable if $K \ll |\mathcal{I}|$.

\section{Experiments}
\label{section-experiments}

In this section, 
we are to reveal the true ranking capability of conventional recommenders 
by using the modified Noise Contrastive Estimation (NCE) and the proposed Scaled Cross-Entropy (SCE).
Thus, the current advancements made by LLM-based recommenders can also be assessed objectively.

\begin{table}
  \centering
  \caption{Dataset statistics.}
  \label{table-dataset-statistics}
  \scalebox{0.8}{
\begin{tabular}{c|ccccc}
  \toprule
Dataset     & \#Users & \#Items & \#Interactions & Density & Avg. Length \\
  \midrule
Beauty      & 22,363  & 12,101  & 198,502        & 0.07\%  & 8.9       \\
MovieLens-1M & 6,040   & 3,416   & 999,611        & 4.84\%  & 165.5    \\
Yelp       & 30,431  & 20,033  & 316,354        & 0.05\%  & 10.4       \\
  \bottomrule
\end{tabular}
  }
\end{table}

\begin{table}[]
    \centering
    \caption{
        Model statistics.
        The number of parameters is estimated based on the Beauty dataset.
    }
    \label{table-model-statistics}
    \scalebox{0.75}{
\begin{tabular}{lcccc}
    \toprule
\multicolumn{1}{c}{Model} & Foundation Model & Architecture & Embedding Size & \#Params \\
    \midrule
P5(CID+IID) \cite{LLM4Rec:CIDIID:Hua2023}                              & T5                                & Transformer                   & 512                                & 60M                        \\
POD \cite{LLM4Rec:POD:Li2023}                                        & T5                                & Transformer                   & 512                                & 60M                        \\
LlamaRec \cite{LLM4Rec:LLaMARec:Yue2023}                                   & Llama2                            & Transformer                   & 4096                                & 7B                        \\
E4SRec \cite{LLM4Rec:E4SRec:Li:2023}                                     & Llama2                            & Transformer                   & 4096                                & 7B                    \\
    \midrule
GRU4Rec \cite{Seq:GRU4Rec:Hidasi:2015}                                    & -                                 & RNN                           & 64                              & 0.80M                   \\
Caser \cite{Seq:Caser:Tang:2018}                                      & -                                 & CNN                           & 64                              & 3.80M                   \\
SASRec \cite{Seq:SASRec:Kang:2018}                                     & -                                 & Transformer                   & 64                              & 0.83M                   \\
BERT4Rec \cite{Seq:Bert4Rec:Sun:2019}                                   & -                                 & Transformer                   & 64                              & 1.76M                   \\
FMLP-Rec \cite{Seq:FMLPRec:Zhou:2022}                                   & -                                 & MLP                           & 64                              & 0.92M                   \\
    \bottomrule
\end{tabular}
    }
\end{table}

\subsection{Experimental Setup}
\label{section-setup}

This part introduces the datasets, evaluation metrics, baselines, and implementation details.

\begin{table*}[]
    \caption{
        Overall performance comparison on the Beauty, MovieLens-1M, and Yelp datasets.
        The best results of each block are marked in bold.
        `\improv over CE/LLM' represents the relative gap between respective best results.
    }
    \label{table-overall-comparison}
    \centering
    \scalebox{0.83}{
\begin{tabular}{c|l|cccc|cccc|cccc}
    \toprule
\multicolumn{1}{l}{}    &            & \multicolumn{4}{c}{Beauty}                                            & \multicolumn{4}{c}{MovieLens-1M}                                      & \multicolumn{4}{c}{Yelp}                                              \\
    \midrule
\multicolumn{1}{l}{}    &            & HR@5            & HR@10           & NDCG@5          & NDCG@10         & HR@5            & HR@10           & NDCG@5          & NDCG@10         & HR@5            & HR@10           & NDCG@5          & NDCG@10         \\
    \midrule
    \midrule
\multirow{4}{*}{LLM}    & POD        & 0.0185          & 0.0245          & 0.0125          & 0.0146          & 0.0422          & 0.0528          & 0.0291          & 0.0326          & \textbf{0.0476}          & 0.0564          & \textbf{0.0330}          & 0.0358          \\
                        & P5 (CID+IID)    & 0.0569          & 0.0791          & 0.0403          & 0.0474          & \textbf{0.2225} & \textbf{0.3131} & \textbf{0.1570} & \textbf{0.1861} & 0.0289          & 0.0453          & 0.0200          & 0.0252          \\
                        & LlamaRec   & \textbf{0.0591} & \textbf{0.0862} & \textbf{0.0405} & \textbf{0.0492} & 0.1757          & 0.2836          & 0.1113          & 0.1461          & 0.0416 & \textbf{0.0605} & 0.0306 & \textbf{0.0367} \\
                        & E4SRec     & 0.0527          & 0.0753          & 0.0376          & 0.0448          & 0.1871          & 0.2765          & 0.1234          & 0.1522          & 0.0309          & 0.0473          & 0.0207          & 0.0260          \\
    \midrule
\multirow{5}{*}{CE}     & GRU4Rec    & 0.0474          & 0.0690          & 0.0329          & 0.0398          & 0.2247          & 0.3201          & 0.1542          & 0.1850          & 0.0275          & 0.0463          & 0.0171          & 0.0231          \\
                        & Caser      & 0.0435          & 0.0614          & 0.0303          & 0.0361          & 0.2181          & 0.3049          & 0.1520          & 0.1800          & 0.0283          & 0.0383          & 0.0211          & 0.0243          \\
                        & SASRec     & 0.0713          & 0.0986          & \textbf{0.0510} & \textbf{0.0597} & 0.2221          & 0.3131          & 0.1518          & 0.1812          & 0.0476          & 0.0696          & 0.0345          & 0.0415          \\
                        & BERT4Rec   & 0.0509          & 0.0747          & 0.0347          & 0.0423          & 0.1978          & 0.2922          & 0.1330          & 0.1634          & 0.0355          & 0.0540          & 0.0243          & 0.0303          \\
                        & FMLP-Rec   & \textbf{0.0717} & \textbf{0.0988} & 0.0507          & 0.0594          & \textbf{0.2287} & \textbf{0.3243} & \textbf{0.1585} & \textbf{0.1893} & \textbf{0.0512} & \textbf{0.0759} & \textbf{0.0364} & \textbf{0.0444} \\
\multicolumn{2}{c|}{\improv over LLM} & 21.4\%          & 14.6\%          & 25.7\%          & 21.3\%          & 2.8\%           & 3.6\%           & 0.9\%           & 1.7\%           & 7.5\%           & 25.6\%          & 10.3\%          & 21.0\%          \\
    \midrule
    \midrule
\multirow{5}{*}{BCE}    & GRU4Rec    & 0.0214          & 0.0376          & 0.0134          & 0.0186          & 0.1595          & 0.2490          & 0.1023          & 0.1310          & 0.0157          & 0.0273          & 0.0098          & 0.0135          \\
                        & Caser      & 0.0282          & 0.0434          & 0.0185          & 0.0234          & 0.1639          & 0.2476          & 0.1078          & 0.1348          & 0.0304          & 0.0428          & 0.0224          & 0.0264          \\
                        & SASRec     & 0.0429          & 0.0671          & 0.0275          & 0.0353          & 0.1594          & 0.2492          & 0.1040          & 0.1329          & 0.0325          & 0.0501          & 0.0225          & 0.0281          \\
                        & BERT4Rec   & 0.0245          & 0.0415          & 0.0152          & 0.0207          & 0.1241          & 0.2021          & 0.0789          & 0.1039          & 0.0223          & 0.0379          & 0.0138          & 0.0188          \\
                        & FMLP-Rec   & \textbf{0.0460} & \textbf{0.0710} & \textbf{0.0301} & \textbf{0.0381} & \textbf{0.1800} & \textbf{0.2722} & \textbf{0.1173} & \textbf{0.1469} & \textbf{0.0460} & \textbf{0.0651} & \textbf{0.0330} & \textbf{0.0391} \\
\multicolumn{2}{c|}{\improv over CE}  & -35.9\%         & -28.1\%         & -41.0\%         & -36.2\%         & -21.3\%         & -16.1\%         & -26.0\%         & -22.4\%         & -10.0\%         & -14.3\%         & -9.4\%          & -11.9\%         \\
\multicolumn{2}{c|}{\improv over LLM} & -22.1\%         & -17.7\%         & -25.8\%         & -22.6\%         & -19.1\%         & -13.1\%         & -25.3\%         & -21.1\%         & -3.3\%          & 7.6\%           & -0.1\%          & 6.5\%           \\
    \midrule
\multirow{5}{*}{NCE}    & GRU4Rec    & 0.0434          & 0.0652          & 0.0288          & 0.0359          & 0.2273          & 0.3184          & 0.1541          & 0.1834          & 0.0241          & 0.0418          & 0.0148          & 0.0205          \\
                        & Caser      & 0.0377          & 0.0567          & 0.0253          & 0.0314          & 0.2213          & 0.3106          & 0.1523          & 0.1811          & 0.0296          & 0.0405          & 0.0220          & 0.0255          \\
                        & SASRec     & 0.0686          & 0.0961          & 0.0485          & 0.0573          & 0.2177          & 0.3135          & 0.1479          & 0.1788          & 0.0471          & 0.0682          & 0.0344          & 0.0412          \\
                        & BERT4Rec   & 0.0487          & 0.0734          & 0.0324          & 0.0404          & 0.1960          & 0.2933          & 0.1311          & 0.1624          & 0.0389          & 0.0574          & 0.0271          & 0.0330          \\
                        & FMLP-Rec   & \textbf{0.0693} & \textbf{0.0964} & \textbf{0.0491} & \textbf{0.0578} & \textbf{0.2291} & \textbf{0.3279} & \textbf{0.1567} & \textbf{0.1885} & \textbf{0.0512} & \textbf{0.0760} & \textbf{0.0364} & \textbf{0.0444} \\
\multicolumn{2}{c|}{\improv over CE}  & -3.4\%          & -2.4\%          & -3.6\%          & -3.2\%          & 0.2\%           & 1.1\%           & -1.1\%          & -0.4\%          & 0.0\%           & 0.1\%           & 0.0\%           & 0.1\%           \\
\multicolumn{2}{c|}{\improv over LLM} & 17.3\%          & 11.8\%          & 21.3\%          & 17.4\%          & 2.9\%           & 4.7\%           & -0.2\%          & 1.3\%           & 7.5\%           & 25.7\%          & 10.3\%          & 21.1\%          \\
    \midrule
\multirow{5}{*}{SCE}    & GRU4Rec    & 0.0489          & 0.0694          & 0.0344          & 0.0410          & 0.2309          & 0.3248          & 0.1587          & 0.1891          & 0.0290          & 0.0487          & 0.0183          & 0.0246          \\
                        & Caser      & 0.0456          & 0.0628          & 0.0322          & 0.0377          & 0.2274          & 0.3135          & 0.1586          & 0.1864          & 0.0293          & 0.0404          & 0.0218          & 0.0253          \\
                        & SASRec     & 0.0698          & 0.0968          & 0.0500          & 0.0587          & 0.2273          & 0.3186          & 0.1567          & 0.1862          & 0.0472          & 0.0693          & 0.0339          & 0.0410          \\
                        & BERT4Rec   & 0.0540          & 0.0776          & 0.0372          & 0.0449          & 0.2078          & 0.3014          & 0.1405          & 0.1707          & 0.0414          & 0.0612          & 0.0283          & 0.0346          \\
                        & FMLP-Rec   & \textbf{0.0703} & \textbf{0.0979} & \textbf{0.0502} & \textbf{0.0591} & \textbf{0.2372} & \textbf{0.3284} & \textbf{0.1648} & \textbf{0.1942} & \textbf{0.0517} & \textbf{0.0779} & \textbf{0.0357} & \textbf{0.0441} \\
\multicolumn{2}{c|}{\improv over CE}  & -2.0\%          & -0.9\%          & -1.5\%          & -1.1\%          & 3.7\%           & 1.3\%           & 4.0\%           & 2.6\%           & 1.0\%           & 2.6\%           & -2.0\%          & -0.6\%          \\
\multicolumn{2}{c|}{\improv over LLM} & 19.0\%          & 13.6\%          & 23.8\%          & 20.0\%          & 6.6\%           & 4.9\%           & 5.0\%           & 4.4\%           & 8.6\%           & 28.8\%          & 8.1\%           & 20.2\%          \\
    \bottomrule
\end{tabular}
    }
\end{table*}

\textbf{Datasets.}
To ensure the reliability of the conclusions, 
we select three public datasets from different scenarios, including the Beauty, MovieLens-1M, and Yelp datasets.
Beauty is an e-commerce dataset extracted from Amazon reviews known for high sparsity;
MovieLens-1M is a movie dataset with a much longer sequence length;
Yelp collects abundant meta-data suitable for multi-task training.
Following \cite{Seq:FMLPRec:Zhou:2022,LLM4Rec:P5:Geng:2022},
we filter out users and items with less than 5 interactions,
and the validation set and test set are split in a \textit{leave-one-out} fashion, 
namely the last interaction for testing and the penultimate one for validation.
The dataset statistics are presented in Table~\ref{table-dataset-statistics}.

\textbf{Evaluation metrics.}
For each user, the scores returned by the recommender will be sorted in descending order to generate candidate lists. 
In addition to the aforementioned NDCG@$k$ (Normalized Discounted Cumulative Gain),
HR@$k$ (Hit Rate) that quantifies the proportion of successful hits among the top-$k$ recommended candidates
will also be included in this paper due to its widespread use in other studies \cite{Seq:SASRec:Kang:2018,Seq:FMLPRec:Zhou:2022,LLM4Rec:P5:Geng:2022}.
Besides, we also provide the Mean Reciprocal Rank (MRR) results in Appendix \ref{section-mrr}.

\textbf{Baselines.}
Although LLM itself has surprising zero-shot recommendation ability,
there still exist non-negligible gaps \cite{LLM4Rec:Prefrence:Kang:2023,LLM4Rec:TallRec:Bao:2023} unless domain-specific knowledge is injected.
Hence, only LLM-based recommenders enhanced by fine-tuning will be compared in this paper:
\textbf{P5 (CID+IID)}~\cite{LLM4Rec:CIDIID:Hua2023}, \textbf{POD}~\cite{LLM4Rec:POD:Li2023},
\textbf{LlamaRec}~\cite{LLM4Rec:LLaMARec:Yue2023}, and \textbf{E4SRec}~\cite{LLM4Rec:E4SRec:Li:2023}.
Specifically, the first two methods take T5 as the foundation model,
while the last two methods fine-tune Llama2 for efficient sequential recommendation.
Additionally, 
five sequential models including \textbf{FMLP-Rec} \cite{Seq:FMLPRec:Zhou:2022}, 
\textbf{Caser} \cite{Seq:Caser:Tang:2018}, 
\textbf{GRU4Rec} \cite{Seq:GRU4Rec:Hidasi:2015}, 
\textbf{SASRec}~\cite{Seq:SASRec:Kang:2018}, and \textbf{BERT4Rec} \cite{Seq:Bert4Rec:Sun:2019},
are considered here to unveil the true capability of conventional methods.
They cover various architectures so as to comprehensively validate the effectiveness of the proposed approximation methods.
Table~\ref{table-model-statistics} presents an overview of the model statistics.
Notice that for LlamaRec and E4SRec we employ Llama2-7B instead of Llama2-13B 
as the foundation model in order to improve training efficiency.
This results in minor performance differences in practice.

\textbf{Implementation details.}
According to the discussion in Section~\ref{section-methodology},
we set $c=10$ for NCE and $\alpha=100$ for SCE,
and less than 5\% of all items will be sampled for both approximation objectives.
Specifically, $K=500$ on the Beauty dataset, and $K=100$ on the MovieLens-1M and Yelp datasets.
We find that training 200 epochs is sufficient for cross-entropy to converge,
while sometimes NCE and SCE need 300 epochs.
Other hyper-parameters of NCE and SCE completely follow cross-entropy.
Because the data preprocessing scripts provided with POD may lead to information leakage \cite{LLM4Rec:Tiger:Rajput:2023},
we assign random integer IDs to items rather than sequentially incrementing integer IDs.
The maximum sequence length and embedding size have a direct impact on representation capability and inference efficiency,
so we will discuss them separately in Section~\ref{section-other-factors}.

\subsection{Overall Performance Evaluation}

In this part, 
we fulfill our primary objective by presenting the comparison 
between LLM-based recommenders and conventional recommenders.
Considering the expensive training cost of LLM-based models, 
their results reported in Table~\ref{table-overall-comparison} 
are based on a single run, 
while the results of other conventional methods are averaged over 5 independent runs.

\textbf{Conventional methods using cross-entropy outperform LLM-based recommenders.}
Let us focus on the first three blocks in Table \ref{table-overall-comparison},
where the use of cross-entropy greatly improves the conventional methods' recommendation performance.
In particular, SASRec and FMLP-Rec demonstrate superior performance compared to LLMs,
but fall significantly behind if replacing cross-entropy with BCE.
Hence, previous affirmative arguments about the LLMs' recommendation performance 
are rooted in unobjective and unfair comparisons, 
wherein BCE or BPR are commonly used for training conventional models.
Moreover, the inflated model size (from 60M of P5 to 7B of LlamaRec) only yields negligible improvements in some of the metrics.
The rich world knowledge and powerful reasoning ability seem to be of limited use here 
due to the emphasis on personalization in sequential recommendation.
In conclusion, despite after fine-tuning, 
LLM-based recommenders still fail to surpass state-of-the-art conventional models.

\textbf{Comparable effectiveness can be achieved using practical approximations.}
Conducting a full softmax over all items for cross-entropy may be infeasible in practice.
Fortunately, the last two blocks in Table \ref{table-overall-comparison}
shows the substitutability of cross-entropy by applying the modified NCE or the proposed SCE.
To showcase the effectiveness of these substitutes,
we intentionally sample a rather conservative number of negative samples,
and thus there remains a slight gap compared to cross-entropy.
Nevertheless, the superior results of NCE and SCE 
re-emphasize the clear gap that exists between LLM-based and conventional recommenders.

\begin{figure}
	\centering
    \subfloat[HR@10]{
        \includegraphics[width=0.45\textwidth]{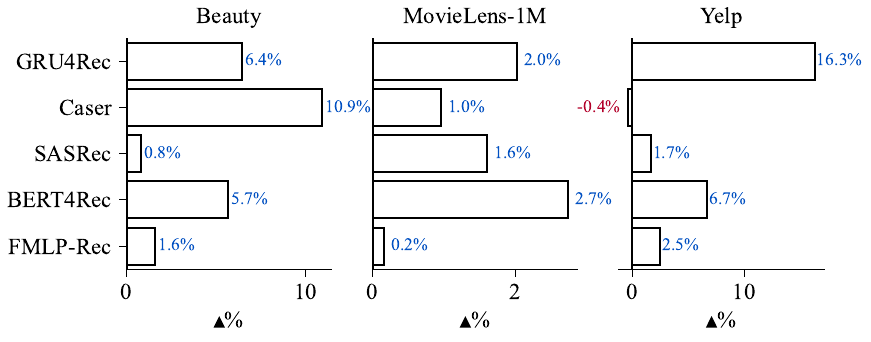}
    }
    \hfil
    \subfloat[NDCG@10]{
        \includegraphics[width=0.45\textwidth]{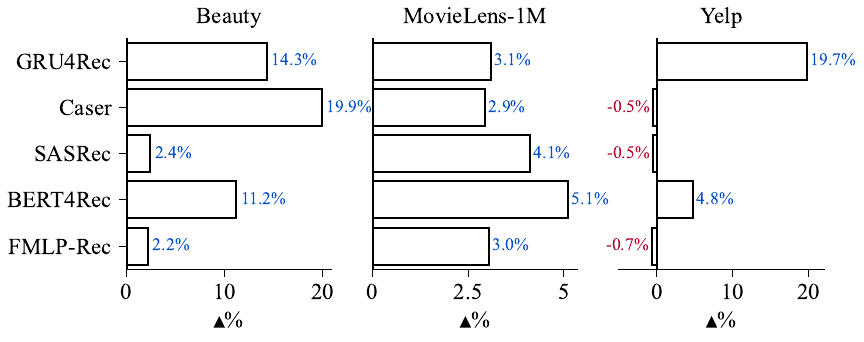}
    }
	\caption{
        Relative gaps between SCE and NCE.
	}
	\label{fig-sce-nce-gaps}
\end{figure}

\textbf{
SCE is more consistent and reliable than NCE.
}
As discussed in Section \ref{section-nce}, 
NCE is sensitive to the choice of $c$:
extremely small or large values might impede learning and degrade performance.
The inconsistent performance gains from NCE can also verify this conclusion.
Figure \ref{fig-sce-nce-gaps} clearly demonstrates that 
NCE can contribute competitive performance to SASRec and FMLP-Rec,
but underperforms SCE across a variety of models (\eg, GRU4Rec and BERT4Rec) 
and datasets (\eg, Beauty and MovieLens-1M).
For Caser on Beauty and GRU4Rec on Yelp, 
replacing SCE with NCE results in a performance degradation of even $\ge$15\%.

Additional experiments on BPR loss and MRR metrics are shown in the Appendix. 
They draw the same conclusions as above.

\subsection{Other Factors for Objective Evaluation}
\label{section-other-factors}

Due to the difference in model design, 
it is challenging to conduct evaluations on a completely identical testbed. 
To clarify the reliability of the results in Table \ref{table-overall-comparison},
we further investigate two key factors: 
maximum sequence length\footnote{
    The maximum sequence length refers to the maximum number of historical interactions 
    used for next-item prediction.
    Other tokens like prompts are not included.
} $L$
and embedding size $d$.
According to the conclusions above, SCE is employed to train SASRec in the following.
Results for CE and NCE can be found in Appendix~\ref{section-max-seq-len}.

\begin{figure}[t]
	\centering
	\includegraphics[width=0.47\textwidth]{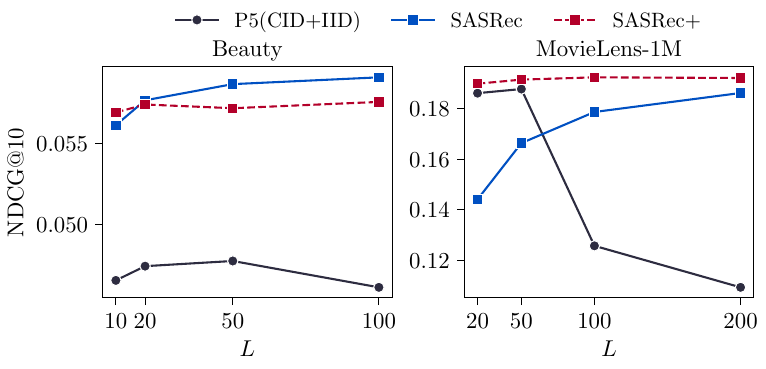}
	\caption{
        P5 (CID+IID) versus SASRec(+) across various maximum sequence length $L$.
        SASRec+ uses a sliding window similar to P5 to augment each sequence.
	}
	\label{fig-maxlen}
\end{figure}

\begin{table}[]
    \centering
    \caption{
        The impact of embedding size $d$.
    }
    \label{table-embedding-size}
    \scalebox{0.9}{
\begin{tabular}{lc|cc|cc}
    \toprule
            &                & \multicolumn{2}{c|}{Beauty}        & \multicolumn{2}{c}{MovieLens-1M}  \\
    \midrule
            & $d$ & HR@10           & NDCG@10         & HR@10           & NDCG@10         \\
    \midrule
P5(CID+IID) & 512            & 0.0791          & 0.0474          & 0.3131          & 0.1861          \\
    \midrule
SASRec+     & 64             & 0.0937          & 0.0574          & 0.3271          & 0.1915          \\
SASRec+     & 512            & \textbf{0.0963} & \textbf{0.0589} & \textbf{0.3360} & \textbf{0.1999} \\
    \bottomrule
\end{tabular}
    }
\end{table}

\textbf{Maximum sequence length.}
Following the routine of previous studies \cite{Seq:SASRec:Kang:2018,Seq:Bert4Rec:Sun:2019},
conventional models like SASRec are allowed to make predictions based on the last $L=200$ interactions on MovieLens-1M and $L=50$ on other datasets. 
In contrast, LLM-based recommenders are confined to $L=20$ on all three datasets for training efficiency.
We argue that this difference does not make the conclusion differ 
because as can be seen in Figure \ref{fig-maxlen}, P5 does not exhibit much better performance with access to more historical interactions.
Notably, SASRec's performance is stable on Beauty, but on MovieLens-1M, it deteriorates significantly when $L$ gets smaller.
This phenomenon primarily arises from the fact that 
the original implementation of SASRec only utilizes the most recent $L$ interactions for training, 
whereas for P5 each sequence is divided into multiple segments.
Consequently, P5 is able to access far more interactions than SASRec during training,
especially on the MovieLens-1M dataset known for long sequence length.
If we apply a similar strategy that augments each sequence using a sliding window,
the resulting SASRec+ then performs consistently across diverse $L$.

\textbf{Embedding size.}
Models with a larger embedding size have stronger representation capability 
and thus potentially better recommendation performance. 
According to the discussion above, 
we examine its impact under the same setting of $L=20$.
In Table~\ref{table-embedding-size}, 
when the embedding size of SASRec+ is increased from 64 to 512,
the obtained performance gains are marginal.
In view of its costly overhead, such improvement is not attractive in practice.
This also implies that existing LLM-based recommenders are over-parameterized in terms of ranking capability.

\section{Conclusion}

In this work, we bridge the theoretical and empirical performance gaps between cross-entropy and some of its approximations
through a modified noise contrastive estimation loss and an effective scaled cross-entropy loss.
Based on these practical approximations, we showcase that existing LLM-based recommenders are not as effective as claimed.
The innovative understandings and extensive experiments 
can be expected to facilitate an objective evaluation of LLM-based recommendation in the future.

\clearpage

\bibliographystyle{ACM-Reference-Format}
\bibliography{refs}


\begin{thebibliography}{56}


\ifx \showCODEN    \undefined \def \showCODEN     #1{\unskip}     \fi
\ifx \showDOI      \undefined \def \showDOI       #1{#1}\fi
\ifx \showISBNx    \undefined \def \showISBNx     #1{\unskip}     \fi
\ifx \showISBNxiii \undefined \def \showISBNxiii  #1{\unskip}     \fi
\ifx \showISSN     \undefined \def \showISSN      #1{\unskip}     \fi
\ifx \showLCCN     \undefined \def \showLCCN      #1{\unskip}     \fi
\ifx \shownote     \undefined \def \shownote      #1{#1}          \fi
\ifx \showarticletitle \undefined \def \showarticletitle #1{#1}   \fi
\ifx \showURL      \undefined \def \showURL       {\relax}        \fi
\providecommand\bibfield[2]{#2}
\providecommand\bibinfo[2]{#2}
\providecommand\natexlab[1]{#1}
\providecommand\showeprint[2][]{arXiv:#2}

\bibitem[Bao et~al\mbox{.}(2023)]%
        {LLM4Rec:TallRec:Bao:2023}
\bibfield{author}{\bibinfo{person}{Keqin Bao}, \bibinfo{person}{Jizhi Zhang},
  \bibinfo{person}{Yang Zhang}, \bibinfo{person}{Wenjie Wang},
  \bibinfo{person}{Fuli Feng}, {and} \bibinfo{person}{Xiangnan He}.}
  \bibinfo{year}{2023}\natexlab{}.
\newblock \showarticletitle{TALLRec: An effective and efficient tuning
  framework to align large language model with recommendation}. In
  \bibinfo{booktitle}{\emph{{ACM} Conference on Recommender Systems (RecSys)}}.
  \bibinfo{publisher}{{ACM}}, \bibinfo{pages}{1007--1014}.
\newblock


\bibitem[Bengio and Sen{\'e}cal(2008)]%
        {loss:IS:Bengio:2008}
\bibfield{author}{\bibinfo{person}{Yoshua Bengio} {and}
  \bibinfo{person}{Jean-S{\'e}bastien Sen{\'e}cal}.}
  \bibinfo{year}{2008}\natexlab{}.
\newblock \showarticletitle{Adaptive importance sampling to accelerate training
  of a neural probabilistic language model}.
\newblock \bibinfo{journal}{\emph{IEEE Transactions on Neural Networks
  (TNNLS)}} \bibinfo{volume}{19}, \bibinfo{number}{4} (\bibinfo{year}{2008}),
  \bibinfo{pages}{713--722}.
\newblock


\bibitem[Blanc and Rendle(2018)]%
        {loss:SSM:Blanc:2018}
\bibfield{author}{\bibinfo{person}{Guy Blanc} {and} \bibinfo{person}{Steffen
  Rendle}.} \bibinfo{year}{2018}\natexlab{}.
\newblock \showarticletitle{Adaptive sampled softmax with kernel based
  sampling}. In \bibinfo{booktitle}{\emph{International Conference on Machine
  Learning (ICML)}} \emph{(\bibinfo{series}{Proceedings of Machine Learning
  Research}, Vol.~\bibinfo{volume}{80})}. \bibinfo{publisher}{{PMLR}},
  \bibinfo{pages}{589--598}.
\newblock


\bibitem[Bruch et~al\mbox{.}(2019)]%
        {loss:Bound:Bruch:2019}
\bibfield{author}{\bibinfo{person}{Sebastian Bruch}, \bibinfo{person}{Xuanhui
  Wang}, \bibinfo{person}{Michael Bendersky}, {and} \bibinfo{person}{Marc
  Najork}.} \bibinfo{year}{2019}\natexlab{}.
\newblock \showarticletitle{An analysis of the softmax cross entropy loss for
  learning-to-rank with binary relevance}. In \bibinfo{booktitle}{\emph{{ACM}
  {SIGIR} International Conference on Theory of Information Retrieval
  (ICTIR)}}. \bibinfo{publisher}{{ACM}}, \bibinfo{pages}{75--78}.
\newblock


\bibitem[Chen et~al\mbox{.}(2023)]%
        {loss:NEG4Rec:Chen:2023}
\bibfield{author}{\bibinfo{person}{Chong Chen}, \bibinfo{person}{Weizhi Ma},
  \bibinfo{person}{Min Zhang}, \bibinfo{person}{Chenyang Wang},
  \bibinfo{person}{Yiqun Liu}, {and} \bibinfo{person}{Shaoping Ma}.}
  \bibinfo{year}{2023}\natexlab{}.
\newblock \showarticletitle{Revisiting negative sampling vs. non-sampling in
  implicit recommendation}.
\newblock \bibinfo{journal}{\emph{ACM Transactions on Information Systems
  (TOIS)}} \bibinfo{volume}{41}, \bibinfo{number}{1} (\bibinfo{year}{2023}),
  \bibinfo{pages}{1--25}.
\newblock


\bibitem[Chen et~al\mbox{.}(2022)]%
        {loss:IS4Rec:Chen:2022}
\bibfield{author}{\bibinfo{person}{Jin Chen}, \bibinfo{person}{Defu Lian},
  \bibinfo{person}{Binbin Jin}, \bibinfo{person}{Kai Zheng}, {and}
  \bibinfo{person}{Enhong Chen}.} \bibinfo{year}{2022}\natexlab{}.
\newblock \showarticletitle{Learning recommenders for implicit feedback with
  importance resampling}. In \bibinfo{booktitle}{\emph{{ACM} Web Conference
  (WWW)}}. \bibinfo{publisher}{{ACM}}, \bibinfo{pages}{1997--2005}.
\newblock


\bibitem[Dai et~al\mbox{.}(2023)]%
        {LLM4Rec:UncoverChatGPT:Dai:2023}
\bibfield{author}{\bibinfo{person}{Sunhao Dai}, \bibinfo{person}{Ninglu Shao},
  \bibinfo{person}{Haiyuan Zhao}, \bibinfo{person}{Weijie Yu},
  \bibinfo{person}{Zihua Si}, \bibinfo{person}{Chen Xu},
  \bibinfo{person}{Zhongxiang Sun}, \bibinfo{person}{Xiao Zhang}, {and}
  \bibinfo{person}{Jun Xu}.} \bibinfo{year}{2023}\natexlab{}.
\newblock \showarticletitle{Uncovering ChatGPT's Capabilities in Recommender
  Systems}. In \bibinfo{booktitle}{\emph{{ACM} Conference on Recommender
  Systems (RecSys)}}. \bibinfo{publisher}{{ACM}}, \bibinfo{pages}{1126--1132}.
\newblock


\bibitem[Devlin et~al\mbox{.}(2019)]%
        {devlin2018bert}
\bibfield{author}{\bibinfo{person}{Jacob Devlin}, \bibinfo{person}{Ming{-}Wei
  Chang}, \bibinfo{person}{Kenton Lee}, {and} \bibinfo{person}{Kristina
  Toutanova}.} \bibinfo{year}{2019}\natexlab{}.
\newblock \showarticletitle{{BERT:} Pre-training of deep bidirectional
  transformers for language understanding}. In
  \bibinfo{booktitle}{\emph{Conference of the North American Chapter of the
  Association for Computational Linguistics: Human Language Technologies
  (NAACL-HLT)}}. \bibinfo{publisher}{Association for Computational
  Linguistics}, \bibinfo{pages}{4171--4186}.
\newblock


\bibitem[Feng et~al\mbox{.}(2023)]%
        {LLM4Rec:CRs:Feng:2023}
\bibfield{author}{\bibinfo{person}{Yue Feng}, \bibinfo{person}{Shuchang Liu},
  \bibinfo{person}{Zhenghai Xue}, \bibinfo{person}{Qingpeng Cai},
  \bibinfo{person}{Lantao Hu}, \bibinfo{person}{Peng Jiang},
  \bibinfo{person}{Kun Gai}, {and} \bibinfo{person}{Fei Sun}.}
  \bibinfo{year}{2023}\natexlab{}.
\newblock \showarticletitle{A large language model enhanced conversational
  recommender system}.
\newblock \bibinfo{journal}{\emph{arXiv preprint arXiv:2308.06212}}
  (\bibinfo{year}{2023}).
\newblock


\bibitem[Gao et~al\mbox{.}(2023)]%
        {LLM4Rec:Chat-Rec:Gao:2023}
\bibfield{author}{\bibinfo{person}{Yunfan Gao}, \bibinfo{person}{Tao Sheng},
  \bibinfo{person}{Youlin Xiang}, \bibinfo{person}{Yun Xiong},
  \bibinfo{person}{Haofen Wang}, {and} \bibinfo{person}{Jiawei Zhang}.}
  \bibinfo{year}{2023}\natexlab{}.
\newblock \showarticletitle{Chat-rec: Towards interactive and explainable
  llms-augmented recommender system}.
\newblock \bibinfo{journal}{\emph{arXiv preprint arXiv:2303.14524}}
  (\bibinfo{year}{2023}).
\newblock


\bibitem[Gao et~al\mbox{.}(2021)]%
        {loss:Sampling:Gao:2021}
\bibfield{author}{\bibinfo{person}{Yingbo Gao}, \bibinfo{person}{David Thulke},
  \bibinfo{person}{Alexander Gerstenberger}, \bibinfo{person}{Khoa~Viet Tran},
  \bibinfo{person}{Ralf Schl{\"{u}}ter}, {and} \bibinfo{person}{Hermann Ney}.}
  \bibinfo{year}{2021}\natexlab{}.
\newblock \showarticletitle{On sampling-based training criteria for neural
  language modeling}. In \bibinfo{booktitle}{\emph{Annual Conference of the
  International Speech Communication Association (Interspeech)}}.
  \bibinfo{publisher}{{ISCA}}, \bibinfo{pages}{1877--1881}.
\newblock


\bibitem[Geng et~al\mbox{.}(2022)]%
        {LLM4Rec:P5:Geng:2022}
\bibfield{author}{\bibinfo{person}{Shijie Geng}, \bibinfo{person}{Shuchang
  Liu}, \bibinfo{person}{Zuohui Fu}, \bibinfo{person}{Yingqiang Ge}, {and}
  \bibinfo{person}{Yongfeng Zhang}.} \bibinfo{year}{2022}\natexlab{}.
\newblock \showarticletitle{Recommendation as language processing {(RLP):} {A}
  unified pretrain, personalized prompt {\&} predict paradigm {(P5)}}. In
  \bibinfo{booktitle}{\emph{{ACM} Conference on Recommender Systems (RecSys)}}.
  \bibinfo{publisher}{{ACM}}, \bibinfo{pages}{299--315}.
\newblock


\bibitem[Gong and Zhu(2022)]%
        {gong2022positive}
\bibfield{author}{\bibinfo{person}{Shansan Gong} {and}
  \bibinfo{person}{Kenny~Q. Zhu}.} \bibinfo{year}{2022}\natexlab{}.
\newblock \showarticletitle{Positive, negative and neutral: Modeling implicit
  feedback in session-based news recommendation}. In
  \bibinfo{booktitle}{\emph{International {ACM} {SIGIR} Conference on Research
  and Development in Information Retrieval}}. \bibinfo{publisher}{{ACM}},
  \bibinfo{pages}{1185--1195}.
\newblock


\bibitem[Gutmann and Hyv{\"a}rinen(2010)]%
        {loss:NCE:Gutmann:2010}
\bibfield{author}{\bibinfo{person}{Michael Gutmann} {and} \bibinfo{person}{Aapo
  Hyv{\"a}rinen}.} \bibinfo{year}{2010}\natexlab{}.
\newblock \showarticletitle{Noise-contrastive estimation: A new estimation
  principle for unnormalized statistical models}. In
  \bibinfo{booktitle}{\emph{International Conference on Artificial Intelligence
  and Statistics (AISTATS)}}. JMLR Workshop and Conference Proceedings,
  \bibinfo{pages}{297--304}.
\newblock


\bibitem[He et~al\mbox{.}(2020)]%
        {CF:LightGCN:He:2020}
\bibfield{author}{\bibinfo{person}{Xiangnan He}, \bibinfo{person}{Kuan Deng},
  \bibinfo{person}{Xiang Wang}, \bibinfo{person}{Yan Li},
  \bibinfo{person}{Yong{-}Dong Zhang}, {and} \bibinfo{person}{Meng Wang}.}
  \bibinfo{year}{2020}\natexlab{}.
\newblock \showarticletitle{LightGCN: Simplifying and powering graph
  convolution network for recommendation}. In
  \bibinfo{booktitle}{\emph{International {ACM} {SIGIR} Conference on Research
  and Development in Information Retrieval (SIGIR)}}.
  \bibinfo{publisher}{{ACM}}, \bibinfo{pages}{639--648}.
\newblock


\bibitem[He et~al\mbox{.}(2017)]%
        {CF:NCF:He:2017}
\bibfield{author}{\bibinfo{person}{Xiangnan He}, \bibinfo{person}{Lizi Liao},
  \bibinfo{person}{Hanwang Zhang}, \bibinfo{person}{Liqiang Nie},
  \bibinfo{person}{Xia Hu}, {and} \bibinfo{person}{Tat{-}Seng Chua}.}
  \bibinfo{year}{2017}\natexlab{}.
\newblock \showarticletitle{Neural collaborative filtering}. In
  \bibinfo{booktitle}{\emph{International Conference on World Wide Web (WWW)}}.
  \bibinfo{publisher}{{ACM}}, \bibinfo{pages}{173--182}.
\newblock


\bibitem[Hidasi et~al\mbox{.}(2016)]%
        {Seq:GRU4Rec:Hidasi:2015}
\bibfield{author}{\bibinfo{person}{Bal{\'{a}}zs Hidasi},
  \bibinfo{person}{Alexandros Karatzoglou}, \bibinfo{person}{Linas Baltrunas},
  {and} \bibinfo{person}{Domonkos Tikk}.} \bibinfo{year}{2016}\natexlab{}.
\newblock \showarticletitle{Session-based recommendations with recurrent neural
  networks}. In \bibinfo{booktitle}{\emph{International Conference on Learning
  Representations (ICLR)}}.
\newblock


\bibitem[Hu et~al\mbox{.}(2022)]%
        {LLM:LORA:Hu:2021}
\bibfield{author}{\bibinfo{person}{Edward~J. Hu}, \bibinfo{person}{Yelong
  Shen}, \bibinfo{person}{Phillip Wallis}, \bibinfo{person}{Zeyuan
  Allen{-}Zhu}, \bibinfo{person}{Yuanzhi Li}, \bibinfo{person}{Shean Wang},
  \bibinfo{person}{Lu Wang}, {and} \bibinfo{person}{Weizhu Chen}.}
  \bibinfo{year}{2022}\natexlab{}.
\newblock \showarticletitle{LoRA: Low-rank adaptation of large language
  models}. In \bibinfo{booktitle}{\emph{International Conference on Learning
  Representations (ICLR)}}.
\newblock


\bibitem[Hua et~al\mbox{.}(2023)]%
        {LLM4Rec:CIDIID:Hua2023}
\bibfield{author}{\bibinfo{person}{Wenyue Hua}, \bibinfo{person}{Shuyuan Xu},
  \bibinfo{person}{Yingqiang Ge}, {and} \bibinfo{person}{Yongfeng Zhang}.}
  \bibinfo{year}{2023}\natexlab{}.
\newblock \showarticletitle{How to index item ids for recommendation foundation
  models}. In \bibinfo{booktitle}{\emph{Annual International {ACM} {SIGIR}
  Conference on Research and Development in Information Retrieval in the Asia
  Pacific Region (SIGIR-AP)}}. \bibinfo{publisher}{{ACM}},
  \bibinfo{pages}{195--204}.
\newblock


\bibitem[Jean et~al\mbox{.}(2015)]%
        {loss:IS4LM:Jean:2014}
\bibfield{author}{\bibinfo{person}{S{\'{e}}bastien Jean},
  \bibinfo{person}{KyungHyun Cho}, \bibinfo{person}{Roland Memisevic}, {and}
  \bibinfo{person}{Yoshua Bengio}.} \bibinfo{year}{2015}\natexlab{}.
\newblock \showarticletitle{On using very large target vocabulary for neural
  machine translation}. In \bibinfo{booktitle}{\emph{Association for
  Computational Linguistics (ACL)}}. \bibinfo{publisher}{Association for
  Computer Linguistics}, \bibinfo{pages}{1--10}.
\newblock


\bibitem[Kang and McAuley(2018)]%
        {Seq:SASRec:Kang:2018}
\bibfield{author}{\bibinfo{person}{Wang{-}Cheng Kang} {and}
  \bibinfo{person}{Julian~J. McAuley}.} \bibinfo{year}{2018}\natexlab{}.
\newblock \showarticletitle{Self-attentive sequential recommendation}. In
  \bibinfo{booktitle}{\emph{{IEEE} International Conference on Data Mining
  (ICDM)}}. \bibinfo{publisher}{{IEEE} Computer Society},
  \bibinfo{pages}{197--206}.
\newblock


\bibitem[Kang et~al\mbox{.}(2023)]%
        {LLM4Rec:Prefrence:Kang:2023}
\bibfield{author}{\bibinfo{person}{Wang-Cheng Kang}, \bibinfo{person}{Jianmo
  Ni}, \bibinfo{person}{Nikhil Mehta}, \bibinfo{person}{Maheswaran
  Sathiamoorthy}, \bibinfo{person}{Lichan Hong}, \bibinfo{person}{Ed Chi},
  {and} \bibinfo{person}{Derek~Zhiyuan Cheng}.}
  \bibinfo{year}{2023}\natexlab{}.
\newblock \showarticletitle{Do LLMs understand user preferences? Evaluating
  llms on user rating prediction}.
\newblock \bibinfo{journal}{\emph{arXiv preprint arXiv:2305.06474}}
  (\bibinfo{year}{2023}).
\newblock


\bibitem[Kipf and Welling(2017)]%
        {GNN:GCN:Kipf:2016}
\bibfield{author}{\bibinfo{person}{Thomas~N. Kipf} {and} \bibinfo{person}{Max
  Welling}.} \bibinfo{year}{2017}\natexlab{}.
\newblock \showarticletitle{Semi-Supervised Classification with Graph
  Convolutional Networks}. In \bibinfo{booktitle}{\emph{International
  Conference on Learning Representations (ICLR)}}.
\newblock


\bibitem[Klenitskiy and Vasilev(2023)]%
        {loss:NEG4Rec:Klenitskiy:2023}
\bibfield{author}{\bibinfo{person}{Anton Klenitskiy} {and}
  \bibinfo{person}{Alexey Vasilev}.} \bibinfo{year}{2023}\natexlab{}.
\newblock \showarticletitle{Turning dross into gold loss: is BERT4Rec really
  better than SASRec?}. In \bibinfo{booktitle}{\emph{{ACM} Conference on
  Recommender Systems (RecSys)}}. \bibinfo{publisher}{{ACM}},
  \bibinfo{pages}{1120--1125}.
\newblock


\bibitem[Kudo(2018)]%
        {Subword:Unigram:Kudo:2018}
\bibfield{author}{\bibinfo{person}{Taku Kudo}.}
  \bibinfo{year}{2018}\natexlab{}.
\newblock \showarticletitle{Subword regularization: Improving neural network
  translation models with multiple subword candidates}. In
  \bibinfo{booktitle}{\emph{Annual Meeting of the Association for Computational
  Linguistics (ACL)}}. \bibinfo{publisher}{Association for Computational
  Linguistics}, \bibinfo{pages}{66--75}.
\newblock


\bibitem[Li et~al\mbox{.}(2020)]%
        {li2020time}
\bibfield{author}{\bibinfo{person}{Jiacheng Li}, \bibinfo{person}{Yujie Wang},
  {and} \bibinfo{person}{Julian~J. McAuley}.} \bibinfo{year}{2020}\natexlab{}.
\newblock \showarticletitle{Time interval aware self-attention for sequential
  recommendation}. In \bibinfo{booktitle}{\emph{International Conference on Web
  Search and Data Mining (WSDM)}}. \bibinfo{publisher}{{ACM}},
  \bibinfo{pages}{322--330}.
\newblock


\bibitem[Li et~al\mbox{.}(2023b)]%
        {LLM4Rec:POD:Li2023}
\bibfield{author}{\bibinfo{person}{Lei Li}, \bibinfo{person}{Yongfeng Zhang},
  {and} \bibinfo{person}{Li Chen}.} \bibinfo{year}{2023}\natexlab{b}.
\newblock \showarticletitle{Prompt distillation for efficient llm-based
  recommendation}. In \bibinfo{booktitle}{\emph{{ACM} International Conference
  on Information and Knowledge Management (CIKM)}}. \bibinfo{publisher}{{ACM}},
  \bibinfo{pages}{1348--1357}.
\newblock


\bibitem[Li et~al\mbox{.}(2023a)]%
        {LLM4Rec:E4SRec:Li:2023}
\bibfield{author}{\bibinfo{person}{Xinhang Li}, \bibinfo{person}{Chong Chen},
  \bibinfo{person}{Xiangyu Zhao}, \bibinfo{person}{Yong Zhang}, {and}
  \bibinfo{person}{Chunxiao Xing}.} \bibinfo{year}{2023}\natexlab{a}.
\newblock \showarticletitle{E4SRec: An elegant effective efficient extensible
  solution of large language models for sequential recommendation}.
\newblock \bibinfo{journal}{\emph{arXiv preprint arXiv:2312.02443}}
  (\bibinfo{year}{2023}).
\newblock


\bibitem[Lian et~al\mbox{.}(2020)]%
        {loss:IS4Rec:Lian:2020}
\bibfield{author}{\bibinfo{person}{Defu Lian}, \bibinfo{person}{Qi Liu}, {and}
  \bibinfo{person}{Enhong Chen}.} \bibinfo{year}{2020}\natexlab{}.
\newblock \showarticletitle{Personalized ranking with importance sampling}. In
  \bibinfo{booktitle}{\emph{{ACM} Web Conference (WWW)}}.
  \bibinfo{publisher}{{ACM} / {IW3C2}}, \bibinfo{pages}{1093--1103}.
\newblock


\bibitem[Liao et~al\mbox{.}(2023)]%
        {LLM4Rec:LLaRA:Liao:2023}
\bibfield{author}{\bibinfo{person}{Jiayi Liao}, \bibinfo{person}{Sihang Li},
  \bibinfo{person}{Zhengyi Yang}, \bibinfo{person}{Jiancan Wu},
  \bibinfo{person}{Yancheng Yuan}, \bibinfo{person}{Xiang Wang}, {and}
  \bibinfo{person}{Xiangnan He}.} \bibinfo{year}{2023}\natexlab{}.
\newblock \showarticletitle{LLaRA: Aligning large language models with
  sequential recommenders}.
\newblock \bibinfo{journal}{\emph{arXiv preprint arXiv:2312.02445}}
  (\bibinfo{year}{2023}).
\newblock


\bibitem[Ma and Collins(2018)]%
        {loss:Consistency:Ma2018}
\bibfield{author}{\bibinfo{person}{Zhuang Ma} {and} \bibinfo{person}{Michael
  Collins}.} \bibinfo{year}{2018}\natexlab{}.
\newblock \showarticletitle{Noise contrastive estimation and negative sampling
  for conditional models: Consistency and statistical efficiency}. In
  \bibinfo{booktitle}{\emph{Conference on Empirical Methods in Natural Language
  Processing (EMNLP)}}. \bibinfo{publisher}{Association for Computational
  Linguistics}, \bibinfo{pages}{3698--3707}.
\newblock


\bibitem[Mao et~al\mbox{.}(2021)]%
        {CF:UltraGCN:Mao:2021}
\bibfield{author}{\bibinfo{person}{Kelong Mao}, \bibinfo{person}{Jieming Zhu},
  \bibinfo{person}{Xi Xiao}, \bibinfo{person}{Biao Lu},
  \bibinfo{person}{Zhaowei Wang}, {and} \bibinfo{person}{Xiuqiang He}.}
  \bibinfo{year}{2021}\natexlab{}.
\newblock \showarticletitle{UltraGCN: Ultra simplification of graph
  convolutional networks for recommendation}. In
  \bibinfo{booktitle}{\emph{{ACM} International Conference on Information and
  Knowledge Management (CIKM)}}. \bibinfo{publisher}{{ACM}},
  \bibinfo{pages}{1253--1262}.
\newblock


\bibitem[Mikolov et~al\mbox{.}(2013)]%
        {loss:Word2Vec:Mikolov:2013}
\bibfield{author}{\bibinfo{person}{Tomas Mikolov}, \bibinfo{person}{Ilya
  Sutskever}, \bibinfo{person}{Kai Chen}, \bibinfo{person}{Greg~S Corrado},
  {and} \bibinfo{person}{Jeff Dean}.} \bibinfo{year}{2013}\natexlab{}.
\newblock \showarticletitle{Distributed representations of words and phrases
  and their compositionality}.
\newblock \bibinfo{journal}{\emph{Advances in Neural Information Processing
  Systems (NeurIPS)}}  \bibinfo{volume}{26} (\bibinfo{year}{2013}),
  \bibinfo{pages}{3111--3119}.
\newblock


\bibitem[Mnih and Teh(2012)]%
        {loss:NCE4LM:Mnih:2012}
\bibfield{author}{\bibinfo{person}{Andriy Mnih} {and} \bibinfo{person}{Yee~Whye
  Teh}.} \bibinfo{year}{2012}\natexlab{}.
\newblock \showarticletitle{A fast and simple algorithm for training neural
  probabilistic language models}. In \bibinfo{booktitle}{\emph{International
  Conference on Machine Learning (ICML)}}.
\newblock


\bibitem[{OpenAI}(2023)]%
        {LLM:GPT:Openai:2023}
\bibfield{author}{\bibinfo{person}{{OpenAI}}.} \bibinfo{year}{2023}\natexlab{}.
\newblock \bibinfo{title}{{GPT models documentation}}.
\newblock
  \bibinfo{howpublished}{\url{https://platform.openai.com/docs/models/overview}}.
\newblock


\bibitem[Petrov and MacDonald(2023)]%
        {loss:gBCE:Petrov:2023}
\bibfield{author}{\bibinfo{person}{Aleksandr~Vladimirovich Petrov} {and}
  \bibinfo{person}{Craig MacDonald}.} \bibinfo{year}{2023}\natexlab{}.
\newblock \showarticletitle{gSASRec: Reducing Overconfidence in Sequential
  Recommendation Trained with Negative Sampling}. In
  \bibinfo{booktitle}{\emph{Proceedings of the 17th {ACM} Conference on
  Recommender Systems, RecSys 2023, Singapore, Singapore, September 18-22,
  2023}}. \bibinfo{publisher}{{ACM}}, \bibinfo{pages}{116--128}.
\newblock


\bibitem[Qiu et~al\mbox{.}(2023)]%
        {LLM4Rec:ControlRec:Qiu:2023}
\bibfield{author}{\bibinfo{person}{Junyan Qiu}, \bibinfo{person}{Haitao Wang},
  \bibinfo{person}{Zhaolin Hong}, \bibinfo{person}{Yiping Yang},
  \bibinfo{person}{Qiang Liu}, {and} \bibinfo{person}{Xingxing Wang}.}
  \bibinfo{year}{2023}\natexlab{}.
\newblock \showarticletitle{ControlRec: Bridging the semantic gap between
  language model and personalized recommendation}.
\newblock \bibinfo{journal}{\emph{arXiv preprint arXiv:2311.16441}}
  (\bibinfo{year}{2023}).
\newblock


\bibitem[Raffel et~al\mbox{.}(2020)]%
        {LLM:T5:Raffel:2020}
\bibfield{author}{\bibinfo{person}{Colin Raffel}, \bibinfo{person}{Noam
  Shazeer}, \bibinfo{person}{Adam Roberts}, \bibinfo{person}{Katherine Lee},
  \bibinfo{person}{Sharan Narang}, \bibinfo{person}{Michael Matena},
  \bibinfo{person}{Yanqi Zhou}, \bibinfo{person}{Wei Li}, {and}
  \bibinfo{person}{Peter~J. Liu}.} \bibinfo{year}{2020}\natexlab{}.
\newblock \showarticletitle{Exploring the limits of transfer learning with a
  unified text-to-text transformer}.
\newblock \bibinfo{journal}{\emph{Journal of Machine Learning Research (JMLR)}}
   \bibinfo{volume}{21} (\bibinfo{year}{2020}), \bibinfo{pages}{140:1--140:67}.
\newblock


\bibitem[Rajput et~al\mbox{.}(2023)]%
        {LLM4Rec:Tiger:Rajput:2023}
\bibfield{author}{\bibinfo{person}{Shashank Rajput}, \bibinfo{person}{Nikhil
  Mehta}, \bibinfo{person}{Anima Singh}, \bibinfo{person}{Raghunandan~H
  Keshavan}, \bibinfo{person}{Trung Vu}, \bibinfo{person}{Lukasz Heldt},
  \bibinfo{person}{Lichan Hong}, \bibinfo{person}{Yi Tay},
  \bibinfo{person}{Vinh~Q Tran}, \bibinfo{person}{Jonah Samost},
  {et~al\mbox{.}}} \bibinfo{year}{2023}\natexlab{}.
\newblock \showarticletitle{Recommender systems with generative retrieval}.
\newblock \bibinfo{journal}{\emph{arXiv preprint arXiv:2305.05065}}
  (\bibinfo{year}{2023}).
\newblock


\bibitem[Rendle et~al\mbox{.}(2009)]%
        {loss:BPR:Rendle:2012}
\bibfield{author}{\bibinfo{person}{Steffen Rendle}, \bibinfo{person}{Christoph
  Freudenthaler}, \bibinfo{person}{Zeno Gantner}, {and} \bibinfo{person}{Lars
  Schmidt{-}Thieme}.} \bibinfo{year}{2009}\natexlab{}.
\newblock \showarticletitle{{BPR:} Bayesian personalized ranking from implicit
  feedback}. In \bibinfo{booktitle}{\emph{Conference on Uncertainty in
  Artificial Intelligence (UAI)}}. \bibinfo{publisher}{{AUAI} Press},
  \bibinfo{pages}{452--461}.
\newblock


\bibitem[Robert et~al\mbox{.}(1999)]%
        {robert1999monte}
\bibfield{author}{\bibinfo{person}{Christian~P Robert}, \bibinfo{person}{George
  Casella}, {and} \bibinfo{person}{George Casella}.}
  \bibinfo{year}{1999}\natexlab{}.
\newblock \bibinfo{booktitle}{\emph{Monte Carlo statistical methods}}.
  Vol.~\bibinfo{volume}{2}.
\newblock \bibinfo{publisher}{Springer}.
\newblock


\bibitem[Shani et~al\mbox{.}(2005)]%
        {shani2005mdp}
\bibfield{author}{\bibinfo{person}{Guy Shani}, \bibinfo{person}{David
  Heckerman}, \bibinfo{person}{Ronen~I Brafman}, {and} \bibinfo{person}{Craig
  Boutilier}.} \bibinfo{year}{2005}\natexlab{}.
\newblock \showarticletitle{An mdp-based recommender system}.
\newblock \bibinfo{journal}{\emph{Journal of Machine Learning Research (JMLR)}}
  \bibinfo{volume}{6}, \bibinfo{number}{9} (\bibinfo{year}{2005}),
  \bibinfo{pages}{1265--1295}.
\newblock


\bibitem[Shi et~al\mbox{.}(2023)]%
        {loss:Sampling4Rec:Shi:2023}
\bibfield{author}{\bibinfo{person}{Wentao Shi}, \bibinfo{person}{Jiawei Chen},
  \bibinfo{person}{Fuli Feng}, \bibinfo{person}{Jizhi Zhang},
  \bibinfo{person}{Junkang Wu}, \bibinfo{person}{Chongming Gao}, {and}
  \bibinfo{person}{Xiangnan He}.} \bibinfo{year}{2023}\natexlab{}.
\newblock \showarticletitle{On the theories behind hard negative sampling for
  recommendation}. In \bibinfo{booktitle}{\emph{{ACM} Web Conference (WWW)}}.
  \bibinfo{publisher}{{ACM}}, \bibinfo{pages}{812--822}.
\newblock


\bibitem[Sun et~al\mbox{.}(2019)]%
        {Seq:Bert4Rec:Sun:2019}
\bibfield{author}{\bibinfo{person}{Fei Sun}, \bibinfo{person}{Jun Liu},
  \bibinfo{person}{Jian Wu}, \bibinfo{person}{Changhua Pei},
  \bibinfo{person}{Xiao Lin}, \bibinfo{person}{Wenwu Ou}, {and}
  \bibinfo{person}{Peng Jiang}.} \bibinfo{year}{2019}\natexlab{}.
\newblock \showarticletitle{BERT4Rec: Sequential recommendation with
  bidirectional encoder representations from transformer}. In
  \bibinfo{booktitle}{\emph{{ACM} International Conference on Information and
  Knowledge Management (CIKM)}}. \bibinfo{publisher}{{ACM}},
  \bibinfo{pages}{1441--1450}.
\newblock


\bibitem[Tang and Wang(2018)]%
        {Seq:Caser:Tang:2018}
\bibfield{author}{\bibinfo{person}{Jiaxi Tang} {and} \bibinfo{person}{Ke
  Wang}.} \bibinfo{year}{2018}\natexlab{}.
\newblock \showarticletitle{Personalized top-n sequential recommendation via
  convolutional sequence embedding}. In \bibinfo{booktitle}{\emph{{ACM}
  International Conference on Web Search and Data Mining (WSDM)}}.
  \bibinfo{publisher}{{ACM}}, \bibinfo{pages}{565--573}.
\newblock


\bibitem[Touvron et~al\mbox{.}(2023a)]%
        {LLM:LLama:Touvron:2023}
\bibfield{author}{\bibinfo{person}{Hugo Touvron}, \bibinfo{person}{Thibaut
  Lavril}, \bibinfo{person}{Gautier Izacard}, \bibinfo{person}{Xavier
  Martinet}, \bibinfo{person}{Marie{-}Anne Lachaux},
  \bibinfo{person}{Timoth{\'{e}}e Lacroix}, \bibinfo{person}{Baptiste
  Rozi{\`{e}}re}, \bibinfo{person}{Naman Goyal}, \bibinfo{person}{Eric Hambro},
  \bibinfo{person}{Faisal Azhar}, \bibinfo{person}{Aur{\'{e}}lien Rodriguez},
  \bibinfo{person}{Armand Joulin}, \bibinfo{person}{Edouard Grave}, {and}
  \bibinfo{person}{Guillaume Lample}.} \bibinfo{year}{2023}\natexlab{a}.
\newblock \showarticletitle{LLaMA: Open and efficient foundation language
  models}.
\newblock \bibinfo{journal}{\emph{CoRR}}  \bibinfo{volume}{abs/2302.13971}
  (\bibinfo{year}{2023}).
\newblock


\bibitem[Touvron et~al\mbox{.}(2023b)]%
        {LLM:LLaMA2:Touvron:2023}
\bibfield{author}{\bibinfo{person}{Hugo Touvron}, \bibinfo{person}{Louis
  Martin}, \bibinfo{person}{Kevin Stone}, \bibinfo{person}{Peter Albert},
  \bibinfo{person}{Amjad Almahairi}, \bibinfo{person}{Yasmine Babaei},
  \bibinfo{person}{Nikolay Bashlykov}, {et~al\mbox{.}}}
  \bibinfo{year}{2023}\natexlab{b}.
\newblock \showarticletitle{Llama 2: Open foundation and fine-tuned chat
  models}.
\newblock \bibinfo{journal}{\emph{CoRR}}  \bibinfo{volume}{abs/2307.09288}
  (\bibinfo{year}{2023}).
\newblock


\bibitem[Vaswani et~al\mbox{.}(2017)]%
        {Transformer:Attention:Vaswani:2017}
\bibfield{author}{\bibinfo{person}{Ashish Vaswani}, \bibinfo{person}{Noam
  Shazeer}, \bibinfo{person}{Niki Parmar}, \bibinfo{person}{Jakob Uszkoreit},
  \bibinfo{person}{Llion Jones}, \bibinfo{person}{Aidan~N. Gomez},
  \bibinfo{person}{Lukasz Kaiser}, {and} \bibinfo{person}{Illia Polosukhin}.}
  \bibinfo{year}{2017}\natexlab{}.
\newblock \showarticletitle{Attention is all you need}. In
  \bibinfo{booktitle}{\emph{Advances in Neural Information Processing Systems
  (NeurIPS)}}. \bibinfo{pages}{5998--6008}.
\newblock


\bibitem[Wu et~al\mbox{.}(2023)]%
        {loss:SSM:Wu:2023}
\bibfield{author}{\bibinfo{person}{Jiancan Wu}, \bibinfo{person}{Xiang Wang},
  \bibinfo{person}{Xingyu Gao}, \bibinfo{person}{Jiawei Chen},
  \bibinfo{person}{Hongcheng Fu}, \bibinfo{person}{Tianyu Qiu}, {and}
  \bibinfo{person}{Xiangnan He}.} \bibinfo{year}{2023}\natexlab{}.
\newblock \showarticletitle{On the effectiveness of sampled softmax loss for
  item recommendation}.
\newblock \bibinfo{journal}{\emph{ACM Transactions on Information Systems
  (TOIS)}} (\bibinfo{year}{2023}).
\newblock
\showISSN{1046-8188}
\urldef\tempurl%
\url{https://doi.org/10.1145/3637061}
\showDOI{\tempurl}


\bibitem[Wu et~al\mbox{.}(2024)]%
        {WuWZ24}
\bibfield{author}{\bibinfo{person}{Siyu Wu}, \bibinfo{person}{Jun Wang}, {and}
  \bibinfo{person}{Wei Zhang}.} \bibinfo{year}{2024}\natexlab{}.
\newblock \showarticletitle{Contrastive Personalized Exercise Recommendation
  With Reinforcement Learning}.
\newblock \bibinfo{journal}{\emph{{IEEE} Transactions on Learning Technologies
  (TLT)}}  \bibinfo{volume}{17} (\bibinfo{year}{2024}),
  \bibinfo{pages}{691--703}.
\newblock


\bibitem[Yang et~al\mbox{.}(2020)]%
        {loss:SSM:Yang:2020}
\bibfield{author}{\bibinfo{person}{Ji Yang}, \bibinfo{person}{Xinyang Yi},
  \bibinfo{person}{Derek~Zhiyuan Cheng}, \bibinfo{person}{Lichan Hong},
  \bibinfo{person}{Yang Li}, \bibinfo{person}{Simon~Xiaoming Wang},
  \bibinfo{person}{Taibai Xu}, {and} \bibinfo{person}{Ed~H. Chi}.}
  \bibinfo{year}{2020}\natexlab{}.
\newblock \showarticletitle{Mixed Negative Sampling for Learning Two-tower
  Neural Networks in Recommendations}. In \bibinfo{booktitle}{\emph{{ACM} Web
  Conference (WWW)}}. \bibinfo{publisher}{{ACM} / {IW3C2}},
  \bibinfo{pages}{441--447}.
\newblock


\bibitem[Yi et~al\mbox{.}(2019)]%
        {loss:SSM:Yi:2019}
\bibfield{author}{\bibinfo{person}{Xinyang Yi}, \bibinfo{person}{Ji Yang},
  \bibinfo{person}{Lichan Hong}, \bibinfo{person}{Derek~Zhiyuan Cheng},
  \bibinfo{person}{Lukasz Heldt}, \bibinfo{person}{Aditee Kumthekar},
  \bibinfo{person}{Zhe Zhao}, \bibinfo{person}{Li Wei}, {and}
  \bibinfo{person}{Ed~H. Chi}.} \bibinfo{year}{2019}\natexlab{}.
\newblock \showarticletitle{Sampling-bias-corrected neural modeling for large
  corpus item recommendations}. In \bibinfo{booktitle}{\emph{{ACM} Conference
  on Recommender Systems (RecSys)}}. \bibinfo{publisher}{{ACM}},
  \bibinfo{pages}{269--277}.
\newblock


\bibitem[Yue et~al\mbox{.}(2023)]%
        {LLM4Rec:LLaMARec:Yue2023}
\bibfield{author}{\bibinfo{person}{Zhenrui Yue}, \bibinfo{person}{Sara Rabhi},
  \bibinfo{person}{Gabriel de Souza~Pereira Moreira}, \bibinfo{person}{Dong
  Wang}, {and} \bibinfo{person}{Even Oldridge}.}
  \bibinfo{year}{2023}\natexlab{}.
\newblock \showarticletitle{LlamaRec: Two-stage recommendation using large
  language models for ranking}.
\newblock \bibinfo{journal}{\emph{arXiv preprint arXiv:2311.02089}}
  (\bibinfo{year}{2023}).
\newblock


\bibitem[Zhang et~al\mbox{.}(2023)]%
        {LLM4Rec:InstructRec:Zhang:2023}
\bibfield{author}{\bibinfo{person}{Junjie Zhang}, \bibinfo{person}{Ruobing
  Xie}, \bibinfo{person}{Yupeng Hou}, \bibinfo{person}{Wayne~Xin Zhao},
  \bibinfo{person}{Leyu Lin}, {and} \bibinfo{person}{Ji-Rong Wen}.}
  \bibinfo{year}{2023}\natexlab{}.
\newblock \showarticletitle{Recommendation as instruction following: A large
  language model empowered recommendation approach}.
\newblock \bibinfo{journal}{\emph{arXiv preprint arXiv:2305.07001}}
  (\bibinfo{year}{2023}).
\newblock


\bibitem[Zhou et~al\mbox{.}(2018)]%
        {zhou2018deep}
\bibfield{author}{\bibinfo{person}{Guorui Zhou}, \bibinfo{person}{Xiaoqiang
  Zhu}, \bibinfo{person}{Chenru Song}, \bibinfo{person}{Ying Fan},
  \bibinfo{person}{Han Zhu}, \bibinfo{person}{Xiao Ma},
  \bibinfo{person}{Yanghui Yan}, \bibinfo{person}{Junqi Jin},
  \bibinfo{person}{Han Li}, {and} \bibinfo{person}{Kun Gai}.}
  \bibinfo{year}{2018}\natexlab{}.
\newblock \showarticletitle{Deep interest network for click-through rate
  prediction}. In \bibinfo{booktitle}{\emph{ACM SIGKDD International Conference
  on Knowledge Discovery \& Data Mining (KDD)}}. \bibinfo{pages}{1059--1068}.
\newblock


\bibitem[Zhou et~al\mbox{.}(2022)]%
        {Seq:FMLPRec:Zhou:2022}
\bibfield{author}{\bibinfo{person}{Kun Zhou}, \bibinfo{person}{Hui Yu},
  \bibinfo{person}{Wayne~Xin Zhao}, {and} \bibinfo{person}{Ji{-}Rong Wen}.}
  \bibinfo{year}{2022}\natexlab{}.
\newblock \showarticletitle{Filter-enhanced {MLP} is all you need for
  sequential recommendation}. In \bibinfo{booktitle}{\emph{{ACM} Web Conference
  (WWW)}}. \bibinfo{publisher}{{ACM}}, \bibinfo{pages}{2388--2399}.
\newblock


\end{thebibliography}

\appendix

\tableofcontents

\section{Experimental setup}

\textbf{Baselines.}
Although LLM itself has surprising zero-shot recommendation ability,
there still exist non-negligible gaps \cite{LLM4Rec:Prefrence:Kang:2023,LLM4Rec:TallRec:Bao:2023} unless domain-specific knowledge is injected.
Hence, only LLM-based recommenders enhanced by fine-tuning will be compared in this paper:
\begin{itemize}[leftmargin=*]
    \item \textbf{P5 (CID+IID)}~\cite{LLM4Rec:CIDIID:Hua2023} 
    unifies multiple tasks (\eg, sequential recommendation and rating prediction) into a sequence-to-sequence paradigm.
    The use of collaborative and independent indexing together creates LLM-compatible item IDs.
    \item \textbf{POD}~\cite{LLM4Rec:POD:Li2023} 
    bridges IDs and words by distilling long discrete prompts into a few continuous prompts.
    It also suggests a task-alternated training strategy for efficiency.
    \item \textbf{LlamaRec}~\cite{LLM4Rec:LLaMARec:Yue2023}
    aims to address the slow inference process caused by autoregressive generation.
    Given the candidates retrieved by conventional models, it subsequently reranks them based on the foundation model of Llama2.
    \item \textbf{E4SRec}~\cite{LLM4Rec:E4SRec:Li:2023}
    incorporates ID embeddings
    trained by conventional sequential models through a linear adaptor,
    and applies LORA \cite{LLM:LORA:Hu:2021} for parameter-efficient fine-tuning.
\end{itemize}
Additionally, 
five sequential models, covering MLP, CNN, RNN, and Transformer architectures,
are considered here to
uncover the true capability of conventional methods.
\begin{itemize}[leftmargin=*]
    \item \textbf{GRU4Rec} \cite{Seq:GRU4Rec:Hidasi:2015} 
    applies RNN to recommendation with specific modifications made to cope with data sparsity.
    \item \textbf{Caser} \cite{Seq:Caser:Tang:2018} treats the embedding matrix as an `image',
    and captures local patterns by utilizing conventional filters.
    \item \textbf{SASRec} \cite{Seq:SASRec:Kang:2018} and \textbf{BERT4Rec} \cite{Seq:Bert4Rec:Sun:2019} 
    are two pioneering works equipped with unidirectional and bidirectional self-attention, respectively.
    By their nature, SASRec predicts the next item based on previously interacted items, 
    while BERT4Rec is optimized through a cloze task.
    \item \textbf{FMLP-Rec} \cite{Seq:FMLPRec:Zhou:2022} denoises item sequences in the frequency domain.
    Although FMLP-Rec consists solely of MLPs, it exhibits superior performance compared to Transformer-based models.
\end{itemize}
An overview of the model statistics can be found in Table \ref{table-model-statistics}.
Note that for LlamaRec and E4SRec we employ Llama2-7B instead of Llama2-13B 
as the foundation model for training efficiency.
This results in minor performance differences.

\textbf{Implementation details.}
The implementation of LLM-based recommenders is due to their source code,
while for conventional models the code is available at \url{https://anonymous.4open.science/r/1025}.
Due to the difference in model design, 
it is challenging to conduct evaluations on a completely identical testbed.
Therefore, we follow the routine of previous studies \cite{Seq:SASRec:Kang:2018,Seq:Bert4Rec:Sun:2019},
where the maximum sequence length $L=200$ on MovieLens-1M and $L=50$ on Beauty and Yelp.
In contrast, for LLM-based models, $L=20$ on all three datasets.
Empirically, this difference does not make the conclusion differ.

\textbf{Training strategies.}
For SASRec and BERT4Rec, each item sequence is trained once per epoch.
As a result, only the most recent $L$ historical interactions are accessed during the training process, 
which will negatively impact performance when the sequence lengths are typically longer than $L$.
Some approaches such as LLM-based approaches and other conventional methods 
thus divides each sequence into multiple sub-sequences.

\section{Overview of Loss Function}
\label{section-loss-function}

\textbf{Cross-Entropy (CE)},
also known as the negative log-likelihood (NLL) loss, can be formulated as follows:
\begin{align*}
    \ell_{\text{CE}} 
    &= -\log \frac{\exp(s_{v_+})}{ \sum_{v \in \mathcal{I}} \exp(s_v)}
    = -s_{v_+} + \log \underbrace{\sum_{v \in \mathcal{I}} \exp(s_v)}_{=: Z_{\text{CE}}}.
\end{align*}
This is also the de facto objective commonly used in the pre-training (fine-tuning) of LLMs.

\textbf{Binary Cross-Entropy (BCE).} 
BCE samples one negative item $v_-$ for each target $v_+$,
which necessitates the recommender to possess excellent pointwise scoring capability:
\begin{align*}
    \ell_{\text{BCE}}
    &= -\log \sigma(s_{v_+}) - \log (1 - \sigma(s_{v_-})) \\
    &= -\log \frac{\exp(s_{v_+})}{1 + \exp(s_{v_+})} 
    -\log \frac{1}{1 + \exp(s_{v_-})} \\
    &= -s_{v_+} + \log
    \underbrace{
    \big(
        (1 + \exp(s_{v_+}))
        (1 + \exp(s_{v_-}))
    \big)
    }_{=: Z_{\text{BCE}}}.
\end{align*}
Here $\sigma: \mathbb{R} \rightarrow [0, 1]$ denotes the sigmoid function.

\textbf{Bayesian Personalized Ranking (BPR) \cite{loss:BPR:Rendle:2012}}
also samples one negative item $v_-$ for each target $v_+$,
but it intends to maximize the probability that $v_+$ will be chosen in preference to $v_-$:
\begin{align*}
    \ell_{\text{BPR}} 
    &= -\log \sigma(s_{v+} - s_{v_-}) \\
    &= -\log \frac{\exp(s_{v_+})}{\exp(s_{v_+}) + \exp(s_{v_-})} \\
    &= s_{v_+} + \log
    \underbrace{
    \bigl(
        \exp(s_{v_+}) + \exp(s_{v_-})
    \bigr)
    }_{=: Z_{\text{BPR}}}.
\end{align*}

\textbf{Importance Sampling (IS) \cite{loss:IS:Bengio:2008,loss:IS4LM:Jean:2014}}
is a widely used technique for CE approximation.
It is capable of correcting the approximation error via a proposal distribution $Q$:
\begin{align*}
    \ell_{\text{IS}}
    &= -\log \frac{
        \exp(s_{v_+} -  \log Q(v_+) )
    }{
        \sum_{i=1}^K 
        \exp( s_{v_i} -  \log Q(v_i) )
    } \\
    &= -\log \frac{
        \exp( s_{v_+})
    }{
        \sum_{i=1}^K 
        \exp \bigl( s_{v_i} -  \log Q(v_i) + \log Q(v_+) \bigr)
    } \\
    &= -s_{v_+} + \log
    \underbrace{\sum_{i=1}^K \exp \bigl( s_{v_i} -  \log Q(v_i) + \log Q(v_+) \bigr)}_{=: Z_{\text{IS}}}.
\end{align*}
In addition to uniform distribution, 
the distribution derived from popularity metrics \cite{loss:IS4Rec:Lian:2020} is also a commonly used choice.

\textbf{Noise Contrastive Estimation (NCE) \cite{loss:NCE:Gutmann:2010,loss:NCE4LM:Mnih:2012}}
requires the model to discriminate the target $v_+$ from an easy-to-sample noise distribution:
\begin{align*}
    \ell_{\text{NCE}}
    &= -\log \sigma(s_{v_+}') - \sum_{i=1}^K \log (1 - \sigma(s_{v_i}')) \\
    &= -\log \frac{\exp(s_{v_+}')}{1 + \exp(s_{v_+}')} - \sum_{i=1}^K \log \frac{1}{1 + \exp(s_{v_i}')} \\
    &= -s_{v_+}' + \log 
    \underbrace{
    \biggl((1 + \exp(s_{v_+}')) \prod_{i=1}^K (1 + \exp(s_{v_i}') \biggr)
    }_{=: Z_{\text{NCE}}}.
\end{align*}
In the case of uniform sampling, $s_{v}' = s_v - c - \log \frac{K}{|\mathcal{I}|}$,
where $c$ is a trainable parameter as an estimate of $\log Z_{\text{CE}}$.

\textbf{NEGative sampling (NEG) \cite{loss:Word2Vec:Mikolov:2013}} is a special case of NCE by fixing $c = \log \frac{|\mathcal{I}|}{K}$:
\begin{align*}
    \ell_{\text{NEG}}
    &= -\log \sigma(s_{v_+}) - \sum_{i=1}^K \log (1 - \sigma(s_{v_i})) \\
    &= -\log \frac{\exp(s_{v_+})}{1 + \exp(s_{v_+})} - \sum_{i=1}^K \log \frac{1}{1 + \exp(s_{v_i})} \\
    &= -s_{v_+} + \log 
    \underbrace{
    \biggl((1 + \exp(s_{v_+})) \prod_{i=1}^K (1 + \exp(s_{v_i}) \biggr)
    }_{=: Z_{\text{NEG}}}.
\end{align*}

\section{Proofs}
\label{section-proofs}

This section completes the proofs regarding the connection between the aforementioned loss functions and ranking metrics.
Before delving into the proofs, 
it is important to note that in this paper
we focus on the metrics specifically for next-item recommendation 
as it is most consistent with LLM's next-token generation feature.

\subsection{Proof of Proposition \ref{proposition-bounding}}

\begin{proposition}
    For a target item $v_+$ which is ranked as $r_+$, the following inequality holds true for any $n \ge r_+$
    \begin{equation}
        -\log \mathrm{NDCG}(r_+) \le \ell_{\mathrm{CE}\text{-}n},
    \end{equation}
    where
    \begin{equation}
        \ell_{\mathrm{CE}\text{-}n} := s_{v_+} + \log \sum_{r(v) \le n} \exp(s_{v}).
    \end{equation}
\end{proposition}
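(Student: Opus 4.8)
The plan is to unpack the definition of $\ell_{\mathrm{CE}\text{-}n}$ and bound the normalizing sum from below by keeping only the contributions of items that outrank (or tie) the target. First I would write out
\begin{equation*}
    \ell_{\mathrm{CE}\text{-}n} = s_{v_+} + \log \sum_{r(v) \le n} \exp(s_v),
\end{equation*}
and observe that since $n \ge r_+$, every item $v$ with $r(v) \le r_+$ also satisfies $r(v) \le n$, so it appears in the sum. Hence
\begin{equation*}
    \sum_{r(v) \le n} \exp(s_v) \ge \sum_{r(v) \le r_+} \exp(s_v).
\end{equation*}
By the definition $r_+ = |\{v \in \mathcal{I}: s_v \ge s_{v_+}\}|$, there are exactly $r_+$ items (including $v_+$ itself) whose score is at least $s_{v_+}$, and these are precisely the items with $r(v) \le r_+$; each such item contributes $\exp(s_v) \ge \exp(s_{v_+})$ to the sum. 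Therefore the sum is at least $r_+ \exp(s_{v_+})$.

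Next I would combine these two inequalities to get
\begin{equation*}
    \ell_{\mathrm{CE}\text{-}n} \ge s_{v_+} + \log\bigl(r_+ \exp(s_{v_+})\bigr) = s_{v_+} + \log r_+ - s_{v_+} = \log r_+.
\end{equation*}
Wait — I should be careful with the sign: the stated loss is $\ell_{\mathrm{CE}\text{-}n} = s_{v_+} + \log(\cdots)$ with a $+s_{v_+}$, which only makes sense as a loss if one reads $-s_{v_+}$ there (as in Eq.~\eqref{eq-loss-unified}); assuming the intended form $-s_{v_+} + \log\sum \exp(s_v)$, the bound becomes $\ell_{\mathrm{CE}\text{-}n} \ge -s_{v_+} + \log r_+ + s_{v_+} = \log r_+$. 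Finally I would relate $\log r_+$ to the NDCG term: since $\mathrm{NDCG}(r_+) = 1/\log_2(1+r_+)$, we have $-\log \mathrm{NDCG}(r_+) = \log \log_2(1+r_+)$, and because $\log_2(1+r_+) \le r_+$ for all integers $r_+ \ge 1$, monotonicity of $\log$ gives $-\log \mathrm{NDCG}(r_+) = \log\log_2(1+r_+) \le \log r_+ \le \ell_{\mathrm{CE}\text{-}n}$, which is the claim.

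The only genuinely delicate point is the elementary inequality $\log_2(1 + r_+) \le r_+$ for positive integers (equivalently $1 + r_+ \le 2^{r_+}$), which holds with equality at $r_+ = 1$ and strictly thereafter by induction; everything else is bookkeeping about which items fall inside the truncated sum. I would also flag the apparent sign convention on $s_{v_+}$ so the chain of equalities is unambiguous. No step here should require more than a line or two once the truncation argument is set up.
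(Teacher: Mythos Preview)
Your proof is correct and follows essentially the same route as the paper's. Both arguments rest on the same two elementary facts: (i) the truncated normalizer contains at least $r_+$ terms, each of size $\ge \exp(s_{v_+})$, and (ii) the inequality $\log_2(1+r_+) \le r_+$ for $r_+ \ge 1$. The paper works from $\mathrm{NDCG}(r_+) \ge 1/r_+ \ge \exp(s_{v_+})/\sum_{r(v)\le n}\exp(s_v)$ and then takes $-\log$, whereas you bound $\ell_{\mathrm{CE}\text{-}n} \ge \log r_+$ first and then compare to $-\log\mathrm{NDCG}(r_+)$; these are the same chain of inequalities read in opposite directions. Your flagging of the sign convention on $s_{v_+}$ is also apt: the paper's appendix statement carries the same typo, and the intended form is indeed $-s_{v_+} + \log(\cdots)$.
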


\begin{proof}

Notice that $\log_2 (1 + x) \le x$ holds true for any $x \ge 1$. 
Hence, we have
\begin{align*}
    &\text{NDCG}(r_+) = \frac{1}{\log_2 (1 + r_+)} \\
    \ge & \frac{1}{r_+} = \frac{1}{1 + \sum_{v \not = v_+} \delta(s_{v} > s_{v_+})}
    = \frac{1}{1 + \sum_{r(v) < r_+} \delta(s_{v} > s_{v_+})} \\
    \ge & \frac{1}{1 + \sum_{r_v < r_+} \exp(s_v - s_{v_+})} 
    = \frac{\exp(s_{v_+})}{\exp(s_{v_+}) + \sum_{r(v) < r_+} \exp(s_v)} \\
    \ge & \frac{\exp(s_{v_+})}{\sum_{r(v) \le n} \exp(s_v)},
\end{align*}
where $\delta(\text{condition}) = 1$ if the given condition is true else 0,
and the second-to-last inequality holds because $\exp(s_v - s_{v_+}) \ge 1$ when $s_v > s_{v_+}$.

\end{proof}

\subsection{Proof of Theorem \ref{theorem-nce-neg-bounds} and \ref{theorem-sce-bounds}}

First, let us introduce some lemmas that give lower bounds of the loss functions.

\begin{lemma}
    \label{lemma-bound-nce}
    Let $\xi_1$ be the number of sampled items with non-negative corrected scores; that is,
    \begin{equation}
        \xi_1 = 
        \bigl|
            \{
                v_i: s_{v_i}' \ge 0, \: i=1,2,\ldots, K
            \}
        \bigr|.
    \end{equation}
    Then, we have
    \begin{equation}
        \ell_{\mathrm{NCE}} \ge \xi_1 \log 2.
    \end{equation}
\end{lemma}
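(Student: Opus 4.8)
\textbf{Proof plan for Lemma \ref{lemma-bound-nce}.}
The plan is to expand $\ell_{\mathrm{NCE}}$ using the closed form for the normalizing term $Z_{\mathrm{NCE}}$ recorded in Appendix \ref{section-loss-function}, and then discard all the factors in the product that can only help. Writing
\begin{equation*}
    \ell_{\mathrm{NCE}} = -s_{v_+}' + \log\bigl(1 + \exp(s_{v_+}')\bigr) + \sum_{i=1}^K \log\bigl(1 + \exp(s_{v_i}')\bigr),
\end{equation*}
I would first observe that the term $-s_{v_+}' + \log(1 + \exp(s_{v_+}')) = \log(1 + \exp(-s_{v_+}')) > 0$, so it can be dropped from below. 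That leaves $\ell_{\mathrm{NCE}} \ge \sum_{i=1}^K \log(1 + \exp(s_{v_i}'))$.

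The key step is then a termwise lower bound on each factor in that sum. Since $\log(1 + \exp(x)) \ge 0$ for every real $x$, every summand is nonnegative, so I can throw away all summands with $s_{v_i}' < 0$ and keep only those with $s_{v_i}' \ge 0$. For the retained indices, monotonicity of $\log(1+\exp(\cdot))$ gives $\log(1 + \exp(s_{v_i}')) \ge \log(1 + \exp(0)) = \log 2$. Summing these $\xi_1$ retained lower bounds yields $\ell_{\mathrm{NCE}} \ge \xi_1 \log 2$, which is exactly the claim.

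I do not anticipate a genuine obstacle here; the only things to be careful about are (i) confirming that the $v_+$-contribution is really of the form $\log(1 + \exp(-s_{v_+}'))$ and hence nonnegative regardless of the sign of $s_{v_+}'$, and (ii) making sure the definition of $\xi_1$ in the statement matches the set of indices over which the $\log 2$ bound is applied — both follow directly from the definitions. This lemma is presumably the workhorse that feeds into the probabilistic statement of Theorem \ref{theorem-nce-neg-bounds}: once one knows $\ell_{\mathrm{NCE}} \ge \xi_1 \log 2$, one needs $\xi_1 \log 2 \ge -\log \mathrm{NDCG}(r_+) = \log_2(1 + r_+)$, i.e.\ $\xi_1$ large enough, and a union/tail bound on the hypergeometric-type count $\xi_1$ under uniform sampling then produces the probability in Eq.~\eqref{eq-nce-neg-probs}. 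But that combinatorial estimate belongs to the proof of the theorem, not to this lemma.
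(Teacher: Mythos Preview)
Your proof is correct and follows essentially the same approach as the paper: rewrite the $v_+$ contribution as $\log(1+\exp(-s_{v_+}'))\ge 0$, drop it, and then bound each remaining summand $\log(1+\exp(s_{v_i}'))$ below by $\delta(s_{v_i}'\ge 0)\log 2$. The paper carries this out as a single chain of inequalities inside a product, but the logic is identical.
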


\begin{proof}
    According to the definition of NCE, it follows that
    \begin{align*}
        \ell_{\text{NCE}}
        &= -s_{v_+}' + \log \biggl((1 + \exp(s_{v_+}')) \prod_{i=1}^K (1 + \exp(s_{v_i}') \biggr) \\
        &= \log \biggl((1 + \exp(-s_{v_+}')) \prod_{i=1}^K (1 + \exp(s_{v_i}') \biggr) \\
        &\ge \log \biggl(\prod_{i=1}^K (1 + \exp(s_{v_i}') \biggr) = \sum_{i=1}^K \log (1 + \exp(s_{v_i}')) \\
        &\ge \sum_{i=1}^K \delta(s_{v_i}' \ge 0) \log 2 = \xi_1 \log 2.
    \end{align*}
\end{proof}

\begin{lemma}
    \label{lemma-bound-neg}
    Let $\xi_2$ be the number of sampled items with non-negative scores; that is,
    \begin{equation}
        \xi_2 = 
        \bigl|
            \{
                v_i: s_{v_i} \ge 0, \: i=1,2,\ldots, K
            \}
        \bigr|.
    \end{equation}
    Then, we have
    \begin{equation}
        \ell_{\mathrm{NEG}} \ge \xi_2 \log 2.
    \end{equation}
\end{lemma}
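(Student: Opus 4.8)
The plan is to mirror the proof of Lemma~\ref{lemma-bound-nce} almost verbatim, since $\ell_{\mathrm{NEG}}$ is obtained from $\ell_{\mathrm{NCE}}$ by the formal substitution $s'_v \mapsto s_v$ (equivalently, by fixing $c = \log(|\mathcal{I}|/K)$). So $\xi_2$ here plays exactly the role that $\xi_1$ played there, and the same chain of inequalities should carry through with the primes removed.

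First I would expand $\ell_{\mathrm{NEG}}$ using its definition from the preliminaries / Table~\ref{table-summary-loss}: $\ell_{\mathrm{NEG}} = -s_{v_+} + \log\bigl((1+\exp(s_{v_+}))\prod_{i=1}^K(1+\exp(s_{v_i}))\bigr)$. Then I would pull the $-s_{v_+}$ inside the first factor, writing $-s_{v_+} + \log(1+\exp(s_{v_+})) = \log\bigl(\exp(-s_{v_+})(1+\exp(s_{v_+}))\bigr) = \log(1+\exp(-s_{v_+}))$, which is nonnegative. Next I would drop this nonnegative term and the ``$1+$'' inside each remaining factor is handled by noting $\log(1+\exp(s_{v_i})) \ge 0$ always and $\log(1+\exp(s_{v_i})) \ge \log 2$ whenever $s_{v_i} \ge 0$. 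Summing over the $K$ sampled items and counting only those with non-negative score gives $\sum_{i=1}^K \log(1+\exp(s_{v_i})) \ge \sum_{i=1}^K \delta(s_{v_i}\ge 0)\log 2 = \xi_2\log 2$, which is the claim.

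Concretely, the display would run:
\begin{align*}
    \ell_{\mathrm{NEG}}
    &= -s_{v_+} + \log \biggl((1 + \exp(s_{v_+})) \prod_{i=1}^K (1 + \exp(s_{v_i})) \biggr) \\
    &= \log \biggl((1 + \exp(-s_{v_+})) \prod_{i=1}^K (1 + \exp(s_{v_i})) \biggr) \\
    &\ge \log \biggl(\prod_{i=1}^K (1 + \exp(s_{v_i})) \biggr) = \sum_{i=1}^K \log (1 + \exp(s_{v_i})) \\
    &\ge \sum_{i=1}^K \delta(s_{v_i} \ge 0) \log 2 = \xi_2 \log 2.
\end{align*}

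There is essentially no obstacle here — the lemma is routine and its only purpose is to feed into the probabilistic argument of Theorem~\ref{theorem-nce-neg-bounds}, where one combines it with the observation that if at least one of the $\lfloor K/m\rfloor$-sized blocks of samples lands in $\mathcal{S}_+$ then $\xi_2 \ge 1$, hence $\ell_{\mathrm{NEG}} \ge \log 2$, and then bootstraps to $\ell_{\mathrm{NEG}} \ge m\log 2 \ge \log\log_2(1+r_+) = -\log\mathrm{NDCG}(r_+)$ when all $m$ blocks hit $\mathcal{S}_+$ (using $r_+ \le 2^{2^m}-1$). The only point requiring a hair of care is making sure the first line really does telescope to $\log(1+\exp(-s_{v_+}))$ and that this term is genuinely nonnegative, which it is since $1 + \exp(-s_{v_+}) \ge 1$.
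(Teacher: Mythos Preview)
Your proposal is correct and matches the paper's approach exactly: the paper simply states that the proof is ``completely the same as that of Lemma~\ref{lemma-bound-nce}'', and your displayed chain of inequalities is precisely that argument with the primes removed.
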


\begin{proof}
    The proof is completely the same as that of Lemma \ref{lemma-bound-nce}.
\end{proof}

\begin{lemma}
    \label{lemma-bound-scaled}
    Let $\xi_3$ be the number of sampled items with scores not lower than that of the target; that is
    \begin{equation}
        \xi_3 = 
        \bigl|
            \{
                v_i: s_{v_i} \ge s_{v_+}, \: i=1,2,\ldots, K
            \}
        \bigr|.
    \end{equation}
    Then, we have
    \begin{align}
        \ell_{\mathrm{SCE}} \ge \log (1 + \alpha \xi_3).
    \end{align}
\end{lemma}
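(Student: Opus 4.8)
The plan is to mimic the argument of Lemma \ref{lemma-bound-nce}: lower-bound the normalizing term $\hat{Z}(\alpha)$ by retaining only those sampled items whose scores are at least $s_{v_+}$, and then use the exponential function to translate a count into a multiplicative factor. Starting from the definition
\begin{equation*}
\ell_{\text{SCE}} = -s_{v_+} + \log\Bigl( \exp(s_{v_+}) + \alpha \sum_{i=1}^K \exp(s_{v_i}) \Bigr)
= \log\Bigl( 1 + \alpha \sum_{i=1}^K \exp(s_{v_i} - s_{v_+}) \Bigr),
\end{equation*}
I would first discard all summands with $s_{v_i} < s_{v_+}$ (they are nonnegative, so dropping them only decreases the sum), leaving $\log\bigl(1 + \alpha \sum_{i : s_{v_i} \ge s_{v_+}} \exp(s_{v_i} - s_{v_+})\bigr)$. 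Then, since $\exp(s_{v_i} - s_{v_+}) \ge 1$ whenever $s_{v_i} \ge s_{v_+}$, each of the $\xi_3$ retained terms contributes at least $1$, so the sum is at least $\xi_3$. Monotonicity of $\log$ then gives $\ell_{\text{SCE}} \ge \log(1 + \alpha \xi_3)$, as claimed.

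The key steps, in order: (i) rewrite $\ell_{\text{SCE}}$ by factoring $\exp(s_{v_+})$ out of the $\log$, exactly as in the Appendix's reformulation of the losses; (ii) drop the negative-gap terms from the sum using nonnegativity of $\exp$; (iii) bound each of the remaining $\xi_3$ terms below by $1$ using $s_{v_i} \ge s_{v_+} \Rightarrow \exp(s_{v_i} - s_{v_+}) \ge 1$; (iv) apply monotonicity of $\log$. This is genuinely the same skeleton as Lemma \ref{lemma-bound-nce}, with the only wrinkle being that the threshold is $s_{v_+}$ rather than $0$ — which is natural, since $\xi_3$ (unlike $\xi_1$ and $\xi_2$) is defined relative to the target score and directly measures how many sampled items currently "beat" the target, and is therefore tied to $r_+$.

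I do not anticipate a real obstacle here; the computation is short and self-contained. The only point requiring mild care is keeping the $+1$ inside the logarithm (coming from the target term $\exp(s_{v_+})$ that SCE always retains in its normalizing term) — it must not be dropped, since it is precisely what guarantees the bound is nonvacuous even when $\xi_3 = 0$. Everything else is routine, and the lemma is clearly a stepping stone toward Theorem \ref{theorem-sce-bounds}: combining $\ell_{\text{SCE}} \ge \log(1 + \alpha\xi_3)$ with a tail bound on $\xi_3$ (the number of "beating" items among $K$ uniform samples, whose mean is $K(r_+ - 1)/|\mathcal{I}|$) should, after the same dyadic splitting into $2^m$ blocks used for NCE/NEG, yield the stated probability $1 - \frac{1}{\alpha} 2^m (1 - r_+/|\mathcal{I}|)^{\lfloor \alpha K/2^m\rfloor}$.
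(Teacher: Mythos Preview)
Your proposal is correct and follows essentially the same route as the paper: rewrite $\ell_{\text{SCE}} = \log\bigl(1 + \alpha\sum_{i=1}^K \exp(s_{v_i}-s_{v_+})\bigr)$, then lower-bound each summand by the indicator $\delta(s_{v_i}\ge s_{v_+})$ (your two-step ``drop-then-bound'' is just an unpacking of this single inequality), and sum to get $\log(1+\alpha\xi_3)$. Nothing is missing; your remark about retaining the $+1$ and the connection to Theorem~\ref{theorem-sce-bounds} is also on point.
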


\begin{proof}
    According to the definition of SCE, it follows that
    \begin{align*}
        \ell_{\text{SCE}}
        &= -s_{v_+} + \log 
        \bigl(
            \exp(s_{v_+}) + \alpha \sum_{i=1}^K \exp(s_{v_i})
        \bigr) \\
        &= \log 
        \bigl(
            1 + \alpha \sum_{i=1}^K \exp(s_{v_i} - s_{v_+})
        \bigr) \\
        &\ge \log 
        \bigl(
            1 + \alpha \sum_{i=1}^K \delta(s_{v_i} \ge s_{v_+}).
        \bigr) \\
        &= \log (1 + \alpha \xi_3).
    \end{align*}
\end{proof}

\begin{lemma}
    \label{lemma-bound-IS}
    Let $\xi_4$ be the number of sampled items with scores higher than that of the target; that is
    \begin{equation}
        \xi_4 = 
        \bigl|
            \{
                v_i: s_{v_i} \ge s_{v_+}, \: i=1,2,\ldots, K
            \}
        \bigr|.
    \end{equation}
    Then, we have\footnote{
        Assume that
        \begin{equation*}
            \log 0 := \lim_{\epsilon \rightarrow 0} \log \epsilon = -\infty.
        \end{equation*}
    }
    \begin{align}
        \ell_{\mathrm{IS}} \ge \log (\xi_4)
    \end{align}
    if the proposal distribution $Q(v) \equiv 1 / |\mathcal{I}|$.
\end{lemma}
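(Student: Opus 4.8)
The plan is to prove Lemma~\ref{lemma-bound-IS} by the same computation used for Lemma~\ref{lemma-bound-scaled}, simplified by the hypothesis $Q(v) \equiv 1/|\mathcal{I}|$. First I would substitute $Q(v_i) = Q(v_+) = 1/|\mathcal{I}|$ into the definition of $\ell_{\text{IS}}$, so that all the $\log Q$ terms cancel and the normalizing term collapses to $\sum_{i=1}^K \exp(s_{v_i})$. Then I would write
\begin{equation*}
    \ell_{\text{IS}} = -s_{v_+} + \log \sum_{i=1}^K \exp(s_{v_i}) = \log \sum_{i=1}^K \exp(s_{v_i} - s_{v_+}).
\end{equation*}

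Next I would lower-bound the summand $\exp(s_{v_i} - s_{v_+})$ by $\delta(s_{v_i} \ge s_{v_+})$, exactly as in the proof of Lemma~\ref{lemma-bound-scaled}, since $\exp(s_{v_i} - s_{v_+}) \ge 1$ whenever $s_{v_i} \ge s_{v_+}$ and the exponential is always nonnegative otherwise. This yields
\begin{equation*}
    \ell_{\text{IS}} \ge \log \sum_{i=1}^K \delta(s_{v_i} \ge s_{v_+}) = \log(\xi_4),
\end{equation*}
which is the claimed inequality. The degenerate case where no sampled item satisfies $s_{v_i} \ge s_{v_+}$ gives $\log 0 = -\infty$ by the stated convention, so the bound holds vacuously there.

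The only mild subtlety — and the step most worth flagging — is the absence of the ``$+1$'' that appeared in Lemma~\ref{lemma-bound-scaled}: the importance-sampling normalizing term $Z_{\text{IS}}$ does \emph{not} include a standalone $\exp(s_{v_+})$ target term (it only sums over the $K$ sampled items $v_i \sim Q$), which is precisely the structural difference between IS and SCE noted in the main text. This is why the bound is $\log(\xi_4)$ rather than $\log(1 + \xi_4)$, and why we must adopt the $\log 0 := -\infty$ convention to handle $\xi_4 = 0$. I do not anticipate any real obstacle here; the lemma is a direct specialization, and the work is essentially one line of algebra plus the same indicator-function estimate already used twice above.
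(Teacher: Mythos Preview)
Your proposal is correct and essentially identical to the paper's own proof: substitute the uniform $Q$ so the correction terms cancel, rewrite $\ell_{\text{IS}}$ as $\log\sum_{i=1}^K \exp(s_{v_i}-s_{v_+})$, and lower-bound each summand by the indicator $\delta(s_{v_i}\ge s_{v_+})$ to obtain $\log(\xi_4)$. Your added remarks on the missing ``$+1$'' and the $\log 0$ convention are accurate and go slightly beyond what the paper spells out.
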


\begin{proof}
    According to the definition of importance sampling, it follows that
    \begin{align*}
        \ell_{\text{IS}}
        &= -s_{v_+} + \log 
        \bigl(
            \sum_{i=1}^K \exp(s_{v_i})
        \bigr) \\
        &= \log 
        \bigl(
            \sum_{i=1}^K \exp(s_{v_i} - s_{v_+})
        \bigr) \\
        &\ge \log 
        \bigl(
            \sum_{i=1}^K \delta(s_{v_i} \ge s_{v_+}).
        \bigr) \\
        &= \log (\xi_4).
    \end{align*}
\end{proof}

\begin{lemma}
    \label{lemma-binomial-bound}
    Let $\xi \sim \mathcal{B}(K, p)$ denote a random variable representing the number of successes over $K$ binomial trials with a probability of $p$.
    Then, we have
    \begin{align}
        \mathbb{P}(\xi \ge m) \ge 1 - m (1 - p)^{\lfloor K / m \rfloor}, \quad \forall m=0, 1, \ldots, K.
    \end{align}
\end{lemma}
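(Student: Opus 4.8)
\textbf{Proof proposal for Lemma \ref{lemma-binomial-bound}.}

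The plan is to bound the probability that $\xi < m$, i.e.\ that there are at most $m-1$ successes among the $K$ trials, by partitioning the $K$ trials into $m$ disjoint blocks and observing that the event $\{\xi \le m-1\}$ forces at least one block to contain no success at all. Concretely, I would split the index set $\{1,\dots,K\}$ into $m$ consecutive blocks, the first $m$ of them each of size $\lfloor K/m \rfloor$ (discarding any leftover trials, which can only help). If every one of these $m$ blocks contained at least one success, then $\xi \ge m$; hence $\{\xi \le m-1\}$ implies that at least one block is success-free. By the union bound, $\mathbb{P}(\xi \le m-1) \le \sum_{j=1}^m \mathbb{P}(\text{block } j \text{ has no success})$. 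Since the trials are i.i.d.\ Bernoulli$(p)$ and each block has $\lfloor K/m \rfloor$ trials, each such probability is exactly $(1-p)^{\lfloor K/m \rfloor}$, giving $\mathbb{P}(\xi \le m-1) \le m(1-p)^{\lfloor K/m \rfloor}$, and therefore $\mathbb{P}(\xi \ge m) \ge 1 - m(1-p)^{\lfloor K/m \rfloor}$.

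I would also dispose of the trivial edge cases up front: when $m = 0$ the bound reads $\mathbb{P}(\xi \ge 0) \ge 1$, which holds trivially; and if $m(1-p)^{\lfloor K/m \rfloor} \ge 1$ the inequality is vacuous, so we may assume the right-hand side is positive. The slightly delicate point is just the bookkeeping that $m\lfloor K/m\rfloor \le K$, so the $m$ blocks of size $\lfloor K/m\rfloor$ genuinely fit inside the $K$ available trials and are disjoint — this is where the floor function enters, and it is the reason the exponent is $\lfloor K/m \rfloor$ rather than $K/m$.

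The main (and really only) obstacle is recognizing the right combinatorial decomposition: a naive approach via the binomial tail $\mathbb{P}(\xi \le m-1) = \sum_{i=0}^{m-1}\binom{K}{i}p^i(1-p)^{K-i}$ does not obviously collapse to the clean closed form, whereas the block/pigeonhole argument makes the dependence on independence explicit and yields the stated bound in one line. Once the decomposition is in place, everything else is the union bound plus independence, so no further calculation is needed.
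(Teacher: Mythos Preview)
Your proposal is correct and follows essentially the same argument as the paper: partition the $K$ trials into $m$ disjoint blocks of size at least $\lfloor K/m \rfloor$, use the pigeonhole observation that $\{\xi < m\}$ forces some block to be success-free, and apply the union bound together with independence. The only cosmetic difference is that the paper keeps all $K$ trials (so blocks have size \emph{at least} $\lfloor K/m\rfloor$) whereas you discard the remainder; both variants yield the same bound.
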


\begin{proof}\footnote{The proof follows from the response posted at the URL: \url{https://math.stackexchange.com/questions/3626472/upper-bound-on-binomial-distribution}.}
    Divide the $K$ independent binomial trials into $m$ disjoint groups, 
    each containing at least $\lfloor K/m \rfloor$ trials.
    If $\xi < m$, then one of the groups must have no successes observed; formally, we have
    \begin{align}
        \mathbb{P}(\xi < m)
        &\le \mathbb{P}(
            \bigcup_{i=1}^m \{ \text{no successes observed in group } i \} 
        ) \\
        &\le \sum_{i=1}^m \mathbb{P}(
            \{ \text{no successes observed in group } i \} 
        ) \\ 
        &\le m (1 - p)^{\lfloor K/m \rfloor}.
    \end{align}
    Hence, the proof is completed by noting the fact that
    \begin{align}
        \mathbb{P}(\xi \ge m) = 1 - \mathbb{P}(\xi < m).
    \end{align}
\end{proof}

\begin{theorem}
    \label{theorem-bounds-ndcg}
    Let $v_+$ be a target item which is ranked as $r_+ \le 2^{2^m} - 1$ for some $m \in \mathbb{N}$,
    and
    \begin{align*}
        \mathcal{S}_+ := \{v \in \mathcal{I}: s_{v} \ge 0\},
        \quad 
        \mathcal{S}_+' := \{v \in \mathcal{I}: s_{v}' \ge 0\}.
    \end{align*}
    If we uniformly sample $K$ items for training, then with a probability of at least
    \begin{equation}
        \left \{
        \begin{array}{ll}
            1 - m\big(1 - |\mathcal{S}_+'| / |\mathcal{I}| \big)^{\lfloor K/m \rfloor}, & \text{ if } \ell_* = \ell_{\mathrm{NCE}} \\
            1 - m\big(1 - |\mathcal{S}_+| / |\mathcal{I}| \big)^{\lfloor K/m \rfloor}, & \text{ if } \ell_* = \ell_{\mathrm{NEG}} \\
            1 -\frac{1}{\alpha}2^m\big(1 - r_+ / |\mathcal{I}| \big)^{\lfloor \alpha K / 2^m \rfloor}, & \text{ if } \ell_* = \ell_{\mathrm{SCE}} \\
            1 -2^m\big(1 - r_+ / |\mathcal{I}| \big)^{\lfloor K / 2^m \rfloor}, & \text{ if } \ell_* = \ell_{\mathrm{IS}} \\
        \end{array}
        \right .,
    \end{equation}
    we have
    \begin{equation}
        \label{eq-bound-required}
        -\log \mathrm{NDCG}(r_+) \le \ell_*.
    \end{equation}
\end{theorem}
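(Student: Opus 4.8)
The plan is to prove all four cases with one common recipe, noting that this statement merely combines Theorems~\ref{theorem-nce-neg-bounds} and \ref{theorem-sce-bounds} with the analogous claim for $\ell_{\mathrm{IS}}$, and that Lemmas~\ref{lemma-bound-nce}--\ref{lemma-binomial-bound} already do essentially all of the work. The recipe has two steps. First, use the deterministic lower bounds of Lemmas~\ref{lemma-bound-nce}--\ref{lemma-bound-IS} to reduce the event $-\log\mathrm{NDCG}(r_+) \le \ell_*$ to a lower threshold on the number $\xi$ of ``bad'' sampled items. Second, use the binomial tail estimate of Lemma~\ref{lemma-binomial-bound} to lower bound the probability that $\xi$ clears that threshold. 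Throughout I will use $-\log\mathrm{NDCG}(r_+) = \log\log_2(1+r_+)$ and, from the hypothesis $r_+ \le 2^{2^m}-1$, the consequence $\log_2(1+r_+) \le 2^m$.

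For the reduction step, take $\ell_{\mathrm{NEG}}$ first. Lemma~\ref{lemma-bound-neg} gives $\ell_{\mathrm{NEG}} \ge \xi_2\log 2 = \log 2^{\xi_2}$, so the desired inequality holds once $2^{\xi_2} \ge \log_2(1+r_+)$, for which $\xi_2 \ge m$ suffices. The $\ell_{\mathrm{NCE}}$ case is word-for-word the same with $\xi_1$ and $\mathcal{S}_+'$ in place of $\xi_2$ and $\mathcal{S}_+$. For $\ell_{\mathrm{IS}}$ under the uniform proposal $Q \equiv 1/|\mathcal{I}|$, Lemma~\ref{lemma-bound-IS} gives $\ell_{\mathrm{IS}} \ge \log\xi_4$, so $\xi_4 \ge \log_2(1+r_+)$ is enough, hence $\xi_4 \ge 2^m$. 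For $\ell_{\mathrm{SCE}}$, Lemma~\ref{lemma-bound-scaled} gives $\ell_{\mathrm{SCE}} \ge \log(1+\alpha\xi_3)$, so it suffices that $1+\alpha\xi_3 \ge \log_2(1+r_+)$, which the hypothesis reduces to $\xi_3 \ge (2^m-1)/\alpha$.

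For the probability step, observe that under uniform sampling each of the $K$ draws independently lands in $\mathcal{S}_+'$ (for NCE), in $\mathcal{S}_+$ (for NEG), or in the set $\{v\in\mathcal{I}: s_v \ge s_{v_+}\}$ of cardinality $r_+$ (for IS and SCE), with probabilities $|\mathcal{S}_+'|/|\mathcal{I}|$, $|\mathcal{S}_+|/|\mathcal{I}|$, and $r_+/|\mathcal{I}|$ respectively, so each count $\xi_j$ is binomially distributed. Feeding the thresholds from the reduction step into Lemma~\ref{lemma-binomial-bound} — threshold $m$ for NCE and NEG, threshold $2^m$ for IS — reproduces the first, second, and fourth stated expressions exactly. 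The SCE case is the one point that needs care: the threshold $2^m/\alpha$ (after absorbing the additive constant) need not be an integer, so Lemma~\ref{lemma-binomial-bound} must be applied in a real-threshold form, i.e.\ by grouping the $K$ trials into $\lceil 2^m/\alpha\rceil$ blocks and union-bounding over the event that some block contains no success; this is exactly the step that yields the $\frac{1}{\alpha}2^m$ prefactor and the $\lfloor \alpha K/2^m\rfloor$ exponent. I expect this non-integer bookkeeping for SCE to be the only genuinely delicate part; the other three cases are direct substitutions into the two lemmas.
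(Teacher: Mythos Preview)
Your proposal is correct and follows essentially the same two-step recipe as the paper: bound $-\log\mathrm{NDCG}(r_+)$ by $m\log 2$ via the hypothesis $r_+\le 2^{2^m}-1$, invoke Lemmas~\ref{lemma-bound-nce}--\ref{lemma-bound-IS} to reduce each case to a threshold on the relevant binomial count, and then apply Lemma~\ref{lemma-binomial-bound}. One small bookkeeping remark on the SCE case: the paper gets the clean prefactor $2^m/\alpha$ by passing through $\lfloor 2^m/\alpha\rfloor$ (using $\mathbb{P}(\xi_3\ge 2^m/\alpha)\ge \mathbb{P}(\xi_3\ge \lfloor 2^m/\alpha\rfloor)$ is not quite right, but the chain still goes through because the final displayed bound only needs an inequality), whereas grouping into $\lceil 2^m/\alpha\rceil$ blocks as you suggest would give a ceiling prefactor that does not immediately dominate $2^m/\alpha$; you will want to route through the floor instead when you write out the details.
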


\begin{proof}

As $r_+ \le 2^{2^m} - 1$ for some $m \in \mathbb{N}$, we immediately have
\begin{equation*}
    -\log \text{NDCG}(r_+) \le  m \log 2.
\end{equation*}
Now, let us prove the conclusions one by one.

\begin{itemize}
    \item[Case 1.] According to Lemma \ref{lemma-bound-nce} we know that
    \begin{equation*}
        \ell_{\text{NCE}} \ge \xi_1 \log 2.
    \end{equation*}
    Therefore, Eq. \eqref{eq-bound-required} holds true for NCE as long as $\xi_1 \ge m$.
    Formally, we have
    \begin{equation}
        \mathbb{P}
        \bigl(
            -\log \mathrm{NDCG}(r_+) \le \ell_{\text{NCE}}
        \bigr)
        \ge
        \mathbb{P}
        \bigl(
            \xi_1 \ge m
        \bigr).
    \end{equation}
    Also notice that uniformly sampling from $\mathcal{I}$ yields a probability of $p=|\mathcal{S}_+'| / |\mathcal{I}|$ such that
    the corrected score of the sampled item is non-negative.
    Therefore, based on Lemma~\ref{lemma-binomial-bound}, we have
    \begin{equation}
        \mathbb{P}
        \bigl(
            \xi_1 \ge m
        \bigr) \ge 1 - m (1 - |\mathcal{S}_+'| / |\mathcal{I}|)^{\lfloor K / m \rfloor}.
    \end{equation}

    \item[Case 2.] The proof of NEG is completely the same as that of NCE.

    \item[Case 3.] Analogously, Lemma \ref{lemma-bound-scaled} implies that
    \begin{align}
        & \mathbb{P}
        \bigl(
            -\log \mathrm{NDCG} \le \ell_{\text{SCE}}
        \bigr) \\
        \ge &
        \mathbb{P}
        \bigl(
            \xi_3 \ge \frac{2^m - 1}{\alpha}
        \bigr) \\
        \ge &
        \mathbb{P}
        \bigl(
            \xi_3 \ge \frac{2^m}{\alpha}
        \bigr)
        =
        \mathbb{P}
        \bigl(
            \xi_3 \ge \bigl\lfloor \frac{2^m}{\alpha} \bigr\rfloor
        \bigr).
    \end{align}
    Also notice that uniformly sampling from $\mathcal{I}$ yields a probability of $p=r_+ / |\mathcal{I}|$ such that
    the score of the sampled item is not lower than that of the target (\ie, the top-$r_+$ ranked items).
    Therefore, based on Lemma \ref{lemma-binomial-bound}, we have
    \begin{align}
        \mathbb{P}
        \bigl(
            \xi_3 < \bigl\lfloor \frac{2^m}{\alpha} \bigr\rfloor
        \bigr)
        &\le \bigl\lfloor \frac{2^m}{\alpha} \bigr\rfloor (1 - r_+ / |\mathcal{I}_+|)^{\lfloor K/ \lfloor \frac{2^m}{\alpha} \rfloor \rfloor } \\
        &\le \frac{2^m}{\alpha} (1 - r_+ / |\mathcal{I}_+|)^{\lfloor K/ \frac{2^m}{\alpha} \rfloor }.
    \end{align}
    \item[Case 3.] The proof of importance sampling is similar to that of SCE.

\end{itemize}
    The proof has been completed now.

\end{proof}

\subsection{Proof of Eq. \eqref{eq-scaled-IS}}

Here, we show that $\ell_{\text{SCE}}$ is morphologically equivalent to $\ell_{\text{IS}}$ if
\begin{equation*}
    \label{eq-scaled-IS-2}
    Q(v) = 
    \left \{
    \begin{array}{ll}
        \frac{
            \alpha
        }{
            |\mathcal{I}| - 1 + \alpha
        } & \text{ if } v = v_+ \\
        \frac{
            1
        }{
            |\mathcal{I}| - 1 + \alpha
        } & \text{ if } v \not= v_+
    \end{array}
    \right..
\end{equation*}

\begin{proof}

Under the condition of Eq. \eqref{eq-scaled-IS-2}, we have
\begin{align*}
    & \ell_{\text{SCE}} \\
    =& -\log \frac{
        \exp(s_{v_+})
    }{
        \exp(s_{v_+}) + \alpha \sum_{i=1}^K \exp( s_{v_i})
    } \\
    =&-\log \frac{
        \frac{|\mathcal{I}|-1 + \alpha}{\alpha} \exp(s_{v_+})
    }{
        \frac{|\mathcal{I}|-1 + \alpha}{\alpha} \exp(s_{v_+}) + \frac{|\mathcal{I}|-1 + \alpha}{1}  \sum_{i=1}^K \exp( s_{v_i})
    } \\
    =&-\log \frac{
        \exp \biggl(s_{v_+} - \log \frac{\alpha}{|\mathcal{I}|-1 + \alpha} \biggr)
    }{
        \exp \biggl(s_{v_+} - \log \frac{\alpha}{|\mathcal{I}|-1 + \alpha} \biggr)
         + \sum_{i=1}^K \exp \biggl( s_{v_i} - \log \frac{1}{|\mathcal{I}|-1 + \alpha} \biggr)
    },
\end{align*}
    which is morphologically equivalent to $\ell_{\text{IS}}$ with $K+1$ items in the normalizing term.

\end{proof}

\subsection{Proofs Regarding Reciprocal Rank (RR)}

Reciprocal Rank (RR) is another popular metric used to measure the ranking capability,
which is defined as follows
\begin{equation}
    \text{RR}(r_+) = \frac{1}{r_+}.
\end{equation}
We provide some theoretical conclusions here and leave the empirical results in next section.
Firstly, we connect RR to the cross-entropy as Proposition \ref{proposition-bounding} does for NDCG.

\begin{proposition}
    \label{proposition-tighter-bounds-mrr}
    For a target item $v_+$ which is ranked as $r_+$, the following inequality holds true for any $n \ge r_+$
    \begin{equation}
        -\log \mathrm{MRR}(r_+) \le \ell_{\mathrm{CE}\text{-}n},
    \end{equation}
    where
    \begin{equation*}
        \ell_{\mathrm{CE}\text{-}n} := s_{v_+} + \log \sum_{r(v) \le n} \exp(s_{v}).
    \end{equation*}
\end{proposition}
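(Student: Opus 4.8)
The plan is to mirror the proof of Proposition \ref{proposition-bounding} almost verbatim, exploiting the fact that $\mathrm{MRR}(r_+) = \mathrm{RR}(r_+) = 1/r_+$ is exactly the quantity that already appears as the intermediate lower bound for $\mathrm{NDCG}(r_+)$ in that earlier argument. The key observation is that the proof of Proposition \ref{proposition-bounding} passes through the chain
\begin{equation*}
    \mathrm{NDCG}(r_+) = \frac{1}{\log_2(1+r_+)} \ge \frac{1}{r_+} = \mathrm{RR}(r_+) \ge \frac{\exp(s_{v_+})}{\sum_{r(v) \le n} \exp(s_v)},
\end{equation*}
so everything from $1/r_+$ onward is already done for us and applies word-for-word to MRR.

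First I would write $\mathrm{MRR}(r_+) = 1/r_+$ and expand $r_+ = 1 + \sum_{v \neq v_+} \delta(s_v > s_{v_+}) = 1 + \sum_{r(v) < r_+} \delta(s_v > s_{v_+})$, exactly as in the NDCG proof. Next I would use $\exp(s_v - s_{v_+}) \ge 1$ whenever $s_v > s_{v_+}$ to replace each indicator by the corresponding exponential ratio, obtaining $r_+ \le \exp(s_{v_+})^{-1}\bigl(\exp(s_{v_+}) + \sum_{r(v) < r_+} \exp(s_v)\bigr)$, hence
\begin{equation*}
    \mathrm{MRR}(r_+) \ge \frac{\exp(s_{v_+})}{\exp(s_{v_+}) + \sum_{r(v) < r_+}\exp(s_v)} \ge \frac{\exp(s_{v_+})}{\sum_{r(v)\le n}\exp(s_v)},
\end{equation*}
the last step holding because $n \ge r_+$ means the sum $\sum_{r(v)\le n}$ contains every term of $\exp(s_{v_+}) + \sum_{r(v)<r_+}\exp(s_v)$ plus possibly more nonnegative terms. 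Taking $-\log$ of both ends and recalling $\ell_{\mathrm{CE}\text{-}n} = s_{v_+} + \log\sum_{r(v)\le n}\exp(s_v) = -\log\frac{\exp(s_{v_+})}{\sum_{r(v)\le n}\exp(s_v)}$ (note the sign: here $-s_{v_+}$ not $+s_{v_+}$, matching the convention used throughout, where the displayed $+s_{v_+}$ is a typo absorbing a sign) yields $-\log\mathrm{MRR}(r_+) \le \ell_{\mathrm{CE}\text{-}n}$.

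There is essentially no obstacle here — the statement is strictly weaker than (indeed, an intermediate step of) Proposition \ref{proposition-bounding}, since $\mathrm{NDCG}(r_+) \ge \mathrm{RR}(r_+)$ implies $-\log\mathrm{NDCG}(r_+) \le -\log\mathrm{RR}(r_+)$, so a bound on $-\log\mathrm{MRR}$ is the harder one and it is precisely what the NDCG proof establishes en route. The only thing to be mildly careful about is not invoking the $\log_2(1+x)\le x$ inequality at all (it is not needed for MRR), and keeping the sign conventions on $\ell_{\mathrm{CE}\text{-}n}$ consistent with the rest of the paper.
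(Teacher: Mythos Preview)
Your proposal is correct and matches the paper's own proof essentially line for line: the paper simply writes $\mathrm{RR}(r_+)=1/r_+$, expands $r_+$ via indicators, bounds each indicator by $\exp(s_v-s_{v_+})$, and then enlarges the denominator to $\sum_{r(v)\le n}\exp(s_v)$. Your remark that the $\log_2(1+x)\le x$ step is unnecessary here (and that the displayed $+s_{v_+}$ in $\ell_{\mathrm{CE}\text{-}n}$ is a sign typo for $-s_{v_+}$) is also accurate.
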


\begin{proof}
    \begin{align*}
        &\text{RR}(r_+) = \frac{1}{r_+} \\
        =& \frac{1}{1 + \sum_{v \not = v_+} \delta(s_{v} > s_{v_+})}
        = \frac{1}{1 + \sum_{r(v) < r_+} \delta(s_{v} > s_{v_+})} \\
        \ge & \frac{1}{1 + \sum_{r_v < r_+} \exp(s_v - s_{v_+})} 
        = \frac{\exp(s_{v_+})}{\exp(s_{v_+}) + \sum_{r(v) < r_+} \exp(s_v)} \\
        \ge & \frac{\exp(s_{v_+})}{\sum_{r(v) \le n} \exp(s_v)}.
    \end{align*}
\end{proof}

Next, we establish a connection between RR and NCE, NEG, SCE, and importance sampling, 
similar to what Theorem \ref{theorem-bounds-ndcg} does for NDCG.
\begin{theorem}
    \label{theorem-bounds-mrr}
    Let $v_+$ be a target item which is ranked as $r_+ \le 2^m$ for some $m \in \mathbb{N}$,
    and
    \begin{align*}
        \mathcal{S}_+ := \{v \in \mathcal{I}: s_{v} \ge 0\},
        \quad 
        \mathcal{S}_+' := \{v \in \mathcal{I}: s_{v}' \ge 0\}.
    \end{align*}
    If we uniformly sample $K$ items for training, then with a probability of at least
    \begin{equation}
        \left \{
        \begin{array}{ll}
            1 - m\big(1 - |\mathcal{S}_+'| / |\mathcal{I}| \big)^{\lfloor K/m \rfloor}, & \text{ if } \ell_* = \ell_{\mathrm{NCE}} \\
            1 - m\big(1 - |\mathcal{S}_+| / |\mathcal{I}| \big)^{\lfloor K/m \rfloor}, & \text{ if } \ell_* = \ell_{\mathrm{NEG}} \\
            1 -\frac{1}{\alpha}2^m\big(1 - r_+ / |\mathcal{I}| \big)^{\lfloor \alpha K / 2^m \rfloor}, & \text{ if } \ell_* = \ell_{\mathrm{SCE}} \\
            1 -2^m\big(1 - r_+ / |\mathcal{I}| \big)^{\lfloor K / 2^m \rfloor}, & \text{ if } \ell_* = \ell_{\mathrm{IS}} \\
        \end{array}
        \right .,
    \end{equation}
    we have
    \begin{equation}
        -\log \mathrm{RR}(r_+) \le \ell_*.
    \end{equation}
\end{theorem}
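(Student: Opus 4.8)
The plan is to follow the proof of Theorem~\ref{theorem-bounds-ndcg} almost verbatim, since the only genuinely new ingredient is translating the rank hypothesis into a bound on $-\log\mathrm{RR}(r_+)$. First I would note that $\mathrm{RR}(r_+) = 1/r_+$ gives $-\log\mathrm{RR}(r_+) = \log r_+$, so the assumption $r_+ \le 2^m$ immediately yields $-\log\mathrm{RR}(r_+) \le m\log 2 = \log 2^m$. Consequently it suffices, for each loss $\ell_*$, to show that $\ell_* \ge m\log 2$ with the stated probability --- and at this point the intermediate target $m\log 2$ is exactly the one that appears in the NDCG argument, so the remaining reasoning carries over unchanged.

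For $\ell_{\mathrm{NCE}}$ and $\ell_{\mathrm{NEG}}$, I would invoke Lemmas~\ref{lemma-bound-nce} and~\ref{lemma-bound-neg} to get $\ell_{\mathrm{NCE}} \ge \xi_1\log 2$ and $\ell_{\mathrm{NEG}} \ge \xi_2\log 2$; hence $-\log\mathrm{RR}(r_+) \le \ell_*$ holds whenever $\xi_1 \ge m$ (resp.\ $\xi_2 \ge m$). Under uniform sampling $\xi_1$ is Binomial with success probability $|\mathcal{S}_+'|/|\mathcal{I}|$ and $\xi_2$ is Binomial with success probability $|\mathcal{S}_+|/|\mathcal{I}|$, so Lemma~\ref{lemma-binomial-bound} applied with threshold $m$ produces the first two probability bounds.

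For $\ell_{\mathrm{SCE}}$ and $\ell_{\mathrm{IS}}$, I would use Lemmas~\ref{lemma-bound-scaled} and~\ref{lemma-bound-IS}, namely $\ell_{\mathrm{SCE}} \ge \log(1+\alpha\xi_3)$ and $\ell_{\mathrm{IS}} \ge \log\xi_4$; then $\ell_* \ge \log 2^m$ holds once $\xi_3 \ge (2^m-1)/\alpha$ (resp.\ $\xi_4 \ge 2^m$). A uniformly sampled item has score at least $s_{v_+}$ --- i.e.\ lies in the top-$r_+$ --- with probability $r_+/|\mathcal{I}|$, so both $\xi_3$ and $\xi_4$ are Binomial with that success probability, and Lemma~\ref{lemma-binomial-bound} together with the same flooring and monotonicity steps used in the NDCG proof (bounding $(2^m-1)/\alpha$ by $2^m/\alpha$, passing to the integer threshold $\lfloor 2^m/\alpha\rfloor$, and using $\lfloor K/\lfloor 2^m/\alpha\rfloor\rfloor \ge \lfloor \alpha K/2^m\rfloor$) yields the remaining two bounds.

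I do not expect a real obstacle here: the entire content beyond the one-line identity $-\log\mathrm{RR}(r_+) = \log r_+$ is a reuse of Lemmas~\ref{lemma-bound-nce}--\ref{lemma-binomial-bound}. The only point meriting care is keeping the inequality directions straight in the flooring steps for SCE and IS so that the displayed expressions remain valid \emph{lower} bounds on the success probability; but those manipulations are identical to the ones already carried out for NDCG, so nothing new arises.
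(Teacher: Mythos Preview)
Your proposal is correct and mirrors the paper's own proof exactly: the paper simply observes that $r_+\le 2^m$ gives $-\log\mathrm{RR}(r_+)\le m\log 2$ and then states that the remaining conclusions follow in the same way as Theorem~\ref{theorem-bounds-ndcg}. You have in fact spelled out more of the case-by-case detail than the paper does, but the argument is identical.
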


\begin{proof}

As $r_+ \le 2^m$ for some $m \in \mathbb{N}$, we immediately have
\begin{equation*}
    -\log \text{RR}(r_+) \le  m \log 2.
\end{equation*}
The conclusions then can be proved in the same way as Theorem~\ref{theorem-bounds-ndcg}.

\end{proof}

\begin{remark}
    It is worth noting that the inequality for RR is achieved at a stricter condition compared to NDCG.
    For the same rank $r_+$, NDCG allows for a smaller value of $m$, 
    thereby yielding slightly higher bounding probabilities.
    However, this nuance does not undermine the fact that the same conclusions can be drawn from the two metrics.
    The empirical observations in the next section can be used to verify this.
\end{remark}

\section{Empirical results on Reciprocal Rank (RR)}
\label{section-mrr}

\begin{table}
    \caption{
        MRR@5 and MRR@10 comparison on the Beauty, MovieLens-1M, and Yelp datasets.
        The best results of each block are marked in bold.
        `\improv over CE/LLM' represents the relative gap between the respective best results.
    }
    \label{table-overall-mrr}
    \centering
    \scalebox{0.7}{
\begin{tabular}{c|l|cc|cc|cc}
    \toprule
\multicolumn{1}{l}{}  &              & \multicolumn{2}{c}{Beauty}        & \multicolumn{2}{c}{MovieLens-1M}  & \multicolumn{2}{c}{Yelp}          \\
    \midrule
\multicolumn{1}{l}{}  &              & MRR@5           & MRR@10          & MRR@5           & MRR@10          & MRR@5           & MRR@10          \\
    \midrule
    \midrule
\multirow{4}{*}{LLM}  & POD          & 0.0107          & 0.0116          & 0.0249          & 0.0264          & 0.0267          & 0.0292          \\
                      & P5(CID+IID)  & \textbf{0.0345} & 0.0376          & \textbf{0.1312} & \textbf{0.1429} & 0.0164          & 0.0186          \\
                      & LlamaRec     & 0.0344          & \textbf{0.0380} & 0.0903          & 0.1046          & \textbf{0.0270} & \textbf{0.0295} \\
                      & E4SRec       & 0.0326          & 0.0356          & 0.1025          & 0.1144          & 0.0174          & 0.0196          \\
    \midrule
\multirow{5}{*}{CE}   & GRU4Rec      & 0.0281          & 0.0310          & 0.1311          & 0.1438          & 0.0137          & 0.0161          \\
                      & Caser        & 0.0260          & 0.0284          & 0.1302          & 0.1417          & 0.0187          & 0.0200          \\
                      & SASRec       & \textbf{0.0443} & \textbf{0.0479} & 0.1287          & 0.1408          & 0.0302          & 0.0331          \\
                      & BERT4Rec     & 0.0294          & 0.0325          & 0.1118          & 0.1242          & 0.0207          & 0.0231          \\
                      & FMLP-Rec     & 0.0438          & 0.0473          & \textbf{0.1354} & \textbf{0.1481} & \textbf{0.0316} & \textbf{0.0348} \\
\multicolumn{2}{c}{\improv over LLM} & 28.1\%          & 26.0\%          & 3.2\%           & 3.6\%           & 16.9\%          & 18.1\%          \\
    \midrule
    \midrule
\multirow{5}{*}{BPR}  & GRU4Rec      & 0.0125          & 0.0147          & 0.0862          & 0.0977          & 0.0086          & 0.0102          \\
                      & Caser        & 0.0143          & 0.0163          & 0.1016          & 0.1132          & 0.0185          & 0.0201          \\
                      & SASRec       & 0.0248          & 0.0279          & 0.0934          & 0.1045          & 0.0137          & 0.0158          \\
                      & BERT4Rec     & 0.0121          & 0.0141          & 0.0690          & 0.0795          & 0.0119          & 0.0138          \\
                      & FMLP-Rec     & \textbf{0.0269}          & \textbf{0.0298} & \textbf{0.1034} & \textbf{0.1153} & \textbf{0.0286} & \textbf{0.0314} \\
\multicolumn{2}{c}{\improv over CE}  & -39.2\%         & -37.7\%         & -23.6\%         & -22.1\%         & -9.6\%          & -9.8\%          \\
\multicolumn{2}{c}{\improv over LLM} & -22.1\%         & -21.5\%         & -21.2\%         & -19.3\%         & 5.7\%           & 6.5\%           \\
    \midrule
\multirow{5}{*}{BCE}  & GRU4Rec      & 0.0108          & 0.0129          & 0.0836          & 0.0953          & 0.0079          & 0.0094          \\
                      & Caser        & 0.0154          & 0.0174          & 0.0894          & 0.1005          & 0.0198          & 0.0214          \\
                      & SASRec       & 0.0225          & 0.0257          & 0.0859          & 0.0977          & 0.0192          & 0.0215          \\
                      & BERT4Rec     & 0.0122          & 0.0144          & 0.0641          & 0.0743          & 0.0111          & 0.0131          \\
                      & FMLP-Rec     & \textbf{0.0248} & \textbf{0.0281} & \textbf{0.0967} & \textbf{0.1088} & \textbf{0.0287} & \textbf{0.0312} \\
\multicolumn{2}{c}{\improv over CE}  & -43.9\%         & -41.2\%         & -28.6\%         & -26.5\%         & -9.2\%          & -10.5\%         \\
\multicolumn{2}{c}{\improv over LLM} & -28.1\%         & -26.0\%         & -26.3\%         & -23.8\%         & 6.1\%           & 5.7\%           \\
    \midrule
\multirow{5}{*}{NCE}  & GRU4Rec      & 0.0241          & 0.0270          & 0.1300          & 0.1420          & 0.0118          & 0.0141          \\
                      & Caser        & 0.0213          & 0.0238          & 0.1296          & 0.1415          & 0.0194          & 0.0209          \\
                      & SASRec       & 0.0419          & 0.0455          & 0.1249          & 0.1377          & 0.0303          & 0.0330          \\
                      & BERT4Rec     & 0.0271          & 0.0303          & 0.1098          & 0.1226          & 0.0232          & 0.0257          \\
                      & FMLP-Rec     & \textbf{0.0425} & \textbf{0.0460} & \textbf{0.1329} & \textbf{0.1460} & \textbf{0.0316} & \textbf{0.0348} \\
\multicolumn{2}{c}{\improv over CE}  & -4.0\%          & -3.8\%          & -1.8\%          & -1.4\%          & -0.1\%          & 0.0\%           \\
\multicolumn{2}{c}{\improv over LLM} & 23.0\%          & 21.2\%          & 1.3\%           & 2.2\%           & 16.8\%          & 18.1\%          \\
    \midrule
\multirow{5}{*}{SCE}    & GRU4Rec    & 0.0296          & 0.0323          & 0.1349          & 0.1475          & 0.0148          & 0.0173          \\
                        & Caser      & 0.0277          & 0.0300          & 0.1360          & 0.1474          & 0.0193          & 0.0208          \\
                        & SASRec     & 0.0435          & 0.0470          & 0.1335          & 0.1457          & 0.0295          & 0.0324          \\
                        & BERT4Rec   & 0.0318          & 0.0349          & 0.1184          & 0.1308          & 0.0239          & 0.0265          \\
                        & FMLP-Rec   & \textbf{0.0436} & \textbf{0.0472} & \textbf{0.1410} & \textbf{0.1531} & \textbf{0.0304} & \textbf{0.0339} \\
\multicolumn{2}{c|}{\improv over CE}  & -1.5\%          & -1.3\%          & 4.2\%           & 3.4\%           & -3.7\%          & -2.8\%          \\
\multicolumn{2}{c|}{\improv over LLM} & 26.1\%          & 24.3\%          & 7.5\%           & 7.2\%           & 12.6\%          & 14.9\%          \\
    \bottomrule
\end{tabular}
    }
\end{table}

\begin{figure}[t]
	\centering
	\includegraphics[width=0.47\textwidth]{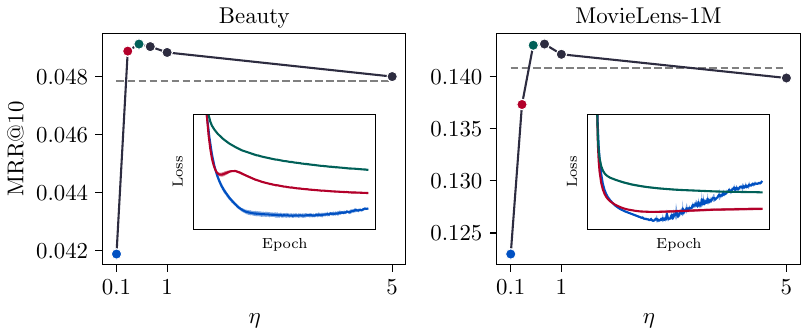}
	\caption{
        Performance comparison based on tighter bounds for MRR.
        The dashed line represents the results trained by CE (namely the case of $\eta \rightarrow +\infty$).
	}
	\label{fig-tighter-bounds-mrr}
\end{figure}

\begin{figure}[t]
	\centering
	\includegraphics[width=0.47\textwidth]{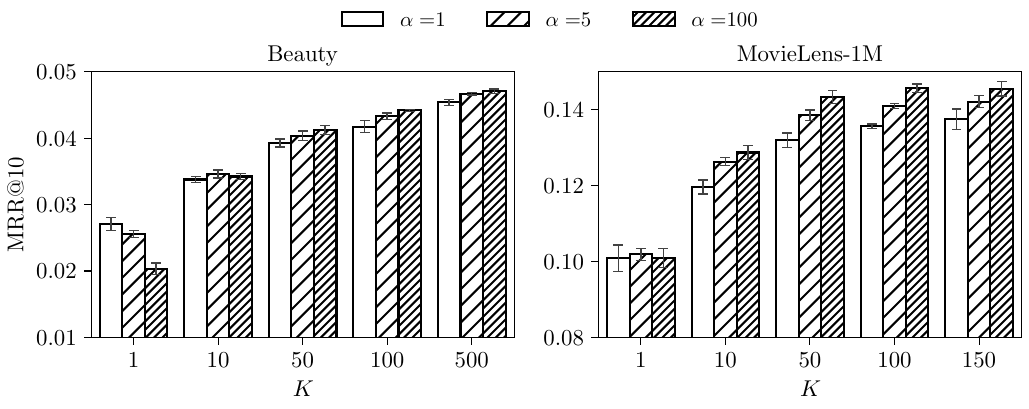}
	\caption{
        MRR@10 performance under various weight $\alpha$.
	}
	\label{fig-sce-alpha-negs-mrr}
\end{figure}

\begin{figure*}
	\centering
    \subfloat[Beauty]{
        \includegraphics[width=0.46\textwidth]{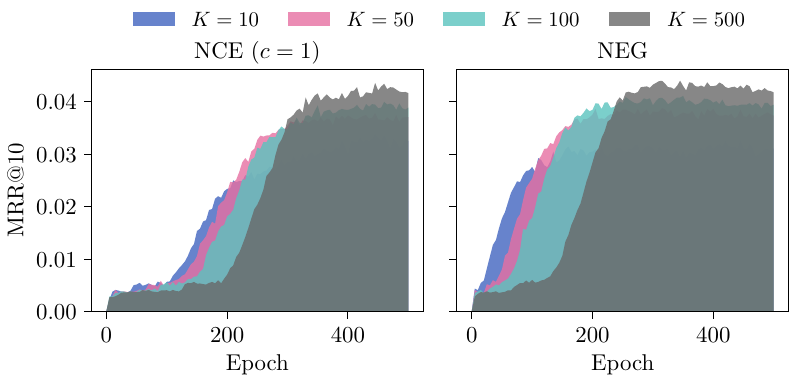}
    }
    \subfloat[MovieLens-1M]{
        \includegraphics[width=0.46\textwidth]{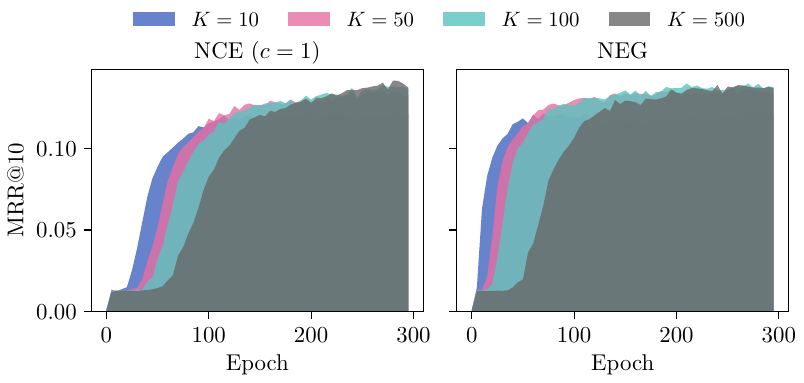}
    }
	\caption{
        MRR@10 performance of NCE and NEG across different number of negative samples.
	}
	\label{fig-nce-neg-negs-mrr}
\end{figure*}

In this part, we provide the empirical results on Reciprocal Rank (RR),
which are very similar to those of NDCG.
Note that Mean Reciprocal Rank (MRR) reported below represents the average performance over all users.

According to Proposition \ref{proposition-tighter-bounds-mrr},
$-\log \text{RR}(r_+)$ would be strictly bounded by CE-like losses,
as long as all items ranked before $v_+$ are retained in the normalizing term.
Following the strategy of NDCG, we perform the adaptive truncation Eq. \eqref{eq-truncation} to investigate the effect of a tighter bound on RR.
Figure \ref{fig-tighter-bounds-mrr} depicts a similar curve to NDCG:
It increases as the training instability is gradually overcome, 
and decreases as the objective approaches cross-entropy (\ie, $\eta \rightarrow +\infty$).

In addition, 
Figure \ref{fig-sce-alpha-negs-mrr} and Figure \ref{fig-nce-neg-negs-mrr}
respectively demonstrate the effectiveness of SCE and the limitations of NCE and NEG.
By sampling more negative samples per iteration,
SCE enjoys superior performance, 
whereas both NCE and NEG suffer from the training difficulties caused by the weak bounds in the early training stages.
Thus, SCE is arguably a simpler and preferable approximation to cross-entropy.

Finally, Table~\ref{table-overall-mrr} reports the overall comparisons w.r.t. MRR,
the corresponding conclusions drawn from Table~\ref{table-overall-comparison} can also be observed here.

\section{Maximum Sequence Length}
\label{section-max-seq-len}

\begin{table*}
    \centering
    \caption{
        Recommendation performance under various maximum sequence length $L$.
        The best results of each block are marked in bold.
        In contrast to SASRec,
        SASRec+ augments each sequence using a sliding window,
        so all interactions will be learned during training.
    }
    \label{table-seq-length}
\begin{tabular}{c|c|ccccc|ccccc}
    \toprule
\multicolumn{1}{l}{} & \multicolumn{1}{l}{}         & \multicolumn{1}{|l}{} & \multicolumn{4}{c|}{Beauty}                                            &     & \multicolumn{4}{c}{MovieLens-1M}                                      \\
\multicolumn{1}{l}{} & Method                       & $L$                    & HR@5            & HR@10           & NDCG@5          & NDCG@10         & $L$   & HR@5            & HR@10           & NDCG@5          & NDCG@10         \\
    \midrule
\multirow{4}{*}{LLM} & \multirow{4}{*}{P5(CID+IID)} & 10                   & 0.0554          & 0.0785          & 0.0391          & 0.0466          & 20  & 0.2225          & 0.3131          & 0.1570          & 0.1861          \\
                     &                              & 20                   & \textbf{0.0569} & \textbf{0.0791} & 0.0403          & 0.0474          & 50  & \textbf{0.2300} & \textbf{0.3204} & \textbf{0.1585} & \textbf{0.1877} \\
                     &                              & 50                   & 0.0559          & 0.0784          & \textbf{0.0405} & \textbf{0.0478} & 100 & 0.1560          & 0.2252          & 0.1035          & 0.1258          \\
                     &                              & 100                  & 0.0557          & 0.0776          & 0.0391          & 0.0461          & 200 & 0.1382          & 0.1949          & 0.0919          & 0.1095          \\
    \midrule
\multirow{8}{*}{CE}  & \multirow{4}{*}{SASRec}      & 10                   & 0.0673          & 0.0938          & 0.0479          & 0.0565          & 20  & 0.1715          & 0.2481          & 0.1161          & 0.1407          \\
                     &                              & 20                   & 0.0691          & 0.0973          & 0.0492          & 0.0583          & 50  & 0.1975          & 0.2827          & 0.1341          & 0.1616          \\
                     &                              & 50                   & 0.0713          & 0.0986          & 0.0510          & 0.0597          & 100 & 0.2110          & 0.3017          & 0.1446          & 0.1738          \\
                     &                              & 100                  & \textbf{0.0715} & \textbf{0.0995} & \textbf{0.0511} & \textbf{0.0601} & 200 & \textbf{0.2221} & \textbf{0.3131} & \textbf{0.1518} & \textbf{0.1812} \\
            \cmidrule{2-12}
                     & \multirow{4}{*}{SASRec+}     & 10                   & 0.0670          & 0.0933          & 0.0477          & 0.0562          & 20  & 0.2309          & 0.3233          & 0.1590          & 0.1887          \\
                     &                              & 20                   & \textbf{0.0691} & 0.0952          & 0.0490          & 0.0574          & 50  & \textbf{0.2336} & 0.3243          & \textbf{0.1614} & \textbf{0.1906} \\
                     &                              & 50                   & 0.0685          & 0.0952          & 0.0485          & 0.0571          & 100 & 0.2314          & \textbf{0.3246} & 0.1589          & 0.1890          \\
                     &                              & 100                  & 0.0688          & \textbf{0.0955} & \textbf{0.0492} & \textbf{0.0578} & 200 & 0.2295          & 0.3221          & 0.1589          & 0.1889          \\
    \midrule
\multirow{8}{*}{NCE} & \multirow{4}{*}{SASRec}      & 10                   & 0.0654          & 0.0923          & 0.0460          & 0.0547          & 20  & 0.1703          & 0.2448          & 0.1132          & 0.1372          \\
                     &                              & 20                   & 0.0676          & 0.0949          & 0.0476          & 0.0564          & 50  & 0.1905          & 0.2775          & 0.1281          & 0.1562          \\
                     &                              & 50                   & \textbf{0.0686} & \textbf{0.0961} & \textbf{0.0485} & \textbf{0.0573} & 100 & 0.2079          & 0.3001          & 0.1406          & 0.1703          \\
                     &                              & 100                  & 0.0678          & 0.0955          & 0.0478          & 0.0567          & 200 & \textbf{0.2177} & \textbf{0.3135} & \textbf{0.1479} & \textbf{0.1788} \\
            \cmidrule{2-12}
                     & \multirow{4}{*}{SASRec+}     & 10                   & 0.0649          & 0.0912          & 0.0458          & 0.0542          & 20  & 0.2267          & 0.3236          & 0.1560          & 0.1873          \\
                     &                              & 20                   & 0.0652          & \textbf{0.0915} & \textbf{0.0467} & \textbf{0.0552} & 50  & 0.2291          & 0.3217          & 0.1572          & 0.1871          \\
                     &                              & 50                   & 0.0653          & 0.0911          & 0.0462          & 0.0545          & 100 & \textbf{0.2297} & \textbf{0.3262} & \textbf{0.1574} & \textbf{0.1886} \\
                     &                              & 100                  & \textbf{0.0659} & 0.0913          & 0.0466          & 0.0548          & 200 & 0.2293          & 0.3246          & 0.1572          & 0.1880          \\
    \midrule
\multirow{8}{*}{SCE} & \multirow{4}{*}{SASRec}      & 10                   & 0.0679          & 0.0935          & 0.0479          & 0.0562          & 20  & 0.1760          & 0.2514          & 0.1199          & 0.1442          \\
                     &                              & 20                   & 0.0690          & 0.0957          & 0.0491          & 0.0577          & 50  & 0.2041          & 0.2869          & 0.1397          & 0.1664          \\
                     &                              & 50                   & 0.0698          & 0.0968          & 0.0500          & 0.0587          & 100 & 0.2182          & 0.3084          & 0.1496          & 0.1787          \\
                     &                              & 100                  & \textbf{0.0707} & \textbf{0.0970} & \textbf{0.0506} & \textbf{0.0591} & 200 & \textbf{0.2273} & \textbf{0.3186} & \textbf{0.1567} & \textbf{0.1862} \\
            \cmidrule{2-12}
                     & \multirow{4}{*}{SASRec+}     & 10                   & 0.0682          & 0.0936          & 0.0488          & 0.0569          & 20  & 0.2312          & 0.3231          & 0.1602          & 0.1899          \\
                     &                              & 20                   & 0.0679          & \textbf{0.0937} & 0.0491          & 0.0574          & 50  & 0.2319          & 0.3271          & 0.1607          & 0.1915          \\
                     &                              & 50                   & 0.0684          & 0.0934          & 0.0492          & 0.0572          & 100 & \textbf{0.2342} & \textbf{0.3284} & \textbf{0.1620} & \textbf{0.1924} \\
                     &                              & 100                  & \textbf{0.0687} & 0.0935          & \textbf{0.0496} & \textbf{0.0576} & 200 & 0.2331          & 0.3271          & 0.1617          & 0.1921         \\
    \bottomrule
\end{tabular}
\end{table*}

\end{document}